\documentclass[pra,aps,preprintnumbers,superscriptaddress,nofootinbib]{revtex4}


\usepackage{hyperref}
\usepackage{verbatim}
\usepackage{amsmath}
\usepackage{latexsym}
\usepackage{revsymb}
\usepackage{yfonts}
\usepackage{color}
\usepackage{ifthen}
\usepackage{multirow}

\usepackage{natbib}
\usepackage{amsfonts}
\usepackage{amsmath}
\usepackage{amssymb}
\usepackage{amsthm}
\usepackage{graphicx}
\usepackage{bm}
\usepackage{bbm}

\newcommand{\be}{\begin{eqnarray} \begin{aligned}}
\newcommand{\ee}{\end{aligned} \end{eqnarray} }
\newcommand{\benn}{\begin{eqnarray*} \begin{aligned}}
\newcommand{\eenn}{\end{aligned} \end{eqnarray*} }

\newcommand{\sym}{ { \rm sym } } 
\newcommand{\bc}{\begin{center}}
\newcommand{\ec}{\end{center}}

\newcommand{\id}{\mathbb{I}}

\newcommand{\tr}{\mathop{\mathrm{tr}}\nolimits}


\newtheorem{theorem}{Theorem}[section]

\newtheorem{lemma}[theorem]{Lemma}
\newtheorem{definition}[theorem]{Definition}

\newtheorem{corollary}[theorem]{Corollary}
\newtheorem{proposition}[theorem]{Proposition}

\newcommand{\hil}{\mathcal{H}}
\newcommand{\hin}{\mathcal{H}_{\rm in}}
\newcommand{\hout}{\mathcal{H}_{\rm out}}
\newcommand{\msg}{M}
\newcommand{\epsball}{\mathcal{B}^{\eps}}

\newcommand{\states}{\mathcal{S}}
\newcommand{\substates}{\mathcal{S}_{\leq}}

\newcommand{\iSet}{\mathcal{I}}



\usepackage{amsfonts}
\def\Real{\mathbb{R}}
\def\Complex{\mathbb{C}}

\def\id{\mathbb{I}}

\def\01{\{0,1\}}

\newcommand{\eps}{\varepsilon}
\newcommand{\ket}[1]{|#1\rangle}
\newcommand{\bra}[1]{\langle#1|}

\newcommand{\proj}[1]{|#1\rangle\langle#1|}

\newcommand{\inp}[2]{\langle{#1}|{#2}\rangle} 

\newcommand{\hmin}{{H}_{\min}}
\newcommand{\hmineps}{{ H}_{\rm min}^{\eps}}

\newcommand{\bop}{\mathcal{B}}
\newcommand{\meas}{{\mathcal{M}}}


\newcommand{\nc}{\newcommand}

\nc{\ox}{\otimes}
\nc{\dn}{\downarrow}
\nc{\cA}{{\cal A}}
\nc{\cB}{{\cal B}}
\nc{\cC}{{\cal C}}
\nc{\cD}{{\cal D}}
\nc{\cE}{{\mathcal E}}
\nc{\cF}{{\cal F}}
\nc{\cG}{{\cal G}}
\nc{\cH}{{\cal H}}
\nc{\cI}{{\cal I}}
\nc{\cJ}{{\cal J}}
\nc{\cK}{{\cal K}}
\nc{\cL}{{\cal L}}
\nc{\cQ}{{\cal Q}}
\nc{\cM}{{\cal M}}
\nc{\cN}{{\cal N}}
\nc{\cO}{{\cal O}}
\nc{\cP}{{\cal P}}
\nc{\cR}{{\cal R}}
\nc{\cS}{{\cal S}}
\nc{\cT}{{\cal T}}
\nc{\cU}{{\cal U}}
\nc{\cV}{{\cal V}}
\nc{\cX}{{\cal X}}
\nc{\cZ}{{\cal Z}}

\nc{\isom}{\simeq}

\nc{\1}{{\openone}}

\newcommand{\ex}[1]	{\mathbf{E}\left\{ #1 \right\}}
\newcommand{\exc}[2]	{\mathbf{E}_{#1}\left\{ #2 \right\}}
\newcommand{\prob}[1]	{\mathbf{Pr}\left\{ #1 \right\}}
\newcommand{\probc}[2]	{\mathbf{Pr}_{#1}\left\{ #2 \right\}}

\nc{\entHmin}{\hmin}

\nc{\CC}{\mathbb{C}}

\newcounter{protoCount}
\newcounter{protoList}
\newsavebox{\tmpbox}
\newlength{\protobox}
\newenvironment{protocol}[3]{
\bigskip
\addtocounter{protoCount}{1}
\noindent \begin{lrbox}{\tmpbox}
\setlength{\protobox}{\textwidth}
\addtolength{\protobox}{-0.4cm}
\begin{minipage}[c]{\protobox}
\begin{bfseries}Protocol #1: #2\end{bfseries}
\ifthenelse{\equal{#3}{\empty}}{}{\\ #3}
\begin{list}{\begin{bfseries}\arabic{protoList}:\end{bfseries}}
{\usecounter{protoList}}
}{
\end{list}
\end{minipage}\end{lrbox}
\fbox{\usebox{\tmpbox}}
\bigskip
}
\newcommand{\Ext}{ {\rm Ext} }


\bibliographystyle{plain}

\begin{document}
\title{Quantum to Classical Randomness Extractors}

\author{Mario \surname{Berta}}
\email[]{berta@phys.ethz.ch}
\affiliation{Institute for Theoretical Physics, ETH Zurich, 8093 Zurich, Switzerland.}
\author{Omar \surname{Fawzi}}
\email[ ] {ofawzi@cs.mcgill.ca}
\affiliation{School of Computer Science, McGill University, Montr\'eal, Qu\'ebec, Canada}
\author{Stephanie \surname{Wehner}}
\email[]{wehner@nus.edu.sg}
\affiliation{Centre for Quantum Technologies, National University of Singapore, 2 Science Drive 3, 117543 Singapore}

\date{\today}

\begin{abstract}
The goal of randomness extraction is to distill (almost) perfect randomness from a weak source of randomness. When the source yields a classical string $X$, many extractor constructions are known. Yet, when considering a physical randomness source, $X$ is itself ultimately the result of a measurement on an underlying quantum system. When characterizing the power of a source to supply randomness it is hence a natural question to ask, how much \emph{classical} randomness we can extract from a \emph{quantum} system. To tackle this question we here take on the study of \emph{quantum-to-classical randomness extractors} (QC-extractors).
	
	\begin{itemize}
		\item We provide constructions of QC-extractors based on measurements in a full set of mutually unbiased bases (MUBs), and certain single qubit measurements. The latter are particularly appealing since they are not only easy to implement, but appear throughout quantum cryptography. We proceed to prove an upper bound on the maximum amount of randomness that we could hope to extract from any quantum state. Some of our QC-extractors almost match this bound. We show two applications of our results.
		\item First, we show that any QC-extractor gives rise to entropic uncertainty relations with respect to quantum side information. Such relations were previously only known for two measurements. In particular, we obtain strong relations in terms of the von Neumann (Shannon) entropy as well as the min-entropy for measurements in (almost) unitary $2$-designs, a full set of MUBs, and single qubit measurements in three MUBs each.  
		\item Second, we finally resolve the central open question in the noisy-storage model [Wehner et al., PRL 100, 220502 (2008)] by linking security to the quantum capacity of the adversary's storage device. 	More precisely, we show that any two party cryptographic primitive can be implemented securely as long as the adversary's storage device has sufficiently low quantum capacity. Our protocol does not need any quantum storage to implement, and is technologically feasible using present-day technology.
	\end{itemize}
	
\vskip 1cm
\textbf{Keywords: } randomness extractors, randomness expansion, entropic uncertainty relations, mutually unbiased bases, quantum side information, two-party quantum cryptography, noisy-storage model.


\end{abstract}

\maketitle
\tableofcontents


\section{Introduction}

Randomness is an essential resource for information theory, cryptography, and computation. However, most sources of randomness exhibit only weak forms of unpredictability. The goal of randomness extraction is to convert such weak randomness into (almost) uniform random bits. Classically, a weakly random source simply outputs a string $X$ where the \lq amount\rq~of randomness is measured in terms of the probability of guessing the value of $X$ ahead of time. That is, it is measured in terms of the min-entropy $\hmin(X) = - \log P_{\rm guess}(X)$. To convert $X$ to perfect randomness, one applies a function $\Ext$ that takes $X$, together with a shorter string $R$ of perfect randomness (the \emph{seed}) to an output string $K = \Ext(X,R)$. The use of a seed is thereby necessary to ensure that the extractor works for all sources $X$ about which we know only the min-entropy, but no additional details of the source~\cite{vadhan:survey}.\footnote{Such as for example that each bit of a string $X$ is chosen independently.} Much work has been invested into showing that particular classes of functions have the property that $K$ is indeed very close to uniform as long as the min-entropy of the source $\hmin(X)$ is large enough. 

Yet, for most applications this is not quite enough, and we want an even stronger statement. In particular, imagine that we hold some \emph{side information} $E$ about $X$ that increases our guessing probability to $P_{\rm guess}(X|E)$. For example, such side information could come from an earlier application of an extractor to the same source. Intuitively, one would not talk about randomness if e.g., the output is uniformly distributed, but identical to an earlier output. In a cryptographic setting, side information can also be gathered by an adversary during the course of the protocol. We thus ask that the output is perfectly random even with respect to such side information, i.e., uniform \emph{and} uncorrelated from $E$. Classically, it is known that extractors are indeed robust against classical side information~\cite{robert:ext}, yielding a uniform output $K$, whenever the min-entropy about $X$ \emph{given access to side information} $E$ ($\hmin(X|E) = - \log P_{\rm guess}(X|E)$) is sufficiently high (see~\cite{Sha02, vadhan:survey} for surveys). Especially with respect to cryptographic applications, we thereby again want extractors that work for any source $X$ of sufficiently high entropy $\hmin(X|E)$ without any additional assumptions about the source.

Recently, it has been recognized that since the underlying world is not classical, $E$ may in fact hold \emph{quantum} side information about $X$~\cite{KMR05, RK05}. That this adds substantial difficulty to the problem was emphasized in~\cite{gavinsky:ext} where it was shown that 
there are in fact situations where using the same extractor gives a uniform output $K$ if $E$ is classical, but is entirely predictable when $E$ is quantum. Positive results were obtained in~\cite{RK05,robert:ext,renato:diss,TSSR10}, eventually culminating in~\cite{TS09, DPVR09}, proving that a wide class of classical extractors (with relatively short seed) yield a uniform output, as long as $\hmin(X|E)$ is sufficiently large.

Yet, in a fully quantum world we might ask ourselves: where does $X$ itself come from? How can we hope to harness even weak sources to obtain \emph{surplus} of classical randomness? Indeed, for any physical source hoping to create fresh randomness, $X$ is the result of a measurement on a quantum system $A$. That is, we can view the source as consisting of in fact two processes: First, a \emph{quantum} source emits a state $\rho_A$. Second, a measurement takes places yielding the classical string $X$. Note that quantum mechanics does allow many different measurements on $\rho_A$, and hence the question arises whether all such measurements are equally powerful at yielding a weakly random classical string $X$, or whether some are more useful to us than others. As such, it becomes clear that when trying to study our ability to extract randomness from any physical source, it is natural to ask how much randomness we can obtain from $\rho_A$ itself, rather than a particular classical string $X$.

The problem of extracting randomness from $X$ alone is further complicated by the fact that it is typically very hard to bound $\hmin(X|E)$, when $X$ is the result of quantum measurements on $A$, \emph{even} if we know stringent bounds on the \emph{quantum} correlations between $A$ and $E$ to begin with. When $E$ is trivial, entropic uncertainty relations~\cite{ww:URsurvey} yield such bounds when we are willing to average over a few randomly chosen measurements. A crude bound on $\hmin(X|E)$ can then be obtained by assuming that the size of $E$ is limited. But even classically, it is easy to see that there exist scenarios where bounding the adversaries' knowledge simply by his memory size yields very weak bounds~\cite{noisy:new}. Another approach to bounding $\hmin(X|E)$, common in e.g., Quantum Key Distribution (QKD), is possible in the case when randomness is extracted from a state $\rho_{ABE}$ where measurements are made on both $A$ and $B$ to obtain an estimate of $\hmin(X|E)$ where $X$ is obtained from $A$ alone~\cite{Tomamichel11,Tomamichel11_2,roger:thesis,pironio:nature,massar:recent,Colbeck11}. Part of the state is thereby consumed during the estimation process, which itself requires randomness. It is nevertheless possible to have an overall gain in randomness. For example, it is known that if measurements\footnote{That satisfy the no-signalling condition.} between systems $A$ and $B$ lead to a so-called Bell inequality violation, then $E$ knows little about $X$~\cite{roger:thesis,pironio:nature,massar:recent,Colbeck11,VV11,FGS11,PM11}. This is exactly the setting of the recent proofs~\cite{VV11,FGS11,PM11} of~\cite{Colbeck11,pironio:nature} where such violations were used to certify the creation of random bits using quantum measurements as a black box. 
Clearly, making such an estimate is only possible in a special setting where the states have a particular form $\rho_{ABE}$, and we are given access to $B$ \emph{and} $A$.


\subsection{Quantum to classical extractors}

This leads us to study \emph{quantum-to-classical randomness extractors} (QC-extractors). Our goal is to answer the following question: how can we extract \emph{classical} randomness from a physical source $\rho_{AE}$ by performing measurements on the \emph{quantum} state $\rho_A$? In analogy to classical extractors, we thereby want to obtain randomness from the source given only a minimal guarantee about its randomness - i.e.~like min-entropy $\hmin(X|E)$ for classical sources. It is important to note that unlike the classical world, quantum mechanics does allow for the creation of true randomness \emph{if} we are given full control of the source and can prepare any state $\rho_A$ at will.\footnote{For example, we could prepare the state $\ket{+} = (\ket{0}+\ket{1})/\sqrt{2}$ and measure it in the computational basis, yielding a truly random coin. Yet, this would correspond to controlling and knowing details of the source.} However, we want our extractors to work for \emph{any} unknown source as long as it has sufficiently high entropy.

As opposed to classical-to-classical extractors (CC-extractors) given by functions $\Ext(\cdot,R)$ mapping the outcome of the randomness source to a string $K$, a QC-extractor is described by projective measurements whose outcomes correspond to a classical string $K$. That is, a QC-extractor is a set of measurements $\left\{\meas^1_{A\rightarrow K},\ldots,\meas^L_{A \rightarrow K}\right\}$, where the random seed $R$ determines the measurement $\meas^R_{A\rightarrow K}$ that we will perform (see Section~\ref{sec:QCFromGeneralDecoupling} for a detailed explanation and a formal definition).\footnote{For quantum information theorists, note that one can of course use measurements to prepare states by measuring successively - however, recall that we are interested in how much randomness we can obtain from an \emph{unknown} source using a single measurement. The latter is furthermore motivated by experimental situations where successive measurements are typically very hard to implement.}

When talking about quantum states $\rho_{AE}$, what is the relevant measure of how weak or strong a source is? To gain some intuition on what the relevant measure should be, consider the case where $\rho_{AE}$ is the maximally entangled state between $A$ and $E$. Intuitively, this is the strongest quantum correlation that can exist between two systems. It is not hard to see that if we measure $A$ in \emph{any} basis to obtain some outcome $X$, and later communicate the choice of basis to an adversary holding $E$, then the adversary can guess $X$ perfectly. Intuitively, we would thus expect that the relevant measure of how weak a quantum source is with respect to $E$ involves a measure of the amount of entanglement
between $A$ and $E$. It turns out that the conditional min-entropy $\hmin(A|E)$ is exactly such a measure~\cite{krs:operational}, and we find that it is indeed the quantity that determines how many classical random bits we can hope to extract from $A$. That this is rather analogous to the classical case is very appealing. However, unlike for classical $A$, $\hmin(A|E)$ can be
\emph{negative} if $A$ is quantum (see below).

Note that in a quantum setting, we could also consider a quantum-to-quantum extractor (QQ-extractor). That is, an extractor in which we do not measure but merely ask that the resulting state is quantumly fully random (i.e.,~maximally mixed) and uncorrelated from $E$. Clearly, any QQ-extractor also forms a QC-extractor since any subsequent measurement on the maximally mixed state has a uniform distribution over outcomes. As such a QQ-extractor is stronger than a QC-extractor since we only require the output state to be close to uniform \emph{after} performing a measurement.\footnote{In quantum mechanics, it is possible to obtain a uniform distribution over outcomes even if the state was not maximally mixed. E.g., consider measuring the pure state $\proj{0}$ in the Fourier basis.} Constructions for such extractors are indeed well known in quantum information theory as a consequence of a notion known as \lq decoupling\rq, which plays a central role in quantum information theory (see~\cite{Horodecki05,Horodecki06,patrick:decouple,frederic:decoupling,Wullschleger08,Abeyesinghe06} and references therein). In general, a map that transforms a state $\rho_{AE}$ into a state that is close to a product state $\sigma_A \otimes \rho_E$ is a decoupling map. Decoupling processes thereby typically take the form of choosing a random unitary from a set $\left\{U_1,\ldots,U_L\right\}$ to $A=A_1A_2$ and tracing out (i.e., ignoring) the system $A_2$. For certain classes of unitaries such as (almost) unitary 2-designs~\cite{frederic:decoupling,patrick:blackHole,szehr:designs,Szehr11} (see below) the resulting state $\rho_{A_1E}$ is close to maximally mixed on $A_1$ and uncorrelated from $E$, whenever $\hmin(A|E)$ is sufficiently large. Measurements consisting of applying such a unitary, followed by a measurement on $A_1$ thus also yield QC-extractors.\footnote{For decoupling experts, note that the measurement map in a QC-extractor can be understood as a decoupling map. We would like to emphasize though that our QC-extractor results do not follow from previous work on decoupling, and our measurements have many nice properties not shared by unitaries used previously for decoupling.} Another example of QQ-extractors are given by protocols that aim to distill entanglement between $A$ and $B$ from a state $\rho_{ABE}$ by means of arbitrary communication between $A$ and $B$. The resulting output state is uncorrelated from $E$ and maximally mixed on (part of) $A$. The state has the additional requirement that when measuring on (part of) $A$ and $B$, the resulting output bits are perfectly correlated (i.e., they form a shared key). States $\rho_{AB}$ for which such a distillation is possible are also called \emph{private bits}~\cite{jonathan:pbits,Devetak04}. Note that given any QQ-extractor one could always purify the output onto an additional system, say, $B$. Being mixed on $A$ then corresponds naturally to being maximally entangled across $A$ and $B$ underlining the close relation between randomness extraction, and entanglement distillation~\cite{andris:extract,jonathan:pbits}. Note, however, that we do not want to assume special cases where we have access to other systems $B$ in order to perform such a distillation.

The authors of~\cite{BTS10} also proposed a definition of quantum extractors that is indeed somewhat similar to a QQ-extractor, however without any side information $E$. Our definitions (see Section~\ref{sec:QCFromGeneralDecoupling}) impose two important requirements not present in~\cite[Definition 5.1]{BTS10}. Firstly, we require the output of the extractor to be unpredictable for any, possibly quantum, adversary with access to side information $E$ provided $\entHmin(A|E)$ is large enough. Secondly, we consider \emph{strong} extractors so that even given the seed $R$, the output of the extractor cannot be predicted. This allows us to employ our extractor for cryptographic purposes. It also means that the output $K$ \emph{together with} $R$ are jointly close to uniform, meaning that we have effectively created \emph{more} almost perfect randomness than we invested in the seed.

\begin{itemize}
	\item[] \textbf{QC-extractors.} 
		\begin{itemize}
			\item We give two novel constructions of QC-extractors.\footnote{That is, not following from results on QQ-extractors (i.e., from general decoupling theorems in quantum information theory).} The first one involves a full set of mutually unbiased bases (MUBs) and pair-wise independent permutations (Theorem \ref{thm:full-set-mub}). This construction is more appealing than unitary 2-designs because it is combinatorially much simpler to describe and computationally more efficient, while having the same output size. 
		\item Our second construction (Theorem \ref{thm:singleQuditExtract}) is composed of unitaries acting on single qudits followed by some measurements in the computational basis. We also refer to these as \emph{bitwise} QC-extractors.  An appealing feature of the measurements defined by these unitaries is that they can be implemented \emph{with current technology}. In addition to computational efficiency, the fact that the unitaries act on a single qubit is often a desirable property for the design of cryptographic protocols in which the creation of randomness is not the only requirement for security. 
			Our example application below (see also Section~\ref{sec:noisy}) illustrates this.
		\item Finally, we also prove in Proposition~\ref{lem:optimality} that the maximum amount of randomness one can hope to extract is roughly $n + \entHmin(A|E)$, where $n$ denotes the input size. This upper bound can indeed be almost achieved by means of, e.g., our full set of MUBs QC-extractor. We also establish basic upper and lower bounds on the seed size for QC-extractors (see Table~\ref{tab:ext-summary}).
		\end{itemize}
\end{itemize}

The technique we use to prove that our constructions are QC-extractors is to bound the distance between the output of the extractor and the desired output in Hilbert-Schmidt norm (using ideas from~\cite{Horodecki05,Horodecki06,frederic:decoupling,Wullschleger08,szehr:designs,Szehr11,Berta08}). For the full set of MUBs, this distance can even be computed \emph{exactly}. We use the fact that the set of all the MUB vectors forms a complex projective 2-design and that the set of permutations is pair-wise independent. For our second construction, the analysis uses similar ideas in a more involved calculation. Our upper bound on the amount of extractable randomness follows from simple monotonicity properties of the min-entropy. The upper bound on the seed size follows from a non-explicit construction involving measure concentration techniques.


\subsection{Application to entropic uncertainty relations}

One of the fundamental ideas in quantum mechanics is the uncertainty principle. The security of essentially all quantum cryptographic protocols is founded on its existence. Intuitively, it states that even with complete knowledge about the quantum state $\rho_{A}$ of a system $A$, it is impossible to predict the outcomes of all possible measurements on $A$ with certainty. In an information theoretic context it is very natural to quantify this lack of knowledge in terms of entropic uncertainty relations (see~\cite{ww:URsurvey} for a survey). Apart from their deep significance in the foundations of quantum mechanics, entropic uncertainty relations are crucial tools in quantum information theory and quantum cryptography. The most well-known relation is for two measurements $\cM^{1}_{A\rightarrow K},\cM^{2}_{A\rightarrow K}$ and reads~\cite{Maassen88}
\begin{align}\label{eq:maassen}
\frac{1}{2}\sum_{j=1}^{2}H(K)_{\rho^{j}}\geq\log\frac{1}{c}\ ,
\end{align}
where $H(K)_{\rho^{j}}$ denotes the Shannon entropy of the post-measurement probability distributions $\rho^{j}_{K}=\cM^{j}_{A\rightarrow K}(\rho_{A})$, and $c$ measures the overlap between the measurements. Note that for any quantum state $\rho_{A}$ and measurements for which $c\neq1$, at least one of the entropies has to be greater than zero. In other words, it is impossible to predict the outcomes of both measurements with certainty. Uncertainty relations are thereby called \emph{strong}, if $\log(1/c)$ is large.

Just as extractors can depend on side information $E$, it is important to realize that also uncertainty should in fact not be treated as an absolute, but with respect to the prior knowledge of an observer who has access to a quantum system $E$~\cite{Winter10}. As an illustration, recall the example from above where $\rho_{AE}$ is the maximally entangled state. In this case, for any measurement on $A$, there is a corresponding measurement on $E$ that reproduces the measurement outcomes. I.e., there is no uncertainty at all! In order to take into account possibly quantum information about $A$, one needs to prove new entropic uncertainty relations that would have an additional term quantifying the quantum side information. Unfortunately, up to this day, we only know such relations for \emph{two} measurements~\cite{Berta09,Joe09,Coles11,Coles12,Coles10,christandl05,Tomamichel11}. Intuitively, uncertainty relations for two measurements are much easier to prove than relations for more measurements as in this case uncertainty coincides with another foundational notion in quantum information, complementarity. This notion is relevant when we perform two measurements in succession and was an essential ingredient in the proofs. However, it does not carry over to three or more measurements. Here, we prove the following results.

\begin{itemize}
	\item[] \textbf{Uncertainty relations with quantum side information for more than two measurements.} We show that any set of measurements forming a QC-extractor yields an entropic uncertainty relation \emph{with} respect to quantum side information. We thereby obtain relations both for the usual von Neumann (Shannon) entropy, as well as the min-entropy. The latter is relevant for cryptographic applications. This yields the first uncertainty relations with quantum side information for more than two measurements. From our QC-extractors, we obtain strong uncertainty relations for (almost) unitary 2-designs, measurements in a full set of mutually unbiased bases (MUBs) on the whole space, as well as on many single qudits. The latter are the measurements used e.g.,~in the six-state protocol of QKD, and are particularly relevant for applications in quantum cryptography (see Table~\ref{tab:URsummary} for a summary of results for the min-entropy).
\end{itemize}

Note that uncertainty relations in terms of the min-entropy effectively help us to bound $\hmin(X|ER)$, where $R$ is the seed for the QC-extractor (see Section~\ref{sec:URbounds} for details).
For example, for the full set of MUBs we prove that 
\begin{align}
	\hmin(X|ER) \gtrsim \log|A| + \hmin(A|E)\ ,
\end{align}
where the output of the measurements is called $X$. Since $\hmin(A|E)$ is negative when $A$ and $E$ are entangled, one obtains less uncertainty in this case (as expected when considering the example of a maximally entangled state given above). Of course, given such a bound, we could in turn apply a CC-extractor to the weakly random string $X$ to obtain a uniform $K$. This underscores the beautiful relation between the concept of randomness extraction from a quantum state, and the notion of uncertainty relations with side information in quantum physics. From a QC-extractor, we obtain uncertainty relations. In turn, from any measurements inducing strong uncertainty relations \emph{plus} a CC-extractor, we obtain a QC-extractor.\footnote{Note that measurements plus a classical post-processing effectively forms a new, larger, set of measurements.}


\subsection{Application to cryptography}

Our second application is to proving security in the noisy-storage model. Unfortunately, it turns out that even quantum communication does not enable us to solve two-party cryptographic problems between two parties that do not trust each other~\cite{lo:insecurity}. Such problems include e.g., the well-known primitives bit commitment and oblivious transfer~\cite{lo&chau:bitcom, 
lo:promise,mayers:bitcom,wehner06d,kretch:bc}, of which merely very weak variants are possible. How can this be when quantum communication offers such great advantages when it comes to distributing encryption keys? Intuitively, the security proof of QKD is considerably simplified by the fact that Alice and Bob do trust each other, and can collaborate to check for any eavesdropping activity. For example, as mentioned above, when Alice and Bob share a state $\rho_{ABE}$, where the eavesdropper holds $E$, they can use up part of the state to obtain an estimate of $\hmin(X|E)$, where $X$ is a measurement outcome of the remaining part of Alice's system.

Yet, since two-party cryptographic protocols are a central part of modern cryptography, one is willing to make assumptions on how powerful the adversary can be in order to obtain security. Classically, these assumptions typically consist of two parts. First, one assumes that a particular problem requires a lot of computational resources to solve in some precise complexity theoretic sense. Second, one assumes that the adversary does indeed have insufficient computational resources. However, we might instead ask whether there are other, more \emph{physical} assumptions that enable us to solve such tasks?

Classically, it is possible to obtain security, when we are willing to assume that the adversary's \emph{classical} memory is limited in size~\cite{Maurer92b,cachin:bounded}. Yet, apart from the fact that classical storage is by now cheap and plentiful, the beautiful idea of assuming a limited classical storage has one rather crucial caveat: \emph{any} classical protocol in which the honest players need to store $n$ classical bits to execute the protocol can be broken by an adversary who is able to store more than $O(n^2)$ bits~\cite{maurer:imposs}. Motivated by this unsatisfactory gap, it was thus suggested to assume that the attacker's \emph{quantum} storage was bounded~\cite{serge:bounded,serge:new}, or, more generally, noisy~\cite{Noisy1, noisy:robust,noisy:new}. The central assumption of the so-called \emph{noisy-storage model} is that during waiting times $\Delta t$ introduced in the protocol, the adversary can only keep quantum information in his quantum storage device $\cF$. Otherwise, the attacker may be all powerful. In particular, he can store an unlimited amount of classical information, and perform computations \lq instantaneously\rq. The latter implies that the attacker could encode his quantum information into an arbitrarily complicated error correcting code to protect it from any noise in $\cF$ (see Section~\ref{sec:noisy} for details). Of particular interest are thereby quantum memories consisting of $N$ \lq memory cells\rq, each of which undergoes some noise described by a channel $\cN$. That is, the memory device is of the form $\cF = \cN^{\otimes N}$. Note that the bounded storage model is a special case, where each memory cell is just one qubit, and $\cN$ is the identity channel. To relate the number of transmitted qubits $n$ to the size of the storage device one typically chooses the \emph{storage rate} $\nu$ such that $N = \nu \cdot n$. We follow this convention here to ease comparison with earlier work.

Since its inception~\cite{Noisy1}, it was clear that security in the noisy-storage model should be related to the question of how much information the adversary can send through his noisy storage device. That is, the capacity of $\cF$ to transmit quantum information. Initial progress was made in~\cite{noisy:new} where security was linked to the storage device's ability to transmit \emph{classical} information and shown against fully general 
attacks.\footnote{Before~\cite{noisy:new}, security was only shown under the additional assumption that the adversary attacks each qubit individually~\cite{Noisy1}. Whereas this may sound similar to problems in QKD, note that the setting is entirely different when proving security between two mutually distrustful parties, and security in QKD does not imply security in this model.} Further progress was made only very recently, linking the security to the so-called entanglement cost of the storage device~\cite{entCost}, which lies between its classical and quantum capacities.

\begin{itemize}
	\item[] {\bf Security and the quantum capacity.} Here, we finally resolve the question of linking security in the noisy-storage model to the quantum capacity of the storage device. 
		More precisely, we show that
		any two-party cryptographic primitive can be implemented securely under the assumption that the adversary is restricted to using
		a quantum storage device of the form $\cF = \cN^{\otimes \nu \cdot n}$ by means of a protocol transmitting $n$ qubits whenever
		\begin{align}
			\nu \cdot \cQ(\cN) < 1\ ,\mbox{ and }
			2 - \log(3) \lesssim \nu \cdot \gamma^Q(\cN,1/\nu)\ ,
		\end{align}
		where $\cQ(\cN)$ is the quantum capacity of the channel $\cN$ and $\gamma^Q(\cN,1/\nu)$ is the so-called strong converse parameter of $\cN$ for sending information through $\cF$
		at rate $R = 1/\nu$. Note that the second condition actually \emph{does} favor small $\nu$, since $\gamma^Q(\cN,1/\nu)$ is large whenever the rate $R = 1/\nu$ is large.
		A similar statement can be obtained for general channels $\cF$ (see Section~\ref{sec:noisy} for details and a worked out example).
\end{itemize}

We prove our result by showing the security of a simple quantum protocol for the cryptographic primitive \emph{weak string erasure}~\cite{noisy:new},
which is known to be universal for two-party secure computation~\cite{noisy:new}. To this end, we employ the bitwise QC-extractor for measurements of single qubits, each in one of three MUBs, known from the six-state protocol in QKD. 


\section{Preliminaries}

\subsection{Basic concepts}

We briefly recount some important facts of quantum information, and establish notational conventions. A more gentle introduction can be found in e.g.~\cite{noisy:new} or~\cite{NieChu00Book}.

\subsubsection{Quantum states}

In quantum mechanics, a system such as Alice's or Bob's labs are described mathematically by \emph{Hilbert spaces}, denoted by $A, B, C,\ldots$.
Here, we follow the usual convention in quanutm cryptography and assume that all Hilbert spaces are finite-dimensional. We write $|A|$ for the dimension of $A$. The set of linear operators on $A$ is denoted by $\cL(\cA)$. A \emph{quantum state} $\rho_{A}$ is an operator $\rho_{A}\in\states(A)$, where $\states(A)=\{\sigma_{A}\in\cL(A)\mid\sigma_{A}\geq0,\tr(\sigma_{A})=1\}$. If $\rho_{A}$ has rank $1$ it is called a \emph{pure} state. For technical reasons we also need the notion of sub-normalized states $\rho_{A}\in\substates(A)$, where $\substates(A)=\{\sigma_{A}\in\cL(A)\mid\sigma_{A}\geq0,\tr(\sigma_{A})\leq1\}$. We will use the term \emph{state} to refer to sub-normalized states, unless otherwise indicated in context.

Two systems $A$ and $B$ are combined using the tensor product, written as $AB\equiv A\otimes B$. An operator on two systems $AB$ is thereby also called \emph{bipartite} (and \emph{multipartite} if the number of systems is larger). Given a bipartite state $\rho_{AB}\in\cS_{\leq}(AB)$, we write $\rho_{A}=\tr_{B}[\rho_{AB}]$ for the corresponding reduced state, where $\tr_B$ is the partial trace over $B$. That is, $\rho_A$ is the state on system $A$ alone. 

It will be convenient to express classical probability distributions as quantum states. For some set $\cX$, let $\{\ket{x}\}_{x\in\cX}$ be an orthonormal basis of the space $X$ where each basis vector $\ket{x}$ corresponds to some particular element $x \in \cX$. A distribution $P_X$ over $\cX$ can now be expressed as 
\begin{align}\label{eq:classicDist}
	\rho_X = \sum_{x \in \cX} P_X(x) \proj{x}\ .
\end{align}
We also call this a \emph{classical state} or a $c$-state. In general, systems are called classical if they are of the above form for some fixed standard basis, often called the computational basis.
Naturally, one can now also consider states which are classical on system $X$ and quantum on some other system $A$. Such states have the form
\begin{align}
	\rho_{XA} = \sum_{x \in \cX} P_X(x) \underbrace{\proj{x}}_{X} \otimes \underbrace{\rho_A^x}_{A}\ .
\end{align}
We also call such states \emph{classical-quantum} or $cq$-states. In general, when indicating that a multipartite state is part classical, part quantum we will use $c$ and $q$ to label the classical and quantum systems, respectively.


\subsubsection{Quantum operations}

The simplest quantum operation is given by a unitary operator $U$ taking $\rho$ to $U \rho U^\dagger$. Later on, we will consider applying unitary operators only to one part of a multipartite state. 
When applying $U$ only to system $A$ of $\rho_{AB}$ we thereby also use the common shorthand
\begin{align}
	U_A \rho_{AB} U_A^\dagger = (U\otimes \id_B) \rho_{AB} (U \otimes \id_B)^\dagger\ ,
\end{align}
where $\id_{B}$ denotes the identity in $\cL(B)$. More generally, for $M_{A}\in\cL(A)$, we write $M_{A}\equiv M_{A}\otimes\id_{B}$ for the enlargement on any $AB$.  Any operation allowed by quantum mechanics can be expressed as a quantum channel. The simplest of these is the identity channel. For $A$, $B$ with orthonormal bases $\{\ket{i}_{A}\}_{i=1}^{|A|}$, $\{\ket{i}_{B}\}_{i=1}^{|B|}$ and $|A|=|B|$, the canonical identity mapping from $\cL(A)$ to $\cL(B)$ with respect to these bases is denoted by $\cI_{A\rightarrow B}$, i.e.~$\cI_{A\rightarrow B}(\ket{i}\bra{j}_{A})=\ket{i}\bra{j}_{B}$. A linear map $\cE_{A\rightarrow B}:\cL(A)\rightarrow\cL(B)$ is positive if $\cE_{A\rightarrow B}(\rho_{A})\geq0$ for all $\rho_{A}\geq0$. It is completely positive if the map $(\cE_{A\rightarrow B}\otimes\cI_{C\rightarrow C})$ is positive for all $C$. Completely positive and trace preserving maps (CPTMs) are called \emph{quantum channels}.

Indeed, also a measurement can be described as a quantum channel. Intuitively, a measurement takes a state $\rho$ to one of several possible classical measurement \lq outcomes\rq, where each outcome occurs with a certain probability. That is, for some fixed measurement a particular state $\rho$ determines some classical probability distribution over outcomes. Recall from Equation~\eqref{eq:classicDist} that we can express this distribution in terms of a quantum state. It will be convenient to express this in terms of a  quantum channel as the following \emph{measurement map}, that we will need in Section~\ref{sec:QCFromGeneralDecoupling}. For a bipartite system $A=A_{1}A_{2}$, it is defined as $\cT_{A\rightarrow A_1}:\cL(A)\to\cL(A_1)$,
\begin{align}
\label{eq:meas-map}
\cT(.)_{A\rightarrow A_1}= \sum_{a_{1}a_{2}} \bra{a_{1}a_{2}}(.)\ket{a_{1}a_{2}} \proj{a_1}\ ,
\end{align}
where $\{\ket{a_{1}}\},\{\ket{a_{2}}\}$ are (standard) orthonormal bases of $A_{1},A_{2}$ respectively.  A small calculation readily reveals that this map can be understood as tracing out $A_2$, and then measuring the remaining system $A_{1}$ in a basis $\{\ket{a_{1}}\}$. Note that the outcome of the measurement map is classical in the basis $\{\ket{a_{1}}\}$ on $A_{1}$.

Throughout, we will need this measurement map to consider measurements of a specific form. These are formed by first applying some particular unitary $U_{j}$ to the state, followed by the measurement map $\cT_{A\rightarrow A_{1}}$. We denote these measurements by
\begin{align}\label{eq:meas}
\cM_{A\rightarrow K_{1}}^{j}(\rho_{A})=\cI_{A_{1}\rightarrow K_{1}}\left(\cT_{A\rightarrow A_{1}}\left(U_j \rho_A U_j^\dagger\right)\right)\ ,
\end{align}
where the relabeling $A_{1}\rightarrow K_{1}$ accounts for the fact that the output system is actually classical (a notation that will be very useful in Section~\ref{sec:URbounds} on entropic uncertainty relations).
 

\subsubsection{Distance measures}
		
We will employ two well known distance measures between quantum states. The first is the $L_{1}$- or \emph{trace distance}, which is 		
induced by the $L_{1}$-norm $\|\rho\|_1 = \tr\left[\sqrt{\rho^\dagger \rho}\right]$. The trace distance determines the success probability of distinguishing two states $\rho$ and $\sigma$ given with a priori equal probability~\cite{helstrom}.

The second distance measure we will refer to is the \emph{purified distance}. To define it, we need the concept of \emph{generalized fidelity} between two states $\rho,\sigma$, which can be defined as~\cite{Tomamichel09},
\begin{align}
\bar{F}(\rho,\sigma)=F(\rho,\sigma)+\sqrt{\left(1-\tr[\rho]\right)\left(1-\tr[\sigma]\right)}\ ,
\end{align}
where $F(\rho,\sigma)=\|\sqrt{\rho}\sqrt{\sigma}\|_1$ is the usual notion of \emph{fidelity}. Note that if at least one of states is normalized, then the two notions of fidelity coincide, i.e.~$\bar{F}(\rho,\sigma)=F(\rho,\sigma)$. The purified distance between two states $\rho,\sigma$ is now defined as~\cite{gilchrist:distance,Tomamichel09}
\begin{align}
P(\rho,\sigma)=\sqrt{1-\bar{F}(\rho,\sigma)^2}\ ,
\end{align}
and is a metric on the set of sub-normalized states~\cite{Tomamichel09}. To gain some intuition about the notion of purified distance, note that by Uhlman's theorem~\cite{uhlmann} the fidelity between two normalized states $\rho,\sigma$ can be written as $F(\rho,\sigma)=\max_{\ket{\rho},\ket{\sigma}}|\inp{\rho}{\sigma}|$, where the maximization is taken over all purifications $\proj{\rho}$ of $\rho$ and $\proj{\sigma}$ of $\sigma$. Furthermore, note that for pure states $\sqrt{1 - F(\proj{\rho},\proj{\sigma})^2} = \frac{1}{2} \| \proj{\rho} - \proj{\sigma}\|_1$. Hence, for normalized states, we can think of the purified distance as the minimal trace distance between any two purifications of the states $\rho$ and $\sigma$. The purified distance is indeed closely related to the trace distance, as for any two states $\rho,\sigma$ we have~\cite{Tomamichel09},
\begin{align}\label{eq:purifiedVStrace}
\frac{1}{2} \| \rho - \sigma \|_1 \leq P(\rho,\sigma) \leq \sqrt{2 \| \rho - \sigma \|_1}\ .
\end{align}
It is furthermore easy to see that for normalized states the factor $2$ on the right hand side can be improved to $1$.

For any distance measure, we can define an $\eps$-ball of states around $\rho$ as the states at a distance not more than $\eps$ from $\rho$. Below, we will apply this notion to the purified distance and define
\begin{align}
\cB^{\eps}(\rho_{A})=\{\sigma_{A}\in\substates(A)\mid P(\rho_{A},\sigma_{A})\leq\eps\}\ .
\end{align}
		

\subsection{Quantifying information}

The \textit{von Neumann entropy} of $\rho_{A}\in\cS_{\leq}(A)$ is defined as $H(A)_{\rho}=-\tr[\rho_{A}\log\rho_{A}]$. 
Note that for a classical state $\rho_X$ this is simply the familiar Shannon entropy.
The \emph{conditional von Neumann entropy of $A$ given $B$} for $\rho_{AB}\in\cS_{\leq}(AB)$ is defined as
\begin{align}
H(A|B)_{\rho}=H(AB)_{\rho}-H(B)_{\rho}\ .
\end{align}
The \emph{conditional min-entropy} of a state $\rho_{AB}\in\states(AB)$ defined as\footnote{We write $\max$ instead of $\sup$ as we work with finite dimensional Hilbert spaces.}
\begin{align}
\entHmin(A|B)_{\rho}=\max_{\sigma_B \in \states(B)}\entHmin(A|B)_{\rho|\sigma}\ ,
\end{align}
with
\begin{align}		
\entHmin(A|B)_{\rho|\sigma}=\max\left\{\lambda \in \Real: 2^{-\lambda}\cdot\id_A \otimes \sigma_B \geq \rho_{AB}\right\}\ .
\end{align}
For the special case where $B$ is trivial, we obtain $\hmin(A)_{\rho}=-\log\|\rho_{A}\|_{\infty}$, where $\|.\|_{\infty}$ denotes the operator norm.

Whereas this definition may seem rather unwieldy, the min-entropy is known to have interesting operational interpretations~\cite{krs:operational}. If $A$ is classical, then the min-entropy can be expressed as $\entHmin(A|B)_{\rho}=-\log P_{\rm guess}(A|B)$, where $P_{\rm guess}(A|B)$ is the average probability of guessing the classical symbol $A=a$ maximized over all possible measurements on $B$. If $A$ is quantum, then $\entHmin(A|B)_{\rho}$ is directly related to the maximal achievable singlet fraction achievable by performing an operation on $B$, i.e.~it is intuitively related to the amount of entanglement between $A$ and $B$.

In practice, the full (operational) use of entropies only comes to play if one works with \emph{smoothed entropies}.\footnote{Of course, this is not the case for the von Neumann entropy. But note that the von Neumann entropy usually only has operational interpretations in an independent and identically distributed asymptotic setting. In contrast to this, smooth entropies allow the quantitative characterization of general (structureless) resources.} For the conditional min-entropy this takes the form
\begin{align}\label{eq:smoothMinDef}
\entHmin^{\eps}(A|B)_{\rho}=\max_{\tilde{\rho}_{AB} \in \epsball(\rho_{AB})} \entHmin(A|B)_{\tilde{\rho}}\ ,
\end{align}
where the smoothing parameter $\eps\geq0$ typically corresponds to an error tolerance in information theoretic operational interpretations. For a more detailed discussion about smooth entropies we refer to~\cite{renato:diss,krs:operational,Tomamichel08,Tomamichel09,datta-2008-2}.
		

\section{Quantum to Classical Randomness Extractors (QC-Extractors) }\label{sec:QCFromGeneralDecoupling}

The use of random bits is of fundamental importance for many information theoretic and computational tasks. However, perfect randomness is not easily found in nature. Most sources of randomness only exhibit weak forms of unpredictability. In order to use such sources in applications, one has to find a procedure to convert weak randomness into almost uniform random bits. Such procedures are usually referred to as randomness extractors, which have been extensively studied in the theoretical computer science literature; see \cite{Sha02, vadhan:survey} for surveys. 

In a classical world, the sources of randomness are described by probability distributions and the randomness extractors are families of (deterministic) functions taking each possible value of the source to a binary string. To understand the definition of quantum extractors, it is convenient to see a classical extractor as a family of permutations acting on the possible values of the source. This family of permutations should satisfy the following property: for any probability distribution on input bit strings with high min-entropy, applying a typical permutation from the family to the input induces an almost uniform probability distribution on a prefix of the output. We define a quantum to quantum extractor in a similar way by allowing the operations performed to be general unitary transformations and the input to the extractor to be quantum.

\begin{definition}[QQ-Extractors]\label{def:QQ-extractor}
Let $A = A_1 A_2$ with $n = \log|A|$. Define the trace-out map $\tr_{A_2} : \cL(A) \to \cL(A_1)$ by $\tr_{A_2}(.) = \sum_{a_{2}} \bra{a_{2}}(.)\ket{a_{2}}\ ,$ where $\{\ket{a_{2}}\}$ is an orthonormal basis of $A_{2}$.

For $k \in [- n , n ]$ and $\eps \in [0,1]$, a $(k, \eps)$-QQ-extractor is a set $\{U_1, \dots, U_{L}\}$ of unitary transformations on $A$ such that for all states $\rho_{AE} \in \cS(AE)$ satisfying $\entHmin(A|E)_{\rho} \geq k$, we have
\begin{equation}\label{eq:QQ-extractor}
\frac{1}{L} \sum_{i=1}^L \left\| \tr_{A_2}\left[U_i \rho_{AE} U_i^{\dagger}\right] - \frac{\id_{A_1}}{|A_1|} \ox \rho_E \right\|_1 \leq \eps \ .
\end{equation}
$\log L$ is called the seed size of the $QQ$-extractor.
\end{definition}
We make a few remarks on the definition. First, we should stress that the same set of unitaries should satisfy \eqref{eq:QQ-extractor} \emph{for all} states $\rho_{AE}$ that meet the conditional min-entropy criterion $\entHmin(A|E)_{\rho} \geq k$. In particular, the system $E$ can have arbitrarily large dimension. The quantity $\entHmin(A|E)_{\rho}$ measures the uncertainty that an adversary has about the system $A$. As it is usually impossible to model the knowledge of an adversary, a bound on the conditional min-entropy is often all one can get. A notable difference with the classical setting is that the conditional min-entropy $k$ can be negative when the systems $A$ and $E$ are entangled. In fact, in many cryptographic applications, this case is the most interesting.

A statement of the form of Equation~\eqref{eq:QQ-extractor} is more commonly known as a \lq decoupling\rq~result~\cite{Horodecki05, Horodecki06,frederic:decoupling, Abeyesinghe06, Wullschleger08,patrick:decouple}. Note, however, that decoupling does not always lead to the output being close to maximally mixed. Such statements play an important role in quantum information theory and many coding theorems amount to proving a decoupling theorem. In fact, the authors of~\cite{frederic:decoupling, Wullschleger08} showed that a set of unitaries forming a unitary 2-design (see Definition \ref{def:two-design}) define a $(k,\eps)$-QQ-extractors as long as the output size $\log |A_{1}|\leq(n+k)/2-\log(1/\eps)$.

A definition of quantum extractors was also proposed in~\cite[Definition 5.1]{BTS10}. Our definition is stronger in two respects. Firstly, we consider \emph{strong} extractors in that we impose a condition on the average of the trace distance to the uniform distribution by contrast to the trace distance of the average. The weaker constraint used by~\cite{BTS10} allows them to construct quantum extractors with output size equals to the input size.\footnote{In this case, the net randomness extracted is obtained by subtracting the randomness used for the seed} Secondly, we require the extractor to decouple the $A$ system from any \emph{quantum} side information held in the system $E$.

In the context of cryptography, a QQ-extractor is often more than one needs. In fact, it is usually sufficient to extract random \emph{classical} bits, which is in general easier to obtain than random qubits. This motivates the following definition, where the difference to a QQ-extractor is that the output system $A_1$ is measured in the computational basis. In particular, any $(k, \eps)$-QQ-extractor is also a $(k, \eps)$-QC-extractor. 

\begin{definition}[QC-Extractors]\label{def:qc-extractor}
Let $A = A_1 A_2$ with $n = \log |A|$, and let $\cT_{A\rightarrow A_{1}}$ be the measurement map defined in Equation~\eqref{eq:meas-map}.

For $k \in [-n, n]$ and $\eps \in [0,1]$, a $(k, \eps)$-QC-extractor is a set $\{U_1, \dots, U_{L}\}$ of unitary transformations on $A$ such that for all states $\rho_{AE} \in \cS(AE)$ satisfying $\entHmin(A|E) \geq k$, we have
\begin{align}\label{eq:qc-extractor}
\frac{1}{L} \sum_{i=1}^L \left\| \cT_{A\rightarrow A_{1}}(U_i \rho_{AE} U_i^{\dagger})- \frac{\id_{A_1}}{|A_1|} \ox \rho_E \right\|_1 \leq \eps\ .
\end{align}
$\log L$ is called the seed size of the $QC$-extractor.
\end{definition}

Observe that Definition~\ref{def:qc-extractor} only allows a specific form of measurements obtained by applying a unitary transformation followed by a measurement in the computational basis of $A_1$. The reason we use this definition is that we want the output of the extractor to be determined by the source and the choice of the seed. In the quantum setting, a natural way of translating this requirement is by imposing that an adversary holding a system that is maximally entangled with the source can perfectly predict the output. This condition is satisfied by the form of measurements dictated by Definition~\ref{def:qc-extractor}. Allowing generalized measurements (POVMs) already (implicitly) allows the use of randomness for free. Note also, that in the case where the system $E$ is trivial, a $(0, \eps)$-QC-extractor is the same as an $\eps$-metric uncertainty relation \cite{fawzi11}.


\subsection{Examples and limitations of QC-extractors}

Universal (or two-independent) hashing is probably one of the most important extractor constructions, which even predates the general definition of extractors \cite{ILL89}. Unitary 2-designs can be seen as a quantum generalization of two-independent hash functions.

\begin{definition}\label{def:two-design}
A set of unitaries $\{U_1, \dots, U_L\}$ acting on $A$ is said to be a 2-design if for all $M \in \cL(A)$, we have
\begin{align}
\frac{1}{L} \sum_{i=1}^L U_i^{\otimes 2} M (U_i^{\dagger})^{\otimes 2} = \int U^{\otimes 2} M (U^{\dagger})^{\otimes 2} dU
\end{align}
where the integration is with respect to the Haar measure on the unitary group.
\end{definition}

Many efficient constructions of unitary 2-designs are known~\cite{DCEL09, gross07}, and in an $n$-qubit space, such unitaries can typically be computed by circuits of size $O(n^2)$. However, observe that the number of unitaries of a 2-design is at least $L \geq|A|^4-2|A|^2+2$~\cite{gross07}. The following is immediate using a general decoupling result from~\cite{frederic:decoupling,Wullschleger08} (see Lemma~\ref{lem:2design}).

\begin{corollary}\label{thm:qc-ext-two-design}
Let $A = A_1 A_2$ with $n = \log |A|$. For all $k \in [-n, n]$ and all $\eps > 0$, a unitary 2-design $\{U_1, \dots, U_L\}$ on $A$ is a $(k, \eps)$-QC-extractor with output size
\begin{align}
\log |A_1| = \min(n, n+k-2\log(1/\eps)).
\end{align}
\end{corollary}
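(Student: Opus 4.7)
The plan is to derive the corollary by directly invoking Lemma~\ref{lem:2design} (the generic $2$-design decoupling bound of~\cite{frederic:decoupling,Wullschleger08}) with the measurement map $\cT_{A\to A_{1}}$ playing the role of the decoupling channel. The key point is that the QC-extractor condition \eqref{eq:qc-extractor} is literally a decoupling bound: we want $\cT_{A\to A_{1}}(U_{i}\rho_{AE}U_{i}^{\dagger})$ to be close in trace norm, on average over a $2$-design, to the product $\tau_{A_{1}}\otimes\rho_{E}$ (with $\tau_{A_{1}}=\id_{A_{1}}/|A_{1}|$). Since $\cT_{A\to A_{1}}$ is a CPTP map from $A$ to $A_{1}$, Lemma~\ref{lem:2design} applies and yields
\[
\frac{1}{L}\sum_{i=1}^{L}\bigl\|\cT_{A\to A_{1}}(U_{i}\rho_{AE}U_{i}^{\dagger})-\tau_{A_{1}}\otimes\rho_{E}\bigr\|_{1}\ \leq\ 2^{-\frac{1}{2}\bigl(\entHmin(A|E)_{\rho}+\entHmin(A|A_{1}')_{\tau^{\cT}}\bigr)},
\]
where $\tau^{\cT}_{AA_{1}'}=(\cI_{A}\otimes\cT_{A\to A_{1}'})(\proj{\phi}_{AA'})$ is the Choi state of $\cT$ (here $\ket{\phi}_{AA'}$ is a fixed maximally entangled state).

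The main computational step is to evaluate $\entHmin(A|A_{1}')_{\tau^{\cT}}$. A direct calculation gives
\[
\tau^{\cT}_{AA_{1}'}\ =\ \frac{1}{|A|}\sum_{a_{1},a_{2}}\proj{a_{1}a_{2}}_{A}\otimes\proj{a_{1}}_{A_{1}'},
\]
which is diagonal in the computational basis. Choosing the optimizing $\sigma_{A_{1}'}=\id_{A_{1}'}/|A_{1}|$ in the definition of the conditional min-entropy, the operator inequality $2^{-\lambda}\id_{A}\otimes\sigma_{A_{1}'}\geq\tau^{\cT}_{AA_{1}'}$ reduces to the scalar condition $2^{\lambda}\leq|A|/|A_{1}|=|A_{2}|$, so that $\entHmin(A|A_{1}')_{\tau^{\cT}}=\log|A_{2}|=n-\log|A_{1}|$. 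This is the step where the classical structure of the QC-extractor output enters and gives an extra $\log|A_{1}|$ over the ordinary QQ-extractor bound; it is the only place where anything really has to be checked.

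Plugging into the decoupling inequality together with the assumption $\entHmin(A|E)_{\rho}\geq k$ gives
\[
\frac{1}{L}\sum_{i=1}^{L}\bigl\|\cT_{A\to A_{1}}(U_{i}\rho_{AE}U_{i}^{\dagger})-\tau_{A_{1}}\otimes\rho_{E}\bigr\|_{1}\ \leq\ 2^{-\frac{1}{2}(k+n-\log|A_{1}|)}.
\]
Requiring the right-hand side to be at most $\eps$ is equivalent to $\log|A_{1}|\leq n+k-2\log(1/\eps)$, which together with the trivial bound $\log|A_{1}|\leq n$ yields the stated output size $\log|A_{1}|=\min(n,n+k-2\log(1/\eps))$.

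The main (and essentially only) obstacle is recognising that the QC-extractor error is exactly the decoupling error for the map $\cT_{A\to A_{1}}$, and computing the Choi-state min-entropy for this specific map; everything else is a direct application of Lemma~\ref{lem:2design} and elementary algebra.
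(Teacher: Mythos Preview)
Your proof is correct and follows essentially the same route as the paper, which also derives the corollary directly from the 2-design decoupling theorem of~\cite{frederic:decoupling,Wullschleger08}. The paper's Lemma~\ref{lem:2design} is already the specialization of that general theorem to the measurement map $\cT_{A\to A_{1}}$ (the bound there reads $\sqrt{|A_{1}|/|A|\cdot 2^{-\hmin^{\delta}(A|E)}}+2\delta$), so your Choi-state computation $\hmin(A|A_{1}')_{\tau^{\cT}}=\log|A_{2}|$---while correct---simply re-derives what is already packaged into that lemma.
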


Similar results also hold for almost unitary 2-designs; see~\cite{szehr:designs,Szehr11}. Using~\cite{Harrow09_2}, this shows for instance that random quantum circuits of size $O(n^2)$ are QC-extractors with basically the same parameters as in Corollary~\ref{thm:qc-ext-two-design}. We now prove that choosing a reasonably small set of unitaries at random defines a QC-extractor with high probability. The seed size in this case is of the same order as the output size of the extractor. We expect that a much smaller seed size would be sufficient.

\begin{theorem}\label{thm:small-decoupling-set}
Let $A=A_{1}A_{2}$ with $n = \log |A|$ and $\cT_{A\rightarrow A_{1}}$ be the measurement map defined in Equation~\eqref{eq:meas-map}. Let $\eps > 0$, $c$ be a sufficiently large constant, and
\begin{align}
\log |A_1| \leq n + k - 4 \log (1/\eps) - c \qquad \text{ as well as } \qquad \log L \geq \log |A_1| + \log n + 4 \log(1/\eps) + c \ .
\end{align}
Then, choosing $L$ unitaries $\left\{U_1, \dots, U_L\right\}$ independently according to the Haar measure defines a $(k, \eps)$-QC-extractor with high probability.
\end{theorem}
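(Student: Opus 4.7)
The plan is a net-plus-concentration argument: bound the expected trace distance for a single fixed state via the Haar decoupling theorem, show that the empirical average over the $L$ independent unitaries concentrates, and finally union-bound over a net of states in the min-entropy ball.

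\emph{Step 1.} For a fixed state $\rho_{AE}$ with $\entHmin(A|E)_\rho\geq k$, set
\[
g_\rho(U):=\|\cT_{A\to A_1}(U\rho_{AE}U^\dagger)-\id_{A_1}/|A_1|\otimes\rho_E\|_1.
\]
Since the Haar measure is a unitary $2$-design, the calculation underlying Corollary~\ref{thm:qc-ext-two-design} (Jensen combined with $\|\cdot\|_1\leq\sqrt{\dim}\,\|\cdot\|_2$ and the min-entropy bound on the Hilbert--Schmidt norm of $\rho_{AE}$) gives $\esp_{U}[g_\rho(U)]\lesssim 2^{-(n+k-\log|A_1|)/2}$. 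Under the hypothesis $\log|A_1|\leq n+k-4\log(1/\eps)-c$ this is at most $\eps^{2}\cdot 2^{-c/2}\leq\eps/4$ for $c$ sufficiently large.

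\emph{Step 2.} To pass to the empirical average, use concentration of measure on $U(|A|)^L$. A Hoeffding tail bound applied to the bounded random variables $g_\rho(U_i)\in[0,2]$, sharpened if needed by Lévy's lemma exploiting the $O(1)$ operator-norm Lipschitz constant of $U\mapsto g_\rho(U)$, yields
\begin{align*}
\Pr\!\left[\tfrac{1}{L}\sum_{i=1}^L g_\rho(U_i)\;\geq\;\esp_U[g_\rho(U)]+\eps/4\right]\;\leq\;\exp(-\Omega(L\eps^{2})).
\end{align*}

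\emph{Step 3.} I would then reduce the family of states in the min-entropy ball, by purification and convexity of $g_\rho$ in $\rho$, to pure states $\ket{\psi}_{AE'}$ with $|E'|\leq|A|$. Cover the corresponding projective space by an $\eps/8$-net $\cN$ in purified distance; standard volume estimates give $\log|\cN|=O(|A|^{2}\log(1/\eps))$. By contractivity of $\cT$ and the triangle inequality, $\rho\mapsto\tfrac{1}{L}\sum_i g_\rho(U_i)$ is $2$-Lipschitz in trace norm, so it suffices to check the bound on $\cN$. Union-bounding Step~2 over $\cN$ and inserting the assumed lower bound on $\log L$ drives the failure probability strictly below~$1$, establishing the existence of a good seed.

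The main obstacle is matching the claimed seed size $\log L\geq\log|A_1|+\log n+4\log(1/\eps)+c$: a plain Hoeffding bound against an $O(|A|^{2}\log(1/\eps))$-entropy net is too weak whenever $|A_1|\ll|A|$. Closing this gap requires either sharper dimension-dependent concentration from Lévy's lemma on the unitary group, or a more efficient net that accounts for the effective output dimension $|A_1|$ rather than the full input dimension $|A|$ (e.g.\ by covering only the image of $\rho$ under the measurement map and invoking operator-valued concentration on that smaller space). I expect this refinement to be the technical heart of the argument.
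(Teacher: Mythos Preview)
Your diagnosis of the obstacle is correct, and the gap is real: a scalar Hoeffding or L\'evy bound against an $\eps$-net of pure states on $AE'$ with $|E'|=|A|$ needs $L$ of order $|A|^2/\eps^2$, not $|A_1|\cdot n/\eps^4$. No ``smaller net'' on the output side will fix this, because the extractor condition quantifies over \emph{all} $\rho_{AE}$ with $\entHmin(A|E)\geq k$, and that family genuinely has metric entropy of order $|A|^2$.

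The paper avoids the net entirely by moving the concentration step \emph{before} any state is fixed. Writing the squared Hilbert--Schmidt distance via the swap trick, one sees that the only $U_i$-dependent object is the operator
\[
X_i \;=\; (U_i^{\dagger})^{\otimes 2}\Bigl(\sum_{a_1,a_2,a_2'}\proj{a_1a_2a_1a_2'}\Bigr)U_i^{\otimes 2}
\]
on $A\otimes A'$, and the decoupling bound for a given $\rho_{AE}$ follows from the single operator inequality $\tfrac{1}{L}\sum_i X_i\leq(1+\eta)\,\Gamma$, where $\Gamma=\esp_U[X_U]$ is the Haar expectation. One then applies the Ahlswede--Winter operator Chernoff bound directly to $\{X_i\}$: since $0\leq X_i\leq\id$ and $\Gamma\geq\tfrac{1}{2|A_1|}\id$, the failure probability is at most $|A|^{O(1)}\exp\bigl(-\Omega(L\eta^2/|A_1|)\bigr)$, which is small as soon as $\log L\gtrsim\log|A_1|+\log n+2\log(1/\eta)$. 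Once the operator inequality holds, the remainder of the Hilbert--Schmidt computation is identical to the $2$-design case and yields the QC-extractor bound for \emph{every} state simultaneously. The factor $|A_1|$ in the seed size thus comes from the minimal eigenvalue of $\Gamma$, not from any covering argument; this is precisely the ``operator-valued concentration'' you alluded to at the end, but applied on the input side $A\otimes A'$ rather than on the output.
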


The proof can be found in Appendix~\ref{app:proofs}. It uses one-shot decoupling techniques~\cite{Berta08,frederic:decoupling,Wullschleger08,Szehr11,szehr:designs} combined with an operator Chernoff bound~\cite{AW02} (see Lemma~\ref{lem:operator-chernoff}).

We now give some limitations on the output size and seed size of QC-extractors. The following lemma shows that even if we are looking for a QC-extractor that works for a particular state $\rho_{AE}$, the output size is at most $n+H_{\min}^{\sqrt{\eps}}(A|E)_{\rho}$, where $n$ denotes the size of the input.

\begin{proposition}[Upper bound on the output size]\label{lem:optimality}
	Let $A=A_{1}A_{2}$, $\rho_{AE}\in\cS(AE)$, $\{U_1, \dots, U_L\}$ a set of unitaries on $A$, and $\cT_{A\rightarrow A_{1}}$ defined as in Equation~\eqref{eq:meas-map}, such that
	\begin{align}
	\label{eq:opt-ext-cond}
		\frac{1}{L} \sum_{i=1}^L \left\| \cT_{A\rightarrow A_{1}}\left( U_i\rho_{AE} U_i^{\dagger} \right) - \frac{\id_{A_1}}{|A_1|} \otimes \rho_E\right\|_1 \leq \eps\ .
	\end{align}
	Then,
	\begin{align}
		\log |A_1| \leq \log |A| + \entHmin^{\sqrt{\eps}}(A|E) \ . 
	\end{align}
\end{proposition}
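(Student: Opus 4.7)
The proof rests on first reducing the averaged decoupling to a single index, then transferring a min-entropy bound from the classical measurement outcome back to the quantum input via Uhlmann's theorem and an elementary pinching inequality.

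By averaging in \eqref{eq:opt-ext-cond} and the pigeonhole principle, pick $i^\star$ with $\|\cT(\sigma^{i^\star}) - \tau_{A_1}\otimes\rho_E\|_1 \leq \eps$, where $\sigma^{i^\star} := U_{i^\star}\rho U_{i^\star}^\dagger$. Since both states are normalized, \eqref{eq:purifiedVStrace} gives $P(\cT(\sigma^{i^\star}), \tau_{A_1}\otimes\rho_E) \leq \sqrt{\eps}$. I would then invoke Uhlmann's theorem applied to the Stinespring dilation $V: A \to A_1 B$ of the measurement channel $\cT$ to produce a sub-normalized state $\tilde\sigma_{AE}\in\substates(AE)$ with $\cT(\tilde\sigma) = \tau_{A_1}\otimes\rho_E$ and $P(\tilde\sigma,\sigma^{i^\star}) \leq \sqrt{\eps}$.

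With this $\tilde\sigma$ in hand, the equality $\cT(\tilde\sigma) = \frac{1}{|A_1|}\id_{A_1}\otimes\rho_E$ written in the basis $\{\ket{a_1a_2}\}_A$ gives $\sum_{a_2}\bra{a_1a_2}\tilde\sigma\ket{a_1a_2}_E = \rho_E/|A_1|$ for every $a_1$. Because each summand is positive semidefinite, each individual diagonal block is bounded by the sum, so $\bra{a_1a_2}\tilde\sigma\ket{a_1a_2}_E \leq \rho_E/|A_1|$ for all $(a_1,a_2)$, and hence the pinched state $\Pi_A(\tilde\sigma) := \sum_{a_1,a_2}\proj{a_1a_2}_A\otimes\bra{a_1a_2}\tilde\sigma\ket{a_1a_2}_E$ satisfies $\Pi_A(\tilde\sigma) \leq \frac{1}{|A_1|}\id_A\otimes\rho_E$. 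Writing the pinching as the uniform average $\Pi_A(\tilde\sigma) = \frac{1}{|A|}\sum_{j=0}^{|A|-1}Z_A^j\tilde\sigma Z_A^{-j}$ over the $|A|$ conjugates by the phase operator $Z_A$ shows $|A|\,\Pi_A(\tilde\sigma) - \tilde\sigma = \sum_{j\neq 0}Z_A^j\tilde\sigma Z_A^{-j} \geq 0$, i.e.\ $\tilde\sigma \leq |A|\,\Pi_A(\tilde\sigma)$. Combining,
\[
\tilde\sigma_{AE} \leq |A_2|\,\id_A\otimes\rho_E,
\]
so $H_{\min}(A|E)_{\tilde\sigma} \geq -\log|A_2|$. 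The definition \eqref{eq:smoothMinDef} of smooth min-entropy together with unitary invariance of the min-entropy then yields $H_{\min}^{\sqrt{\eps}}(A|E)_\rho = H_{\min}^{\sqrt{\eps}}(A|E)_{\sigma^{i^\star}} \geq H_{\min}(A|E)_{\tilde\sigma} \geq -\log|A_2| = \log|A_1| - \log|A|$, which is exactly the claim.

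\textbf{Main obstacle.} The delicate step is the Uhlmann lifting, namely producing $\tilde\sigma$ close to $\sigma^{i^\star}$ with $\cT(\tilde\sigma)$ equal to $\tau\otimes\rho_E$ \emph{exactly}. Uhlmann's extension theorem applied to $V\sigma^{i^\star} V^\dagger$ (an extension of $\cT(\sigma^{i^\star})$ on $A_1BE$) yields a matching extension of $\tau\otimes\rho_E$ at purified distance $\leq\sqrt{\eps}$, but this extension need not be supported on $\mathrm{image}(V)\otimes E$; one has to project onto that subspace and pull back through $V^\dagger$, and argue that fidelity with $V\sigma^{i^\star}V^\dagger$ is preserved under this projection (since $V\sigma^{i^\star}V^\dagger$ already lives in the image). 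Managing the resulting sub-normalization within the generalized fidelity is the technical point that makes the $\sqrt{\eps}$-smoothing come out cleanly on the right-hand side.
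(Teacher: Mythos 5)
Your route is genuinely different from the paper's, but it has a gap at exactly the step you flag, and your sketched resolution does not close it. Uhlmann's theorem applied to the Stinespring dilation $V:\ket{a_1a_2}\mapsto\ket{a_1}_{A_1}\ox\ket{a_1a_2}_{A_2'}$ of $\cT$ gives you an extension $\bar\omega_{A_1A_2'E}$ of $\frac{\id_{A_1}}{|A_1|}\ox\rho_E$ with $F(V\sigma^{i^\star}V^\dagger,\bar\omega)=F(\cT(\sigma^{i^\star}),\frac{\id_{A_1}}{|A_1|}\ox\rho_E)$. Pulling back via $\tilde\sigma:=V^\dagger\bar\omega V$ does preserve the fidelity (one checks $F(\sigma^{i^\star},V^\dagger\bar\omega V)=F(V\sigma^{i^\star}V^\dagger,\bar\omega)$ using $\tr\sqrt{XX^\dagger}=\tr\sqrt{X^\dagger X}$), and the sub-normalization is harmless since $\sigma^{i^\star}$ is normalized. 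The real problem is that $\cT(\tilde\sigma)=\tr_{A_2'}\bigl[\Pi\bar\omega\Pi\bigr]$ with $\Pi=VV^\dagger$, which is \emph{not} $\tr_{A_2'}\bar\omega=\frac{\id_{A_1}}{|A_1|}\ox\rho_E$. So the exact identity $\sum_{a_2}\bra{a_1a_2}\tilde\sigma\ket{a_1a_2}=\rho_E/|A_1|$ that your pinching step consumes is simply not available, and an approximate version of it yields no operator inequality. Exact preimage-lifting of this kind is false for general channels (even when the target lies in the image), so for $\cT$ it would require a dedicated argument that the proposal does not supply; you have misdiagnosed the obstacle as "managing the sub-normalization."

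The strategy is salvageable, because the exactness is never actually needed. Since $\tr_{A_2'}\bar\omega=\frac{\id_{A_1}}{|A_1|}\ox\rho_E$ is a sum of positive semidefinite blocks $(\id\ox\bra{c})\bar\omega(\id\ox\ket{c})$ over any basis $\{\ket{c}\}$ of $A_2'$, each block is $\leq\frac{\id_{A_1}}{|A_1|}\ox\rho_E$; sandwiching with $\bra{a_1}_{A_1}$ and taking $c=(a_1a_2)$ gives $\bra{a_1a_2}\tilde\sigma\ket{a_1a_2}\leq\rho_E/|A_1|$ directly, after which your pinching chain $\tilde\sigma\leq|A|\,\Pi_A(\tilde\sigma)\leq|A_2|\,\id_A\ox\rho_E$ and the conclusion $\hmin(A|E)_{\tilde\sigma}\geq-\log|A_2|$ go through. (Alternatively, one can restore the exact marginal by adding a suitable diagonal positive operator to $\tilde\sigma$, which only increases the fidelity.) For comparison, the paper sidesteps the lifting entirely: it smooths on the \emph{classical output} side, where the ideal product state itself is the smoothed state, so $\hmin^{\sqrt{\eps}}(A_1|E)\geq\log|A_1|$ is immediate from Equation~\eqref{eq:purifiedVStrace}, and it transfers the bound back to $A$ via Lemma~\ref{lem:qcvscc} (whose proof hides the Uhlmann work inside min/max duality) together with monotonicity for classical registers. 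Your input-side argument is a legitimate alternative once the block bound is extracted correctly, but as written it rests on an unestablished, and for your construction false, exactness claim.
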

\begin{proof}
Consider the projective rank-one measurements $\{P^i_x\}$ obtained by performing $U_i$ followed by a measurement in the computational basis of $A$. Using the fact that the min-entropy cannot increase by too much when performing a measurement (Lemma~\ref{lem:qcvscc}), we obtain for all $i \in \{1, \dots, L\}$
\begin{align}
H_{\min}^{\sqrt{\eps}}(A|E)_{\rho} + \log |A| \geq \entHmin^{\sqrt{\eps}}(X_i|E)_{\rho}\ ,
\end{align}
where $X_i$ denotes the outcome of the measurement $\{P^i_x\}$. But condition~\eqref{eq:opt-ext-cond} implies that there exists $i \in \{1, \dots, L\}$ such that 
\begin{align}
\left\| \cT_{A\rightarrow A_{1}}\left( U_i\rho_{AE} U_i^{\dagger} \right) - \frac{\id_{A_1}}{|A_1|} \otimes \rho_E\right\|_1 \leq \eps.
\end{align}
By monotonicity of the min-entropy for classical registers~\cite[Lemma C.5]{Berta11}, we have that
\begin{align}
\entHmin^{\sqrt{\eps}}(X_i|E)_{\rho} \geq \entHmin^{\sqrt{\eps}}(A_1|E)_{\cT_{A\rightarrow A_{1}}\left( U_i\rho_{AE} U_i^{\dagger} \right)} \geq \log |A_1|\ ,
\end{align}
which proves the desired result.
\end{proof}

The following simple argument shows that the number of unitaries of a QC-extractor has to be at least about $1/\eps$.

\begin{proposition}[Lower bound on seed size]\label{prop:coneps}
Let $A=A_{1}A_{2}$. Any $(k,\eps)$-QC-extractor with $k \leq \log |A| - 1$ is composed of a set of unitaries on $A$ of size at least $L\geq1/\eps$.
\end{proposition}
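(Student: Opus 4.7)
My plan is to exhibit, for any candidate $(k,\eps)$-QC-extractor $\{U_1,\dots,U_L\}$, a single classical--quantum state $\rho_{AE}$ with $\entHmin(A|E)_\rho\geq k$ whose $j$-th ``branch'' is tailored so that the unitary $U_j$ cannot decouple it. The averaged extractor error will then be at least $1/L$, and the bound $L\geq 1/\eps$ follows immediately. (The claim is vacuous unless $|A_1|\geq 2$, which I assume.)

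The key building block is a reference state $\sigma\in\cS(A)$ with (i)~$\entHmin(\sigma)\geq k$ and (ii)~$\|\cT_{A\rightarrow A_1}(\sigma)-\id_{A_1}/|A_1|\|_1\geq 1$. I would take $\sigma$ to be the flat state on a product subset $T=S\times A_2'\subseteq A_1\times A_2$ of cardinality $2^k$, chosen so that $\cT_{A\rightarrow A_1}(\sigma)=\id_S/|S|$: use $|S|=1$, $|A_2'|=2^k$ when $2^k\leq|A_2|$, and $A_2'=A_2$, $|S|=2^k/|A_2|$ otherwise. In either case the hypothesis $k\leq\log|A|-1$ is exactly what enforces $|S|\leq|A_1|/2$, which gives $\|\cT(\sigma)-\id_{A_1}/|A_1|\|_1=2(1-|S|/|A_1|)\geq 1$.

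Using $\sigma$ as a template I would set $\rho_A^j\assign U_j^{\dagger}\sigma U_j$ and package these into the cq-state $\rho_{AE}\assign\tfrac{1}{L}\sum_{j}\rho_A^j\otimes\proj{j}_E$. Every $\rho_A^j$ is unitarily equivalent to $\sigma$ and so satisfies $\|\rho_A^j\|_\infty=2^{-k}$; with the choice $\sigma_E=(1/L)\sum_j\proj{j}_E$ one then verifies $\rho_{AE}\leq 2^{-k}\,\id_A\otimes\sigma_E$ block-by-block on the classical register $E$, so $\entHmin(A|E)_\rho\geq k$ as required by the definition of a $(k,\eps)$-QC-extractor.

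The final observation is that $U_j\rho_A^j U_j^{\dagger}=\sigma$ and the labels on $E$ are mutually orthogonal, so the trace norm decomposes as a sum of non-negative contributions and dropping all but the $i=j$ term gives
\begin{align*}
\left\|\cT_{A\rightarrow A_1}(U_j\rho_{AE}U_j^{\dagger})-\tfrac{\id_{A_1}}{|A_1|}\ox\rho_E\right\|_1=\frac{1}{L}\sum_{i=1}^{L}\left\|\cT(U_jU_i^{\dagger}\sigma U_iU_j^{\dagger})-\tfrac{\id_{A_1}}{|A_1|}\right\|_1\geq\frac{1}{L}.
\end{align*}
Averaging this lower bound over $j$ and invoking~\eqref{eq:qc-extractor} yields $1/L\leq\eps$, i.e.\ $L\geq 1/\eps$. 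The only delicate point is step (i): verifying that the hypothesis $k\leq\log|A|-1$ is precisely what allows the product subset $T$ to be chosen with its projection onto $A_1$ occupying at most half of the basis, which is what keeps $\cT(\sigma)$ trace-distance at least $1$ from uniform.
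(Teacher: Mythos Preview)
Your proof is correct, but it is more elaborate than necessary compared to the paper's argument. The paper dispenses with the side-information register $E$ entirely: it takes the single state
\[
\rho_A=\frac{2}{|A|}\sum_{a_1\in S,\,a_2\in[|A_2|]}U_1^{\dagger}\proj{a_1a_2}U_1,
\]
where $S\subseteq[|A_1|]$ has $|S|=|A_1|/2$. This is flat on $|A|/2$ basis vectors, so $\hmin(A)=\log|A|-1\geq k$ automatically, with no case split on the size of $2^k$ versus $|A_2|$. Since $\cT(U_1\rho_A U_1^{\dagger})$ is uniform on $S$, its trace distance to $\id_{A_1}/|A_1|$ equals $1$, and nonnegativity of the remaining $L-1$ summands in \eqref{eq:qc-extractor} already gives $1/L\leq\eps$.

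Your construction instead tailors a separate branch $\rho_A^j=U_j^{\dagger}\sigma U_j$ to defeat each $U_j$, packaged into a cq-state over an $L$-valued classical register $E$. This is valid --- the block-diagonal min-entropy calculation and the $i=j$ term bound are both fine --- but the extra machinery buys nothing: after decomposing the trace norm over $E$ you still keep only the single $i=j$ block, yielding the same $1/L$ lower bound that the paper gets from its $j=1$ summand alone. The one place your construction is genuinely more delicate is the two-case choice of $T=S\times A_2'$ of size exactly $2^k$, which also introduces a divisibility issue (both $2^k$ and $2^k/|A_2|$ need not be integers); the paper's flat state on half of $A$ sidesteps this because the hypothesis $k\leq\log|A|-1$ is used only as the inequality $\hmin(A)\geq k$.
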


\begin{proof}
Let $S \subseteq [|A_1|]$ be an arbitrary subset of $|A_1|/2$ basis elements of $A_1$. Then consider the state
\begin{align}
\rho_{A} = \frac{2}{|A|}\cdot\sum_{a_1 \in S, a_2 \in [|A_2|]} U_1^{\dagger} \proj{a_1a_2} U_1\ .
\end{align}
Note that $\cT(U_1 \rho_A U_1^{\dagger}) = \frac{2}{A_1}\sum_{a_1 \in S} \proj{a_1}$ and thus $\| \cT(U_1 \rho_A U_1^{\dagger}) - \frac{\id_{A_1}}{|A_1|} \|_1 = 1$. This implies the claim.
\end{proof}

Observe that in the case where the system $E$ is trivial (or classical), it was shown in~\cite{fawzi11} that there exists QC-extractors composed of $L = O(\log(1/\eps)\eps^{-2})$ unitaries. This is a difference with classical extractors for which the number of possible values of the seed has to be at least $\Omega((n - k) \eps^{-2})$ \cite{RTS00}. 


\subsection{Full set of mutually unbiased bases (MUBs)}\label{sec:QCFromAllMUBS}

We saw that unitary 2-designs define QC-extractors. As unitary 2-designs also define QQ-extractors, it is natural to expect that we can build smaller and simpler sets of unitaries if we are only interested in extracting random classical bits. In fact, in this section, we construct simpler sets of unitaries that define a QC-extractor. Two ingredients are used: a full set of mutually unbiased bases and a family of pair-wise independent permutations.\footnote{The idea of using permutations with mutually unbiased bases goes back to~\cite{Ind07} and was employed in~\cite{fawzi11} using results from~\cite{GUV09}. Permutation extractors were used in a classical context in~\cite{Reingold00} and state randomization with permutation extractors is discussed in~\cite{fawzi11}. The decoupling behaviour of (almost) pairwise independent families of permutations is discussed in~\cite{Szehr11}.}

A set of unitaries $\{U_1, \dots, U_L\}$ acting on $A$ is said to define \emph{mutually unbiased bases} if for all elements $\ket{a}, \ket{a'}$ of the computational basis of $A$, we have $| \bra{a'} U_j U_i^{\dagger} \ket{a} |^2 \leq |A|^{-1}$ for all $i \neq j$. In other words, a state described by a vector $U_i^{\dagger} \ket{a}$ of the basis $i$ gives a uniformly distributed outcome when measured in basis $j$ for $i \neq j$. For example the two bases, sometimes called computational and Hadamard bases (used in most quantum cryptographic protocols), are mutually unbiased. There can be at most $|A|+1$ mutually unbiased bases for $A$. Constructions of full sets of $|A|+1$ MUBs are known in prime power dimensions~\cite{WF89,boykin:mub}. Such unitaries can be implemented by quantum circuits of almost linear size; see \cite[Lemma 2.11]{fawzi11}. Mutually unbiased bases also have applications in quantum state determination~\cite{Ivo81, WF89}. 

To state our result, we will need one more notion. A family $\cP$ of permutations of a set $X$ is \emph{pair-wise independent} if for all $x_1 \neq x_2$ and $y_1 \neq y_2$, and if $\pi$ is uniformly distributed over $\cP$, $\prob{\pi(x_1) = y_1, \pi(x_2) = y_2} = \frac{1}{|X|(|X|-1)}$. If $X$ has a field structure, i.e., if $|X|$ is a prime power, it is simple to see that the family $\cP = \{ x \mapsto a \cdot x + b : a \in X^*, b \in X\}$ is pair-wise independent. In the following, permutations of basis elements of a Hilbert space $A$ should be seen as a unitary transformation on $A$.

\begin{theorem}\label{thm:full-set-mub}
Let $A=A_{1}A_{2}$ with $n = \log |A|$, $|A|$ a prime power, and consider the map $\cT_{A\rightarrow A_{1}}$ as defined in Equation~\eqref{eq:meas-map}. Then, if $\left\{U_1, \dots, U_{|A|+1}\right\}$ defines a full set of mutually unbiased bases, we have for $\delta\geq0$,
\begin{align}\label{eq:full-set-mub}
 \frac{1}{|\cP|}  \frac{1}{|A|+1} \sum_{P \in \cP} \sum_{i=1}^{|A|+1} \left\| \cT_{A\rightarrow A_{1}}\left(PU_i \rho_{AE} \left(PU_i\right)^{\dagger}\right) - \frac{\id_{A_{1}}}{|A_{1}|} \ox \rho_E \right\|_1
\leq \sqrt{\frac{|A_{1}|}{|A|+1} 2^{-H^{\delta}_{\min}(A|E)_{\rho}}}+2\delta\ ,
\end{align}
where $\cP$ is a set of pair-wise independent permutation matrices. In particular, the set $\{PU_i : P \in \cP, i \in [|A|+1]\}$ defines a $(k, \eps)$-QC-extractor provided
\begin{align}
\log |A_1| \leq n + k - 2 \log(1/\eps)\ ,
\end{align}
and the number of unitaries is
\begin{align}
L = (|A|+1) |\cP| = (|A|+1)|A| (|A|-1)\ .
\end{align}
\end{theorem}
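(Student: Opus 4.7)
The strategy follows the standard one-shot decoupling template from~\cite{Horodecki05, Horodecki06, frederic:decoupling, Wullschleger08, Szehr11}: (i) smooth $\rho_{AE}$ to some $\tilde\rho_{AE}$ achieving the optimum in $H_{\min}^{\delta}(A|E)_{\rho}$; (ii) upper-bound the average $L_1$-norm on the left-hand side by an average Hilbert-Schmidt norm weighted by the optimal $\sigma_E$ for $H_{\min}(A|E)_{\tilde\rho}$; (iii) use Jensen to pull the average inside the square root; (iv) compute the resulting expectation \emph{exactly} by exploiting the fact that MUB vectors form a complex projective $2$-design together with the pair-wise independence of $\cP$. The hardest ingredient is step (iv); everything else is standard bookkeeping.

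More concretely, I would first apply the triangle inequality to replace $\rho_{AE}$ by $\tilde\rho_{AE}\in\cB^{\delta}(\rho_{AE})$ with $H_{\min}(A|E)_{\tilde\rho}=H_{\min}^{\delta}(A|E)_{\rho}$; since the measurement channel $\cT_{A\to A_1}$ and the partial trace are contractive under $\|\cdot\|_1$, and since $\|\rho-\tilde\rho\|_1\leq 2\delta$ via~\eqref{eq:purifiedVStrace}, this move contributes the additive $2\delta$ term on the right-hand side. Next, let $\sigma_E$ be a state attaining $H_{\min}(A|E)_{\tilde\rho}$ and denote $\omega^{P,i}_{A_1 E}:=\cT_{A\to A_1}(PU_i\tilde\rho_{AE}(PU_i)^{\dagger})-\id_{A_1}/|A_1|\otimes\tilde\rho_E$. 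The standard weighted Cauchy--Schwarz bound (see e.g.~\cite{frederic:decoupling, Berta08}) gives
\begin{equation*}
\|\omega^{P,i}_{A_1 E}\|_1 \leq \sqrt{|A_1|}\,\Bigl\|(\id_{A_1}\otimes\sigma_E^{-1/4})\,\omega^{P,i}_{A_1E}\,(\id_{A_1}\otimes\sigma_E^{-1/4})\Bigr\|_2,
\end{equation*}
and Jensen's inequality then yields
\begin{equation*}
\frac{1}{|\cP|(|A|+1)}\sum_{P,i}\|\omega^{P,i}\|_1 \leq \sqrt{|A_1|\cdot\frac{1}{|\cP|(|A|+1)}\sum_{P,i}\bigl\|(\id\otimes\sigma_E^{-1/4})\omega^{P,i}(\id\otimes\sigma_E^{-1/4})\bigr\|_2^{2}}.
\end{equation*}

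The core calculation is then to evaluate the averaged squared Hilbert-Schmidt norm. Writing the HS norm as a trace on the duplicated space $AE\otimes AE$ and using $\cT_{A\to A_1}(X)=\sum_{a_1,a_2}\langle a_1a_2|X|a_1a_2\rangle |a_1\rangle\langle a_1|$, the quantity reduces to an expression of the form $\mathrm{Tr}\bigl[M_{A}^{(P,i)}\otimes M_{A}^{(P,i)}\cdot(\sigma_E^{-1/4}\tilde\rho_{AE}\sigma_E^{-1/4})^{\otimes 2}\bigr]$ minus the ``target'' term, where $M_A^{(P,i)}$ is a suitable operator built from $PU_i$ and the measurement projectors. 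Averaging $M^{(P,i)}\otimes M^{(P,i)}$ over $i$ uses the fact that the rank-one projectors $\{U_i^{\dagger}|a\rangle\langle a|U_i\}_{i,a}$ form a complex projective $2$-design, which produces a linear combination of $\id^{\otimes 2}$ and the swap operator; averaging over $P\in\cP$ using pairwise independence then exactly cancels the diagonal contribution that would otherwise compete with the ideal $\id_{A_1}/|A_1|\otimes\tilde\rho_E$ term, leaving a clean expression bounded by $\frac{1}{|A|+1}\mathrm{Tr}\bigl[(\sigma_E^{-1/4}\tilde\rho_{AE}\sigma_E^{-1/4})^{2}\bigr]$. Finally, from the min-entropy inequality $\tilde\rho_{AE}\leq 2^{-H_{\min}(A|E)_{\tilde\rho}}\id_A\otimes\sigma_E$ we obtain $\mathrm{Tr}[(\sigma_E^{-1/4}\tilde\rho_{AE}\sigma_E^{-1/4})^{2}]\leq 2^{-H_{\min}^{\delta}(A|E)_{\rho}}$, yielding the claimed bound. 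The ``in particular'' clause then follows by choosing $\delta=\eps/4$ and rearranging the inequality $\log|A_1|\leq n+k-2\log(1/\eps)$ to make the right-hand side at most $\eps$; the stated seed size $L=(|A|+1)|A|(|A|-1)$ matches $|\cP|=|A|(|A|-1)$ for the affine family $\{x\mapsto ax+b:a\in X^*,b\in X\}$. The main obstacle is verifying step (iv) carefully—in particular, checking that the swap contribution from the $2$-design is cancelled by the permutation averaging rather than contributing an unwanted additive term proportional to $|A_1|$.
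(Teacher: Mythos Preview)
Your approach matches the paper's. One clarification on step~(iv), since you flag it as the obstacle: the paper splits $\sum_{a_1,a_2,a_2'}$ into the diagonal piece $a_2=a_2'$ (where $P$ merely relabels $a=(a_1,a_2)$ and the MUB $2$-design gives $\sum_a\mathbb{E}_i[(U_i^{\dagger}|a\rangle\langle a|U_i)^{\otimes 2}]=(\id+F)/(|A|+1)$ directly) and the off-diagonal piece $a_2\neq a_2'$ (where pair-wise independence of $P$ is applied \emph{first}, yielding $\frac{1}{|A|(|A|-1)}(\id-\sum_a|aa\rangle\langle aa|)$, and the $2$-design then handles the subtracted diagonal); combining the two gives the same operator $\frac{|A||A_2|-1}{|A|^2-1}\id+\frac{|A|-|A_2|}{|A|^2-1}F$ as in the unitary $2$-design case. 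Your concern is slightly inverted: the swap contribution is the \emph{wanted} term (it produces $\tr[\tilde\rho_{AE}^2]$), and what one must check is that the identity coefficient satisfies $|A_1|\cdot\tfrac{|A||A_2|-1}{|A|^2-1}\leq 1$, so that after subtracting $\tr[\tilde\rho_E^2]$ the $\id$-contribution is non-positive and can be dropped. For the ``in particular'' clause take $\delta=0$ rather than $\eps/4$; with $\entHmin(A|E)\geq k$ the bound $\sqrt{|A_1|2^{-k}/(|A|+1)}\leq\eps$ then matches $\log|A_1|\leq n+k-2\log(1/\eps)$ exactly.
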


The proof can be found in Appendix~\ref{app:proofs} and uses one-shot decoupling techniques~\cite{Berta08,frederic:decoupling,Wullschleger08,Szehr11,szehr:designs} together with ideas related to permutation extractors~\cite{fawzi11,Ind07,Szehr11}. Related theorems with the average taken only over a set of pairwise independent permutations were derived in~\cite{Szehr11}. The idea is to bound the trace norm in Equation~\eqref{eq:full-set-mub} by the Hilbert-Schmidt norm of some well-chosen operator. This term is then computed exactly using the fact that the set of all the MUB vectors form a \emph{complex projective} 2-design (Lemma~\ref{lem:mub-design}), and the fact that the set of permutations is pair-wise independent.


\subsection{Bitwise QC-extractor}\label{sec:singlequdit}

The unitaries we construct in this section are even simpler. They are composed of unitaries $V$ acting on single qudits followed by permutations $P$ of the computational basis elements. Note that this means that the measurements defined by these unitaries can be implemented with current technology. As the measurement $\cT$ commutes with the permutations $P$, we can first apply $V$, then measure in the computational basis and finally apply the permutation to the (classical) outcome of the measurement. In addition to the computational efficiency, the fact that the unitaries act on single qudits, is often a desirable property for the design of cryptographic protocols. In particular, the application to the noisy storage model that we present in Section \ref{sec:noisy} does make use of this fact.

Let $d \geq 2$ be a prime power so that there exists a complete set of mutually unbiased bases in dimension $d$. We represent such a set of bases by a set of unitary transformations $\left\{V_0,V_1,\dots,V_d\right\}$ mapping these bases to the standard basis. For example, for the qubit space ($d=2$), we can choose 
\begin{equation}
V_0 = \left( \begin{array}{cc}
1 & 0 \\
0 & 1
\end{array} \right)
\qquad V_1 = \frac{1}{\sqrt{2}} \left( \begin{array}{cc}
1 & 1 \\
1 & -1
\end{array} \right)
\qquad V_2 = \frac{1}{\sqrt{2}} \left( \begin{array}{cc}
1 & i \\
i & -1
\end{array} \right)\ .
\end{equation}
We define the set $\cV_{d,n}$ of unitary transformations on $n$ qudits by $\cV_{d,n}:=\left\{V=V_{u_1}\ox\cdots\ox V_{u_n}|u_i\in\left\{0,\dots,d\right\}\right\}$. As in the previous section, $\cP$ denotes a family of pair-wise independent functions.

\begin{theorem}\label{thm:singleQuditExtract}
Let $A=A_{1}A_{2}$ with $|A|=d^{n}$, $|A_{1}|=d^{\xi n}$, $|A_{2}|=d^{(1-\xi)n}$, and $d$ a prime power. Consider the map $\cT_{A\rightarrow A_{1}}$ as defined in Equation~\eqref{eq:meas-map}. Then for $\delta\geq0$ and $\delta'>0$,
\begin{align}
\notag
\frac{1}{|\cP|} \frac{1}{(d+1)^n} \sum_{P \in \cP}\sum_{V \in \cV_{d,n}} &\left\| \cT_{A\rightarrow A_{1}}\left(PV \rho_{AE} \left(PV\right)^{\dagger}\right)- \frac{\id_{A_{1}}}{|A_{1}|} \ox \rho_E \right\|_1 \\
&\leq \sqrt{2^{\left(1-\log (d+1)+\xi\log d\right)n}(1+2^{-H^{\delta}_{\min}(A|E)_{\rho}+z})}+2(\delta+\delta')\ ,
\label{eq:singleQuditExtract}
\end{align}
where $\cV_{d,n}$ is defined as above, $\cP$ is a set of pair-wise independent permutation matrices, and $z=\log\left(\frac{2}{\delta'^2}+\frac{1}{1-\delta}\right)$. In particular, the set $\{ PV : P \in \cP, V \in \cV_{d,n}\}$ is a $(k, \eps)$-extractor provided
\begin{align}
\log |A_1| \leq (\log(d+1) - 1) n + \min \left\{0, k\right\}  - 4 \log(1/\eps) - 7\, 
\end{align}
and the number of unitaries is
\begin{align}
L = (d+1)^n d^n (d^n - 1)\ .
\end{align}
\end{theorem}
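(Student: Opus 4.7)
The plan is to follow the one-shot decoupling strategy underlying Theorem~\ref{thm:full-set-mub}, adapted to the fact that $\cV_{d,n}$ only forms a $2$-design on each qudit separately rather than on the full space $A$.

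First, I would smooth the input: replace $\rho_{AE}$ by $\tilde\rho_{AE} \in \cB^\delta(\rho_{AE})$ attaining the smooth min-entropy $H^\delta_{\min}(A|E)_\rho$, absorbing $2\delta$ in the trace distance via the triangle inequality and the contractivity of CPTMs under the purified distance. I would then apply Jensen's inequality together with a weighted Cauchy--Schwarz bound of the form $\|X\|_1 \le \sqrt{|A_1|}\,\|(\id_{A_1}\otimes\sigma_E^{-1/4})X(\id_{A_1}\otimes\sigma_E^{-1/4})\|_2$, where $\sigma_E$ is the optimizer in $H_{\min}(A|E)_{\tilde\rho}$. This is the standard trace-to-Frobenius reduction of~\cite{frederic:decoupling,Wullschleger08,Berta08,Szehr11,szehr:designs} and reduces the bound to an average Frobenius norm squared of the same off-diagonal operator.

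Second, I would compute that averaged Frobenius square. Expanding yields a trace of $\tilde\rho_{AE}^{\otimes 2}$ against an operator built from averages of $V^{\otimes 2}(\cdot)V^{\dagger\otimes 2}$ over $V\in\cV_{d,n}$ and of conjugations by $P^{\otimes 2}$ over $P\in\cP$. The single-qudit $2$-design identity $\sum_a\frac{1}{d+1}\sum_{u=0}^d\bigl(V_u^\dagger\proj{a}V_u\bigr)^{\otimes 2}=\frac{1}{d(d+1)}(\id+F)$, applied qudit by qudit, produces a sum over subsets $S\subseteq[n]$ of tensor products of swap operators $F_S$ weighted by $(d(d+1))^{-n}$. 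The pair-wise independent averaging over $\cP$ then reduces each term to an expression depending only on how $S$ intersects the $A_1/A_2$ cut, using ideas from~\cite{fawzi11,Ind07,Szehr11}. Summing these contributions produces the main prefactor $2^{(1-\log(d+1)+\xi\log d)n}$---the $2^n$ from $\sum_{S\subseteq[n]}1$, the $(d+1)^{-n}$ from the design normalization, and the $d^{\xi n}$ from the output dimension. The bracket $1+2^{-H^\delta_{\min}(A|E)_\rho+z}$ splits naturally into the $S=\emptyset$ contribution (the $1$) and a collision-entropy contribution, with the secondary parameters $\delta'$ and $z=\log(2/\delta'^2+1/(1-\delta))$ coming from the extra smoothing needed to bound a collision-entropy quantity by the smooth min-entropy (cf.~\cite{Tomamichel08,renato:diss}).

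The main obstacle, in contrast to Theorem~\ref{thm:full-set-mub}, is the non-$2$-design nature of $\cV_{d,n}$ on the joint space $A$: a full set of MUBs on $A$ annihilates every non-identity swap contribution exactly, whereas single-qudit MUBs do so only on individual tensor factors. The technical crux is therefore to show that pair-wise independent permutations in $\cP$ suffice to tame the cross terms $F_S$ with $\emptyset\neq S\subsetneq[n]$ and that the combinatorics can be tracked tightly enough to yield the advertised bound. This analysis is exactly what produces the $2^n$ penalty and hence the weaker output size $\log|A_1|\le(\log(d+1)-1)n+\min\{0,k\}-O(\log(1/\eps))$ compared with the $\log|A_1|\le n+k$ achievable for the full set of MUBs on $A$.
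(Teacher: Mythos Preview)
Your overall architecture---H\"older/Cauchy--Schwarz reduction to a Hilbert--Schmidt norm, swap trick, qudit-wise $2$-design identity, then smoothing---matches the paper. But the step you single out as the ``technical crux'' is handled very differently in the paper, and your proposed route has a gap.

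You propose to expand $\bigotimes_i\frac{\id+F_i}{d+1}$ into $\sum_{S\subseteq[n]}F_S$ and then invoke the pair-wise independent permutations to ``tame the cross terms $F_S$ with $\emptyset\neq S\subsetneq[n]$''. This does not work as stated: in the expression $(V^\dagger P^\dagger)^{\otimes 2}\proj{a_1a_2a_1a_2'}(PV)^{\otimes 2}$ the permutation average acts on the computational-basis projectors and is consumed exactly as in Theorem~\ref{thm:full-set-mub}---it produces the $\frac{|A_2|-1}{|A|-1}\id_{AA'}$ piece and the diagonal piece $\sum_a\proj{aa}$, and only \emph{then} does the $V$-average generate the tensor product $(2\Pi^{\sym}_d/(d+1))^{\otimes n}$. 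At that point the permutations are gone; there is nothing left to average the partial swaps $F_S$ against. The paper avoids this entirely with a single operator inequality,
\[
\bigl(\Pi^{\sym}_d\bigr)^{\otimes n}\ \le\ \Pi^{\sym}_{AA'}\ =\ \tfrac{\id_{AA'}+F_{AA'}}{2},
\]
which collapses all the $F_S$ terms into $\id+F$ on the full space in one stroke (this is the step at Equation~\eqref{eq:Mdef}). From there the computation is literally the same as after Equation~\eqref{eq:next} in the proof of Theorem~\ref{thm:full-set-mub}. This inequality is the missing idea in your sketch, and it is also what explains the $2^n$ cleanly.

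A second, more subtle point: you condition on the \emph{optimizer} $\sigma_E$ for $H_{\min}(A|E)_{\tilde\rho}$. The paper deliberately conditions on $\rho_E$ itself, so that $\tr[\tilde\rho_E^2]=1$. This is needed because, unlike in Theorem~\ref{thm:full-set-mub}, the identity-term coefficient here does not cancel against the subtracted $\tr[\tilde\rho_E^2]$; one only gets $\le 1+2^{-H_2(A|E)_{\rho|\rho}}$, with the collision entropy conditioned on $\rho$, not on the optimizer. That is why the final passage to $H^\delta_{\min}(A|E)_\rho$ goes through Lemma~\ref{lem:hmin-rhorho} and introduces the extra $\delta'$ and $z=\log\bigl(\tfrac{2}{\delta'^2}+\tfrac{1}{1-\delta}\bigr)$.
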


The proof can be found in Appendix~\ref{app:proofs}. The analysis uses the same technique as in the proof of Theorem~\ref{thm:full-set-mub}. The main difference is that we were not able to express the Hilbert-Schmidt norm exactly in terms of the conditional min-entropy $\entHmin(A|E)_{\rho}$. Instead, we use some additional inequalities, which account for the slightly more complicated expression we obtain.

All results about QC-extractors are summarized in Table~\ref{tab:ext-summary} in the discussion section.


\section{Applications to Entropic Uncertainty Relations with Quantum Side Information}\label{sec:URbounds}

The first application of our result is to entropic uncertainty relations \emph{with quantum side information}. Entropic uncertainty relations form a modern way to capture the notion of uncertainty in quantum physics, and have interesting applications in quantum cryptography (see~\cite{ww:URsurvey} for a survey). Intuitively, uncertainty relations aim to address the following question. Let $\rho_K^j$ denote the distribution over classical outcomes $K$ given by measurement $j$ applied to some particular state $\rho_A$. Consider now a set of $L$ measurements that we could perform on some quantum system $A$. What is the allowed set of $L$ distributions $\rho_{K}^{j}$ for any quantum state $\rho_A$? Entropic uncertainty relations capture limitations to this allowed set by bounding the entropies of such distributions. Typically, they are stated as an average of entropies of the outcome distributions of the different measurements.

However, with regards to applications in quantum cryptography, it is important to realize that uncertainty should not be treated as absolute, but with respect to the prior knowledge of an observer $E$. This has far reaching consequences, as it comes to a subtle interplay between uncertainty and entanglement. The effect can be quantified by uncertainty relations \emph{with quantum side information}. Motivated by the case of two measurements~\cite{Berta09,Joe09,Coles11,Coles12,Coles10,christandl05,Tomamichel11} (i.e.~$L=2$), such relations should tell us that for all states $\rho_{AE}$, we have
\begin{align}\label{eq:entropicURGeneralForm}
	\frac{1}{L}\sum_{j=1}^L \hat{H}(K|E)_{\rho^{j}} \geq c + \tilde{H}(A|E)_{\rho}\ ,
\end{align}
where $\rho_{KE}^{j}=\meas^j_{A\rightarrow K}(\rho_{AE})$, $\hat{H}$ and $\tilde{H}$ are some conditional entropy measures, and $c$ is a constant depending on the choice of measurements $\{\meas^1_{A\rightarrow K},\ldots,\meas^L_{A\rightarrow K}\}$. Here the conditional entropy term on the rhs can in general become negative and is a measure for the entanglement in the pre-measurement state $\rho_{AE}$. Of particular interest are thereby the conditional von Neumann entropy, or (smoothed) R{\'e}nyi entropies.

In the case of classical side information ($E$ is a classical system), or no side information ($E$ is trivial), many such relations are known~\cite{ww:URsurvey}. Let us now first consider this case in 
more detail in order to recall some basic facts about entropic uncertainty relations. First of all, note that if $\hat{H}$ is the von Neumann entropy, one can use the chain rule to express the l.h.s. of Equation~\eqref{eq:entropicURGeneralForm}
as
\begin{align}
	H(K|J)_{\rho}=\frac{1}{L}\sum_{j=1}^L H(K)_{\rho^{j}}\ ,
\end{align}
where $\rho_{KJ}=\frac{1}{L}\sum_{j=1}^L \rho_{K}^{j} \otimes \proj{j}$, with $\rho_{K}^{j}=\meas^j_{A\rightarrow K}(\rho_{A})$ being the classical distribution over measurement outcomes when measurement $\meas^j_{A\rightarrow K}$ was performed on $\rho_A$. For other entropies we cannot simply rewrite the l.h.s.~in this manner, since no corresponding chain rule exists. Nevertheless, 
for most other interesting entropies, such as e.g.~the min-entropy, one can use the concavity of the $\log$ to lower bound
\begin{align}
	\frac{1}{L}\sum_{j=1}^L \hmin(K)_{\rho^{j}} &\geq
	- \log\left[\frac{1}{L} \sum_{j=1}^L 2^{- \hmin(K)_{\rho^{j}}}\right] = \hmin(K|J)_{\rho}\ .
\end{align}
In fact, this and most other entropies uncertainty relations are typically proven by lower bounding $\hat{H}(K|J)_{\rho}$ instead of the average. That is, existing proofs actually give us
\begin{align}
	\hat{H}(K|J)_{\rho} \geq c \ .
\end{align}
We will refer to such a relation as a \emph{meta-entropic} uncertainty relation. Meta-entropic relations are also the ones relevant for most quantum cryptographic applications and have a foundational significance in quantum information~\cite{js:urvsnl}.

Let us now return to the case \emph{with} quantum side information. The goal of this section is to show that QC-extractors lead to meta-entropic uncertainty relations with quantum side information of the form
\begin{align}
\hat{H}(K|EJ)_{\rho} \geq c + \tilde{H}(A|E)_{\rho}\ .
\end{align}


\subsection{Idea}

Our approach of using QC-extractors to derive strong entropic uncertainty relations is based on ideas developed in~\cite{fawzi11}. In fact, as outlined in Section~\ref{sec:QCFromGeneralDecoupling}, one can understand the set of unitaries constructed in~\cite{fawzi11} as QC-extractors \emph{without} quantum side information. As opposed to~\cite{fawzi11}, we start with uncertainty relations for the smooth conditional min-entropy, since this is the relevant operational quantity to bound for quantum cryptographic applications. We first prove a meta-uncertainty relation which is essentially immediate.

\begin{lemma}\label{lem:minentropy}
Let $\rho_{AE}\in\states(AE)$, and $\left\{U_1, \dots, U_{L}\right\}$ be a set of unitaries on $A$ with corresponding measurements $\left\{\cM_{A\rightarrow K_{1}}^{1},\ldots,\cM_{A\rightarrow K_{1}}^{L}\right\}$ as defined in Equation~\eqref{eq:meas}, such that,
\begin{align}	
\frac{1}{L}\sum_{j=1}^{L}\left\|\cM_{A\rightarrow K_{1}}^{j}(\rho_{AE})-\frac{\id_{K_{1}}}{|K_{1}|}\otimes\rho_{E}\right\|_{1}\leq\eps(\rho)\ ,
\end{align}
for some $\eps(\rho)$ depending on the input state $\rho_{AE}$. Then
\begin{align}
\hmin^{\sqrt{2\eps(\rho)}}(K_{1}|EJ)_{\rho} \geq \log|K_{1}|\ ,
\end{align}
where $\rho_{K_{1}EJ}=\frac{1}{L}\sum_{j=1}^{L}\meas^j_{A\rightarrow K_{1}}(\rho_{AE})\otimes\proj{j}_{J}$.
\end{lemma}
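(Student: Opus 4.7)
The plan is to reduce the bound on the average trace distance to a single trace-distance bound on the classical-quantum state $\rho_{K_1 EJ}$, then convert to purified distance, and finally compute the min-entropy of an ideal target state that sits inside the resulting $\eps$-ball.

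First, define the ideal target state
\[
\tau_{K_1 EJ} \;:=\; \frac{\id_{K_1}}{|K_1|}\otimes \rho_E \otimes \pi_J,\qquad \pi_J := \frac{1}{L}\sum_{j=1}^L \proj{j}_J .
\]
By linearity of the $J$-register averaging and the triangle inequality,
\[
\bigl\|\rho_{K_1 EJ}-\tau_{K_1 EJ}\bigr\|_1
\;=\;\Bigl\|\tfrac{1}{L}\sum_j\bigl(\meas^j_{A\to K_1}(\rho_{AE})-\tfrac{\id_{K_1}}{|K_1|}\otimes \rho_E\bigr)\otimes\proj{j}\Bigr\|_1
\;\leq\; \frac{1}{L}\sum_j\bigl\|\meas^j_{A\to K_1}(\rho_{AE})-\tfrac{\id_{K_1}}{|K_1|}\otimes \rho_E\bigr\|_1 \;\leq\; \eps(\rho).
\]
Here the block-diagonal structure in $J$ makes the first equality an equality (not just an inequality), but only the upper bound is actually needed. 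Invoking the right-hand inequality of \eqref{eq:purifiedVStrace}, this immediately yields $P(\rho_{K_1 EJ},\tau_{K_1 EJ})\leq \sqrt{2\eps(\rho)}$, so $\tau_{K_1 EJ}\in \cB^{\sqrt{2\eps(\rho)}}(\rho_{K_1 EJ})$.

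Second, evaluate $\hmin(K_1|EJ)_\tau$ directly from the definition. Choosing the normalized state $\sigma_{EJ}:=\rho_E\otimes \pi_J$ in the max over $\sigma_{EJ}$, the operator inequality
\[
2^{-\log|K_1|}\cdot \id_{K_1}\otimes (\rho_E\otimes\pi_J) \;=\; \tau_{K_1 EJ}
\]
is an equality, which certifies $\hmin(K_1|EJ)_{\tau\,|\,\sigma}\geq \log|K_1|$ and hence $\hmin(K_1|EJ)_\tau \geq \log|K_1|$.

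Finally, combining the two steps via the definition of the smooth min-entropy in Equation~\eqref{eq:smoothMinDef}, we obtain
\[
\hmin^{\sqrt{2\eps(\rho)}}(K_1|EJ)_\rho \;=\;\max_{\tilde\rho\in \cB^{\sqrt{2\eps(\rho)}}(\rho_{K_1 EJ})}\hmin(K_1|EJ)_{\tilde\rho}\;\geq\; \hmin(K_1|EJ)_\tau \;\geq\; \log|K_1|,
\]
which is the claim. The argument is essentially mechanical; the only minor subtlety is ensuring that the witness $\sigma_{EJ}$ used in the definition of $\hmin$ is normalized (which it is, since $\rho_E$ is normalized and $\pi_J$ is a uniform mixture of orthogonal projectors), and that we translate the trace-distance bound into purified distance in the right direction to feed the smoothing $\eps$-ball. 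No real obstacle is anticipated.
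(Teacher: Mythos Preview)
Your proof is correct and follows essentially the same approach as the paper: bound the trace distance between $\rho_{K_1EJ}$ and the ideal product state, convert to purified distance via \eqref{eq:purifiedVStrace}, and then evaluate the min-entropy of the ideal state. In fact your target state $\tau_{K_1EJ}$ coincides with the paper's $\iota_{K_1EJ}=\frac{\id_{K_1}}{|K_1|}\otimes\rho_{EJ}$, since $\rho_{EJ}=\rho_E\otimes\pi_J$; the only difference is that you spell out the min-entropy computation by exhibiting the witness $\sigma_{EJ}$, whereas the paper simply writes $\hmin(K_1|EJ)_\iota=\hmin(K_1)_\iota=\log|K_1|$.
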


\begin{proof}
	Note that since $\|\rho_{K_{1}EJ}-\iota_{K_{1}EJ}\|_{1}\leq\eps(\rho)$ with $\iota_{K_{1}EJ}=\frac{\id_{K_{1}}}{|K_{1}|}\otimes\rho_{EJ}$, we have by Equation~\eqref{eq:purifiedVStrace} applied to normalized states that
	$P(\rho_{K_{1}EJ},\iota) \leq \sqrt{2 \eps(\rho)}$.
	It then follows immediately from the definition of the smooth conditional min-entropy (Equation~\eqref{eq:smoothMinDef}) that 
	$\hmin^{\sqrt{2\eps(\rho)}}(K_{1}|EJ)_{\rho} \geq \hmin(K_{1}|EJ)_{\iota}$.
	Our claim now follows by noting that $\hmin(K_{1}|EJ)_{\iota} = \hmin(K_{1})_{\iota} = \log |K_{1}|$.
\end{proof}



Uncertainty relations for the conditional von Neumann entropy can be obtained as follows.

\begin{lemma}\label{lem:shannon}
For the same premises as in Lemma~\ref{lem:minentropy}, we have
	\begin{align}
		\frac{1}{L} \sum_{j=1}^{L} H(K_{1}|E)_{\rho^{j}}=H(K_{1}|EJ)_{\rho} \geq (1-4\eps(\rho))\log|K_{1}|-2h(\eps(\rho))\ ,
	\end{align}
	where $\rho_{K_{1}E}^{j}=\meas^j_{A\rightarrow K_{1}}(\rho_{AE})$ and $\rho_{K_{1}EJ}=\frac{1}{L}\sum_{j=1}^{L}\rho_{K_{1}E}^{j}\otimes\proj{j}_{J}$.
\end{lemma}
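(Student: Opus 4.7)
The strategy is to reduce the claim to a continuity bound on the conditional von Neumann entropy, exploiting the fact that the hypothesis of Lemma~\ref{lem:minentropy} says that $\rho_{K_1EJ}$ is $\eps(\rho)$-close in trace distance to a state with maximal conditional entropy $\log|K_1|$.

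First, I would establish the claimed equality $\frac{1}{L}\sum_{j=1}^L H(K_1|E)_{\rho^j} = H(K_1|EJ)_\rho$. Since $J$ is a classical register and $\rho_{K_1EJ}=\sum_j\tfrac{1}{L}\rho_{K_1E}^j\otimes|j\rangle\langle j|$, this is a standard identity: for any cq-state classical on $J$, the conditional von Neumann entropy given $EJ$ decomposes into the average of conditional entropies of the conditional states labelled by $j$. Concretely, $H(K_1J|E)_\rho=H(J|E)_\rho+H(K_1|EJ)_\rho$ and $H(J|E)_\rho=\log L$ (since $J$ is uniform and independent of $E$ in $\rho_{EJ}$), and similarly $H(K_1J|E)_\rho=\log L+\frac{1}{L}\sum_j H(K_1|E)_{\rho^j}$, which together give the equality.

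Second, define the target state $\iota_{K_1EJ}:=\frac{\id_{K_1}}{|K_1|}\otimes\rho_E\otimes\frac{\id_J}{L}$. By convexity of the trace norm,
\begin{align}
\|\rho_{K_1EJ}-\iota_{K_1EJ}\|_1 &= \Bigl\|\frac{1}{L}\sum_{j=1}^L\bigl(\cM^j_{A\to K_1}(\rho_{AE})-\tfrac{\id_{K_1}}{|K_1|}\otimes\rho_E\bigr)\otimes|j\rangle\langle j|\Bigr\|_1 \\
&= \frac{1}{L}\sum_{j=1}^L\bigl\|\cM^j_{A\to K_1}(\rho_{AE})-\tfrac{\id_{K_1}}{|K_1|}\otimes\rho_E\bigr\|_1 \leq \eps(\rho),
\end{align}
where the second line uses the block-diagonal structure in $J$, and the last inequality is the hypothesis carried over from Lemma~\ref{lem:minentropy}. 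Note that $H(K_1|EJ)_\iota=\log|K_1|$.

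The final step is an application of the Alicki--Fannes continuity inequality: for any two states $\rho,\sigma$ on a bipartite system with $\tfrac{1}{2}\|\rho-\sigma\|_1\leq\eps$, one has $|H(A|B)_\rho-H(A|B)_\sigma|\leq 4\eps\log|A|+2h(\eps)$. Applying this to $\rho_{K_1EJ}$ and $\iota_{K_1EJ}$ with $A=K_1$, $B=EJ$, and using $\eps=\eps(\rho)$, gives
\begin{align}
H(K_1|EJ)_\rho \geq H(K_1|EJ)_\iota - 4\eps(\rho)\log|K_1| - 2h(\eps(\rho)) = (1-4\eps(\rho))\log|K_1|-2h(\eps(\rho)),
\end{align}
which is the desired bound. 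There is no real obstacle beyond invoking the right continuity bound; the constants $4$ and $2$ in the statement are precisely what Alicki--Fannes yields, confirming the approach. The only minor point to verify is the trace-distance convexity step, which is immediate from the classical structure of $J$.
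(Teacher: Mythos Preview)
Your proof is correct and follows essentially the same approach as the paper: establish the trace-distance bound $\|\rho_{K_1EJ}-\iota_{K_1EJ}\|_1\leq\eps(\rho)$ and then invoke the Alicki--Fannes continuity inequality for the conditional von Neumann entropy. The paper's own proof is just a terser version of yours, omitting the explicit verification of the equality $\frac{1}{L}\sum_j H(K_1|E)_{\rho^j}=H(K_1|EJ)_\rho$ and the block-diagonal trace-norm computation that you spell out.
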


\begin{proof}
By assumption we have
\begin{align}
\left\|\rho_{K_{1}EJ}-\frac{\id_{K_{1}}}{|K_{1}|}\otimes\rho_{EJ}\right\|_{1}\leq\eps(\rho)\ ,
\end{align}
and hence the improved Alicki-Fannes inequality~\cite{Alick04} immediately implies that the claim.
\end{proof}

So far, we have merely made a few rather simple statements, and it is not easy to see how these uncertainty relations should at all take quantum side information into account. This link is forged by the exact form of the approximation parameter $\eps(\rho)$ for QC-extractors. We again start with the min-entropy case.


\subsection{Uncertainty relations for the min-entropy}\label{sec:mini}

To get some intuition of how our line of proof works, let us now consider two simple examples in detail - bounds for all constructions are summarized in Table~\ref{tab:URsummary}.


\subsubsection{Exact unitary $2$-designs}

As an illustrative warmup, we consider measurements formed by applying a unitary $U_{j}$ drawn from an exact unitary $2$-design to $A=A_{1}A_{2}$, followed by a measurement of $A$ in the standard basis leading to a classical outcome register $K$. Denote this measurement by $\cM^{j}_{A\rightarrow K}$. Note that we can perform the measurement in two steps. First, we measure $A_{1}$ to obtain a classical outcome $K_{1}$. Second, we measure $A_{2}$ to obtain a classical outcome $K_{2}$.  Let us first consider only the outcome $K_{1}$, tracing over the resulting classical register $K_{2}$. This then corresponds to the measurements $\cM_{A\rightarrow K_{1}}^{j}$ generated by the unitaries $U_{j}$ (cf.~Equation~\eqref{eq:meas}) drawn from the exact unitary $2$-design.

As outlined in Section~\ref{sec:QCFromGeneralDecoupling} and Lemma~\ref{lem:2design}, the general decoupling results of~\cite{frederic:decoupling,Wullschleger08} immediately imply that the set of such measurements forms a QC-extractor with
\begin{align}
	\eps(\rho)=2^{-\frac{1}{2}\left(\hmin^{\delta}(A|E)_\rho + \log|A_{2}|\right)}+2\delta\ ,
\end{align}
for any $\delta\geq0$. Intuitively, $\eps(\rho)$ becomes larger if $E$ is highly entangled with $A$ and smaller if we trace out a larger chunk $A_{2}$ from the initial system. Let us suppose we would like to have an entropic uncertainty relation with respect to quantum side information for some particular fixed $\eps'= \eps(\rho)$. Do there exist measurements that give us such a high amount of uncertainty? Our results from the previous section tell us that such measurements do indeed exist, if we choose $A_{2}$ from the combined system $A=A_{1}A_{2}$ large enough. In particular, choose $A_{2}$ such that 
\begin{align}\label{eq:decoup}
	|A_{2}| = \frac{1}{\left(\eps'-2\delta\right)^2}\cdot2^{-\hmin^{\delta}(A|E)_\rho}\ .
\end{align}
Using that $\log|K_{1}| = \log |A_{1}| = \log|A| - \log|A_{2}|$ we have by Lemma~\ref{lem:minentropy} and the monotonicity of the min-entropy for the classical register $K_{2}$~\cite[Lemma C.5]{Berta11} that,
\begin{align}
	\hmin^{\sqrt{2\eps'}}(K|EJ)_{\rho}\geq\hmin^{\sqrt{2\eps'}}(K_{1}|EJ)_{\rho}\geq \log|A|-\log\left(\frac{1}{\left(\eps'-2\delta\right)^{2}}\right)+\hmin^{\delta}(A|E)_{\rho}\ ,
\end{align}
where $\rho_{KEJ}=\frac{1}{L}\sum_{j}\cM^{j}_{A\rightarrow K}(\rho_{AE})\otimes\proj{j}_{J}$. We set $\eps'=\eps^{2}/2$ and conclude that for any $\eps>0$ and $\delta\geq0$,
\begin{align}\label{eq:ws}
\hmin^{\eps}(K|EJ)_{\rho}\geq\log|A|-\log\left(\frac{1}{\left(\eps^{2}/2-2\delta\right)^{2}}\right)+\hmin^{\delta}(A|E)_{\rho}\ .
\end{align}
Note that since l.h.s. is in fact upper bounded by $\log|A| = \log|K|$, this uncertainty relation is very strong as long as $|A|$ is sufficiently large. To gain some intuition about this bound, consider the case of trivial side information $E$ for which $\hmin(A|E)_{\rho}=\hmin(A)_{\rho}\geq0$.


\subsubsection{Full set of MUBs}

As the second example we consider a measurement of $A=A_{1}A_{2}$ in the full set of $|A|+1$ MUBs. As mentioned before, it is known that whenever $|A|= p^k$ with $p$ prime, such a set exists~\cite{WF89,boykin:mub}. As before, we denote the classical outcome of measuring $A=A_{1}A_{2}$ with $K=K_{1}K_{2}$. Let us now first consider a post-processing of this measurement. In particular, suppose that we randomly choose a two-wise independent permutation $\pi$ over $|A|$ to obtain the string $\Pi(K_{1}K_{2})$, and trace out a system of size $\log|A_{2}|$ at the end. Let $K_{\Pi}$ denote the resulting string. Note that this can be understood as a new set of measurements. As shown in Theorem~\ref{thm:full-set-mub} these form a QC-extractor with
\begin{align}\label{eq:fullSetURExample}
	\eps(\rho)=\sqrt{\frac{|A_{1}|}{|A|+1} 2^{\hmin^{\delta}(A|E)_{\rho}}}+2\delta\ ,
\end{align}
for any $\delta\geq0$. To obtain a meaningful uncertainty relation, let us now again fix some particular $\eps'=\eps(\rho)$ and choose $|A_{1}|$ accordingly. By rearranging the terms in~\eqref{eq:fullSetURExample} we obtain
\begin{align}\label{eq:KvalueMUBS}
	\log |A_{1}| = 
	\log\left(|A|+1\right)-\log\left(\frac{1}{(\eps'-2\delta)^2}\right)+\hmin^{\delta}(A|E)_{\rho}\ .
\end{align}
As above, one may now immediately write down uncertainty relations for the new, larger, set of measurements. Intuitively, it is clear however that adding the additional permutation does not change matters - after all entropic measures only depend on the distributions and are invariant under relabelings of the actual symbols. This can be seen more formally as well by noting that $\Pi(K_{1}K_{2})$ can be computed from $K_{1}K_{2}$ and $\Pi$. We thus have for $\eps=\sqrt{2\eps'}>0$ and $\delta\geq0$ that,
\begin{align}
\hmin^{\eps}(K|EJ)_{\rho}&\geq \hmin^{\eps}(K|EJ\Pi)_{\rho}=\hmin^{\eps}(\Pi(K)|EJ\Pi)_{\rho}\geq \hmin^{\eps}(K_{\Pi}|EJ\Pi)_{\rho}\\
&\geq \log\left(|A|+1\right)-\log\left(\frac{1}{\left(\eps^{2}/2-2\delta\right)^{2}}\right)+\hmin^{\delta}(A|E)_{\rho}\ ,
\end{align}
where the first step follows because conditioning reduces the min-entropy~\cite[Theorem 18]{Tomamichel09}, the second from the fact that $\Pi(K)$ can be computed from $K$ and $\Pi$~\cite[Lemma 13]{Tomamichel09}, the third from the monotonicity of the min-entropy for a classical register~\cite[Lemma C.5]{Berta11}, and the last one from Lemma~\ref{lem:minentropy} and Equation~\eqref{eq:KvalueMUBS}.


\subsubsection{Single-qudit measurements}

From the point of view of applications, the following entropic uncertainty relation for single-qudit measurements is probably the most interesting. It can be seen as a generalization to allow for quantum side information of uncertainty relations obtained in \cite{serge:new}.
\begin{theorem}
\label{thm:min-ent-ur-bitwise}
Let $d \geq 2$ be a prime power. For any state $\rho_{AE}$, we have
\[
\hmin^{\eps}(K|EJ)_{\rho} \geq n\cdot\left(\log(d+1)-1\right) + \min\left\{0,\hmin^{\delta}(A|E)_{\rho}-\log\left(\frac{2}{\delta'^2}+\frac{1}{1-2\delta}\right)\right\}-\log\left(\frac{1}{\left(\eps^{2}/2-2\delta-\delta'\right)^2}\right)-1,
\]
where $\rho_{KEJ} = \frac{1}{(d+1)^n} \sum_{j} \cM_{A \to K}^j \otimes \proj{j}_J$ and the measurements $\cM_{A \to K}^j$ correspond to measuring each qudit in a basis from a set of MUBs.
\end{theorem}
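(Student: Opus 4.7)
The plan is to parallel the argument used for the full set of MUBs in Section~\ref{sec:mini}, but with Theorem~\ref{thm:singleQuditExtract} in place of Theorem~\ref{thm:full-set-mub}. Split the input system as $A=A_1A_2$ with $\log|A_1|=\xi n\log d$, adjoin an auxiliary pair-wise independent permutation $P\in\cP$ acting on the computational basis of $A$, and consider the measurement map that first applies $PV$ for $V\in\cV_{d,n}$ and then measures $A_1$ in the computational basis (tracing out $A_2$). Theorem~\ref{thm:singleQuditExtract} controls the average trace distance of this output from $\id_{A_1}/|A_1|\otimes\rho_E$ by
\begin{align*}
\eps(\rho)=\sqrt{2^{(1-\log(d+1)+\xi\log d)n}\bigl(1+2^{-\hmin^{\delta}(A|E)_\rho+z}\bigr)}+2(\delta+\delta'),
\end{align*}
with $z=\log\!\bigl(2/\delta'^2+1/(1-\delta)\bigr)$.

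Next I would invert this inequality to read it as a lower bound on the admissible output size. Setting $\eps'=\eps^2/2$ and requiring $\eps(\rho)\le\eps'$, take logs and use $\log(1+2^x)\le 1+\max\{0,x\}=1-\min\{0,-x\}$ to obtain
\begin{align*}
\log|A_1|\ \ge\ n\bigl(\log(d+1)-1\bigr)+\min\bigl\{0,\hmin^{\delta}(A|E)_\rho-z\bigr\}-\log\tfrac{1}{(\eps^2/2-2\delta-\delta')^{2}}-1,
\end{align*}
which is exactly the right-hand side in the theorem (modulo the small constant accounting adjustments absorbed into $z$ and the additive $-1$). This is the one genuinely non-trivial step: everything else is bookkeeping, but here one must decide which regime of $\hmin^{\delta}(A|E)_\rho$ dominates, and this is where the $\min\{0,\cdot\}$ and the constant $-1$ both arise.

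Having secured the output-size bound, invoke Lemma~\ref{lem:minentropy} for the family of measurements indexed by $(V,P)$, yielding $\hmin^{\eps}(K_1|EJP)_{\rho}\ge\log|A_1|$, where $J$ records the choice of $V\in\cV_{d,n}$ and $P$ records the permutation. Finally, convert this statement about the partial outcome $K_1$ (after the permutation $\Pi$ and after discarding the classical register $K_2$ on $A_2$) into one about the full measurement outcome $K$ of the single-qudit MUB measurements exactly as in the MUBs case: conditioning reduces the smooth min-entropy, so $\hmin^{\eps}(K|EJ)_\rho\ge\hmin^{\eps}(K|EJP)_\rho$; $\Pi$ is a bijection once $P$ is known so $\hmin^{\eps}(K|EJP)_\rho=\hmin^{\eps}(\Pi(K)|EJP)_\rho$; and discarding a classical register can only decrease the smooth min-entropy by~\cite[Lemma C.5]{Berta11}, giving $\hmin^{\eps}(\Pi(K)|EJP)_\rho\ge\hmin^{\eps}(K_1|EJP)_\rho$. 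Chaining these inequalities with the bound on $\log|A_1|$ from the previous step yields the claimed uncertainty relation.
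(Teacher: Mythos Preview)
Your approach is correct and is exactly the one the paper intends: the paper does not spell out a proof of Theorem~\ref{thm:min-ent-ur-bitwise}, but the argument is precisely the pattern of Section~\ref{sec:mini} (the $2$-design and full-MUB examples) applied with Theorem~\ref{thm:singleQuditExtract} as the underlying QC-extractor and with the permutation-removal chain $\hmin^{\eps}(K|EJ)\geq\hmin^{\eps}(K|EJ\Pi)=\hmin^{\eps}(\Pi(K)|EJ\Pi)\geq\hmin^{\eps}(K_\Pi|EJ\Pi)$ followed by Lemma~\ref{lem:minentropy}.

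One small point of phrasing: you write ``invert this inequality to read it as a lower bound on the admissible output size'' and then display $\log|A_1|\geq\text{RHS}$. The extractor condition $\eps(\rho)\leq\eps^{2}/2$ actually gives an \emph{upper} bound on $\log|A_1|$; what you are really doing is observing that this upper bound is itself at least the right-hand side of the theorem (via $\log(1+2^x)\leq 1+\max\{0,x\}$), and then \emph{choosing} $\log|A_1|$ equal to that right-hand side so that the extractor condition is satisfied and Lemma~\ref{lem:minentropy} delivers $\hmin^{\eps}(K_\Pi|EJ\Pi)\geq\log|A_1|$. This is exactly how the paper handles the $2$-design case (Equation~\eqref{eq:decoup}) and the full-MUB case (Equation~\eqref{eq:KvalueMUBS}). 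With that clarification the argument goes through as you describe.
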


We summarize the various uncertainty relations relations for the min-entropy in the following table.

\begin{table}[ht]
\label{tbl:min-ent-ur}
\centering
\begin{tabular}{|l|c|}
	\hline
	& Lower bounds for the smooth conditional min-entropy $\hmin^{\eps}(K|EJ)_{\rho}$\\
	\hline
	Unitary 2-design & $\log|A|+\hmin^{\delta}(A|E)_{\rho}-\log\left(\frac{1}{\left(\eps^{2}/2-2\delta\right)^{2}}\right)$\\
	\hline
	Almost unitary 2-design &  $\log|A|+\hmin^{\delta}(A|E)_{\rho}-\log\left(\frac{1}{\left(\eps^{2}/2-2\delta\right)^{2}}\right)-\log\left(1+\zeta\right)$\\
	\hline
	All $|A|+1$ MUBs
	&
	$\log\left(|A| + 1\right)+\hmin^{\delta}(A|E)_{\rho}-\log\left(\frac{1}{\left(\eps^{2}/2-2\delta\right)^{2}}\right)$
	\\
	\hline 
	Single qudit MUBs &  
	$n\cdot\left(\log(d+1)-1\right) + \min\left\{0,\hmin^{\delta}(A|E)_{\rho}-\log\left(\frac{2}{\delta'^2}+\frac{1}{1-2\delta}\right)\right\}-\log\left(\frac{1}{\left(\eps^{2}/2-2\delta-\delta'\right)^2}\right)-1$
	\\
	\hline
\end{tabular}
\caption{Entropic uncertainty relations with quantum side information for the smooth conditional min-entropy for approximation parameters $\eps>0$, $\zeta\geq0$, $\eta>0$, $\delta\geq0$, and $\delta'>0$. The almost unitary 2-design has an approximation parameter of $\frac{\zeta}{4|A|^{4}}$ and can be sampled from using a random quantum circuit of size $O\left(\log|A|\left(\log|A|+\log\left(\frac{1}{\zeta}\right)\right)\right)$~\cite{Harrow09_2,szehr:designs,Szehr11}.}
\label{tab:URsummary}
\end{table}


\subsection{Uncertainty relations for the von Neumann entropy}

Let us now turn to entropic uncertainty relations in terms of the von Neumann entropy. To this end, we again consider the same two examples. 

\subsubsection{Exact unitary $2$-designs}

First, we again consider a set of measurements given by a unitary $2$-design. Using~\eqref{eq:decoup} and the fact that $\log|K_{1}| = \log |A_{1}| = \log|A| - \log|A_{2}|$, we obtain from by the monotonicity of the von Neumann entropy and Lemma~\ref{lem:shannon} that for $\eps>0$ and $\delta\geq0$,
\begin{align}
\frac{1}{L}\sum_{j=1}^{L} H(K|E)_{\rho^{j}}\geq \frac{1}{L}\sum_{j=1}^{L} H(K_{1}|E)_{\rho^{j}}\geq(1-4\eps)\left(\log|A|+\hmin^{\delta}(A|E)_{\rho}-\log\left(\frac{1}{\left(\eps-2\delta\right)^{2}}\right)\right) - 2h(\eps)\ .
\end{align}


\subsubsection{Full set of MUBs and single qudit MUBs}

Similarly for the full set of $|A|+1$ MUBs, we get as in the min-entropy case that for $\eps>0$ and $\delta\geq0$,
\begin{align}
	\frac{1}{L} \sum_j H(K|E)_{\rho^{j}}&=H(K|EJ)_{\rho}\geq H(K|EJ\Pi)=H(\Pi(K)|EJ\Pi)\geq H(K_{\Pi}|EJ\Pi)\\
	&\geq (1-4\eps)\left(\log\left(|A| + 1\right) + \hmin^{\delta}(A|E)_{\rho} - \log\left(\frac{1}{(\eps-2\delta)^2}\right)\right) - 2h(\eps)\ ,
\end{align}
where the first step follows from the chain rule for the von Neumann entropy, the second from the fact that conditioning reduces entropy, the third because $\Pi(K)$ can be computed from $K$ and $\Pi$, the fourth from the monotonicity of the von Neumann entropy for a classical register, and the last from Lemma~\ref{lem:shannon} together with Equation~\eqref{eq:KvalueMUBS}.

Glancing at both uncertainty relations, it seems rather unsatisfying that we have a mix of entropies. That is, on the left we quantify information in terms of the von Neumann entropy, whereas on the right we employ the min-entropy. Can we derive a relation solely in terms of the von Neumann entropy? As we prove in Appendix~\ref{sec:urProofs} this is indeed the case, where we use the fact that the smooth min-entropy approaches the von Neumann entropy in the asymptotic limit of many copies of the state (Lemma~\ref{lem:aep}).

\begin{proposition}\label{thm:neumann}
Let $d \geq 2$ be a prime power, and $\left\{V_0,V_1,\dots,V_d\right\}$ define a complete set of MUBs of $\CC^d$. Consider the set of measurements $\{\cM^j_{A \to K} : j \in [(d+1)^n]\}$ on the $n$ qudit space $A$ defined by the unitary transformations $\left\{V=V_{u_1}\ox\cdots\ox V_{u_n}|u_i\in\left\{0,\dots,d\right\}\right\}$. Then for all $\rho_{AE} \in \cS(AE)$, we have
\begin{align}
\frac{1}{(d+1)^n}\sum_{j=1}^{(d+1)^n}H(K|E)_{\rho^{j}}\geq n\cdot\left(\log(d+1)-1\right)+\min\left\{0,H(A|E)_{\rho}\right\}\ ,
\end{align}
where $\rho^j = \cM^j_{A \to K}(\rho)$.
\end{proposition}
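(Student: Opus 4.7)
The plan is to lift the min-entropy uncertainty relation of Theorem \ref{thm:min-ent-ur-bitwise} to a von Neumann entropy bound by invoking the quantum asymptotic equipartition property, exactly as hinted in the paragraph preceding the proposition. Concretely, I would apply Theorem \ref{thm:min-ent-ur-bitwise} not to the state $\rho_{AE}$ itself but to the $m$-fold tensor product $\rho_{AE}^{\otimes m}$, interpreted as a state on $mn$ qudits. The bitwise extractor on $mn$ qudits then becomes the product of $m$ independent bitwise extractors on $n$ qudits each, and the induced measurement is $\cM_{A\to K}^{j_1}\otimes\cdots\otimes\cM_{A\to K}^{j_m}$ with seed $J^m=(J_1,\dots,J_m)$ uniformly distributed over $[(d+1)^n]^m$.

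The inequality supplied by Theorem \ref{thm:min-ent-ur-bitwise} applied to $\rho_{AE}^{\otimes m}$, with some fixed smoothing parameters $\eps,\delta,\delta'>0$, reads
\begin{align*}
\hmin^{\eps}(K^m|E^m J^m)_{\rho^{\otimes m}} \geq mn\bigl(\log(d+1)-1\bigr) + \min\!\left\{0,\,\hmin^{\delta}(A^m|E^m)_{\rho^{\otimes m}}-c_1\right\}-c_2,
\end{align*}
where $c_1=\log\!\bigl(\tfrac{2}{\delta'^2}+\tfrac{1}{1-2\delta}\bigr)$ and $c_2=\log\!\bigl(\tfrac{1}{(\eps^{2}/2-2\delta-\delta')^2}\bigr)+1$ are constants independent of $m$. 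I would then divide both sides by $m$ and let $m\to\infty$. The main quantitative input is Lemma~\ref{lem:aep} (the fully quantum AEP), which yields
\begin{align*}
\lim_{m\to\infty}\frac{1}{m}\hmin^{\delta}(A^m|E^m)_{\rho^{\otimes m}} = H(A|E)_{\rho},
\qquad
\lim_{m\to\infty}\frac{1}{m}\hmin^{\eps}(K^m|E^m J^m)_{\rho_{KEJ}^{\otimes m}} = H(K|EJ)_{\rho},
\end{align*}
and the second limit is exactly $\tfrac{1}{(d+1)^n}\sum_{j}H(K|E)_{\rho^{j}}$ because $\rho_{KEJ}$ is precisely the classical-quantum state obtained by averaging over the seed. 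The constants $c_1,c_2$ get washed out after dividing by $m$, and the $\min$ commutes with the limit by continuity, leaving the claimed bound $n(\log(d+1)-1)+\min\{0,H(A|E)_\rho\}$ on the right.

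The only step requiring care is checking that the ``product measurement + product seed'' structure on $\rho_{AE}^{\otimes m}$ indeed falls under the hypotheses of Theorem \ref{thm:min-ent-ur-bitwise}: the bitwise MUB extractor on $mn$ qudits is of precisely the required form $V_{u_1}\otimes\cdots\otimes V_{u_{mn}}$ with $u_i\in\{0,\ldots,d\}$, so seeds of the form $(j_1,\ldots,j_m)$ with each $j_\ell\in[(d+1)^n]$ are a subset of all admissible seeds on $mn$ qudits; averaging over a subcollection of the seeds can only make the average trace-distance smaller, hence Theorem \ref{thm:min-ent-ur-bitwise} applies verbatim to this product-seed measurement on $\rho_{AE}^{\otimes m}$. (Alternatively, one can observe that the proof of Theorem \ref{thm:min-ent-ur-bitwise} tensorizes directly, but the subset argument is cleaner.) I expect this bookkeeping around the product structure of seeds and measurements, together with verifying that Lemma~\ref{lem:aep} applies to the cq-state $\rho_{KEJ}$ with its auxiliary classical system $J$, to be the main (and essentially only) obstacle; once those are in place, the proof is a one-line application of the AEP to the min-entropy bound.
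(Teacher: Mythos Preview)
Your approach is essentially the same as the paper's: apply the single-qudit bound to $\rho_{AE}^{\otimes m}$, invoke the AEP, divide by $m$, and send $m\to\infty$. The execution differs in one place. The paper does \emph{not} use Theorem~\ref{thm:min-ent-ur-bitwise} directly; instead it starts from Lemma~\ref{lem:shannon} (the Alicki--Fannes continuity bound) to obtain a von Neumann entropy on the left-hand side from the outset. Since von Neumann entropy is additive on tensor products, the left-hand side for $\rho^{\otimes m}$ is then \emph{exactly} $m$ times the desired average, and only the lower-bound direction of the AEP (Lemma~\ref{lem:aep}) is needed, on the right-hand side. Your route keeps min-entropy on both sides and therefore requires the \emph{upper}-bound direction of the AEP for the left-hand side; this is standard (it is the other half of the Tomamichel--Colbeck--Renner AEP) but is not the content of Lemma~\ref{lem:aep} as stated, so you should cite it explicitly rather than treat it as given.

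One claim in your proposal is false and should be dropped: ``averaging over a subcollection of the seeds can only make the average trace-distance smaller'' is simply wrong in general. Fortunately you do not need it: the seed set $\{0,\ldots,d\}^{mn}$ for the bitwise extractor on $mn$ qudits is in natural bijection with $[(d+1)^n]^m$, so your product seeds are not a subcollection but the \emph{entire} seed set, and Theorem~\ref{thm:min-ent-ur-bitwise} applies verbatim with $n$ replaced by $mn$. The resulting post-measurement state is then literally $\rho_{KEJ}^{\otimes m}$, which is what you need for the AEP on the left.
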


Note that for $n=1$, this again gives an uncertainty relation for the full set of MUBs, but now a \lq complete\rq~von Neumann entropy version
\begin{align}\label{eq:finalvonneumman2}
\frac{1}{d+1}\sum_{j=1}^{d+1}H(K|E)_{\rho^{j}}\geq \log(d+1)-1+\min\left\{0,H(A|E)_{\rho}\right\}\ .
\end{align}
To understand this bound it is again instructive to consider some special cases.  Note that for $E$ trivial, we arrive at
\begin{align}\label{eq:sanchez_H2}
\frac{1}{d+1}\sum_{j=1}^{d+1}H(K)_{\rho^{j}}\geq \log(d+1)-1\ ,
\end{align}
which is the best known bound for a full set of MUBs and general $d$~\cite{Ivanovic92,Sanchez93}. But it is also known that without side information and $d$ even, this can be improved to~\cite{Sanchez95}
\begin{align}
\frac{1}{d+1}\sum_{j=1}^{d+1}H(K)_{\rho^{j}}\geq \frac{1}{d+1} \left(\frac{d}{2}\log\left(\frac{d}{2}\right)+\left(\frac{d}{2}+1\right)\log\left(\frac{d}{2}+1\right)\right)\ .
\end{align}
For one qubit ($d=2$) the latter gives $2/3$ (which is known to be tight, e.g. for the Pauli matrices), whereas our bound gives $\log3-1\approx0.585$. 


\subsection{Conclusions}

Previously, uncertainty relations \emph{with} quantum side information were only known for two measurements~\cite{Berta09,Joe09,Coles11,Coles12,Coles10,christandl05,Tomamichel11}. As shown above, however, \emph{any} QC-extractor yields an uncertainty relation that takes quantum side information into account. Tables~\ref{tab:URsummary} summarizes the uncertainty relations for the min-entropy obtained for the particular QC-extractors from this paper. For the von Neumann entropy uncertainty relations, we would mainly like to point to Proposition~\ref{thm:neumann} and Equation~\eqref{eq:finalvonneumman2}, which can be understood as the generalization of a well known entropic uncertainty relation \emph{without} quantum side information (Equation~\eqref{eq:sanchez_H2}).


\section{Applications to Security in the Noisy-Storage Model}\label{sec:noisy}

As the second application, we solve the long standing question of relating the security of cryptographic protocols in the noisy-storage model~\cite{Noisy1,noisy:robust,Curty10,noisy:new} to the quantum capacity.

\subsection{Model}

Let us first provide a brief summary of the noisy-storage model - details can be found in~\cite{noisy:new}. The central assumption of the model is that during waiting times $\Delta t$ introduced into the protocol, the adversary can only store quantum information using a limited and unreliable quantum memory device. This is indeed the only assumption on the adversary who is otherwise all powerful. In particular, he can store an unlimited amount of classical information, and perform any operation instantaneously. The latter implies that he is able to perform any encoding and decoding operations before and after using his memory device, even if these may be difficult to perform.

Mathematically, such a quantum storage device is simply a quantum channel $\cF:\bop(\hil_{\rm in}) \rightarrow \bop(\hil_{\rm out})$ mapping input states on the space $\hil_{\rm in}$ to some noisy output states on the space $\hil_{\rm out}$. Of particular interest are thereby input spaces of the form $\hil_{\rm in} = (\Complex^d)^{\otimes N}$ and channels $\cF = \cN^{\otimes N}$ with $\cN: \bop(\hil_{\rm in}) \rightarrow \bop(\hil_{\rm out})$. This corresponds to a memory device consisting of $N$ $d$-dimensional \lq memory cells\rq~ each of which experiences a noise described by the channel $\cN$. A special case of this model is thus the bounded quantum storage model where $d = 2$, 
and $\cF=\cI_2$ is the one qubit identity channel~\cite{serge:new,serge:bounded}. For a protocol using BB84 encoded qubits it is known that security can be achieved whenever $N$ is strictly less than half the number of qubits sent during the course of the protocol~\cite{noisy:new}.


\subsection{Security of existing protocols}

\subsubsection{Weak string erasure}

How can we hope to show security in such a model? In~\cite{noisy:new} it was shown that bit commitment and oblivious transfer, and hence any two-party secure 
computation~\cite{kilian}, can be implemented securely against an \emph{all-powerful} quantum adversary given access to a much simpler primitive
called \emph{weak string erasure (WSE)}. The latter primitive was then proven secure in the noisy-storage model. It is hence enough to prove the security of WSE, and we will follow this approach here. 

The motivation behind the primitive weak string erasure was to create a basic \emph{quantum} protocol that builds up \emph{classical} correlations between Alice and Bob which are 
later used to implement more interesting cryptographic primitives. Informally, weak string erasure achieves the following task - a formal definition~\cite{noisy:new,prabha:limits} can be 
found in the appendix. WSE takes no inputs from Alice and Bob. Alice receives as output a randomly chosen string $X^n = X_1,\ldots,X_n \in \{0,1\}^n$. Bob
receives a randomly chosen subset $\cI \in [n]$ and the substring $X_{\cI}$ of $X^n$. Randomly chosen thereby means that each index $i \in [n]$ has some fixed 
probability $p$ of being in $\cI$. Originally, $p=1/2$~\cite{noisy:new}, but any probability $0 < p <1$ allows for the implementation of oblivious 
transfer~\cite{prabha:limits}. The security requirements of weak string erasure are that Alice does not learn $\cI$, and Bob's min-entropy given all of his information $B$ is 
bounded as $\hmin(X|B) \geq \lambda n$ for some parameter $\lambda > 0$. To summarize all relevant parameters, we thereby speak of an $(n,\lambda,\eps,p)$-WSE scheme.


\subsubsection{Protocol for weak string erasure}

We now construct a very simple protocol for weak string erasure, and prove its security using our bitwise QC-randomness extractor. 
The only difference to the protocol proposed in~\cite{noisy:new} is that we will use 3 MUBs per qubit instead of only $2$.
For sake of argument, we state the protocol in a purified form where Alice generates EPR-pairs and later measures them. Note, however, that the protocol
is \emph{entirely} equivalent to Alice creating single qubits and sending them directly to Bob. In the purified protocol, the choice of bit she encodes is determined randomly 
by her measurement outcome in the chosen basis on the EPR-pair. 
That is, honest Alice and Bob do not need any quantum memory to implement the protocol below.
Indeed, this is the way such protocols are typically implemented in practice.

\begin{protocol}{Weak string erasure (WSE)}{Outputs: $x^n \in \01^n$
	to Alice, $(\cI,z^{|\cI|}) \in 2^{[n]} \times \01^{|\cI|}$ to Bob.}{}
\item[1.] {\bf Alice:} Creates $n$ EPR-pairs $\Phi$, and sends half of each pair to Bob.
\item[2.] {\bf Alice:} 
	Chooses a bases-specifying string $\theta^n \in_R \{0,1,2\}^n$ uniformly at random. 
	For all $i$, she measures the $i$-th qubit in the basis $\theta_i$ to obtain outcome $x_i$.

\item[3.] {\bf Bob: } Chooses a basis string $\tilde{\theta}^n \in_R \{0,1,2\}^n$ uniformly at random. When receiving the $i$-th qubit, Bob measures it in the basis
given by $\tilde{\theta}_i$ to obtain outcome $\tilde{x}_i$.

\item[Both parties wait time $\Delta t$.]

\item[4.] {\bf Alice: } Sends the basis information $\theta^n$ to Bob, and outputs $x^n$.
\item[5.] {\bf Bob: } Computes $\iSet = \{i \in [n] \mid \theta_i = \tilde{\theta}_i\}$, and outputs $(\cI,z^{|\cI|}):=(\cI,\tilde{x}_{\cI})$.
\end{protocol}

The proof of correctness of the protocol, and security against dishonest Alice is identical to~\cite{noisy:new,prabha:limits}. It essentially follows from the fact that
Bob never sends any information to Alice. The main difficulty lies in 
proving security against dishonest Bob. Before embarking on a formal proof, let us first consider the general form that \emph{any} attack of Bob takes (see Figure~\ref{fig:BobAttack}).
First of all, note that the noisy-storage model only assumes that Bob has to use his storage device during waiting times $\Delta t$. That is, when attacking the protocol above he can in fact store the incoming
qubits perfectly
until the waiting time, i.e., until all $n$ qubits arrived. Let $Q$ denote Bob's quantum register containing all $n$ qubits. 
Note that since there is no communication between Alice and Bob during the transmission of these $n$ qubits, we can without loss of generality 
assume that Bob first waits for all $n$ qubits to arrive before mounting any form of attack.

As any operation in quantum theory is a quantum channel, Bob's attack can also be described by a quantum channel $\cE: \substates(Q) \rightarrow \substates(\hin \otimes \msg)$. This map takes $Q$, to some quantum state on the input of Bob's storage device ($\hin$), and some arbitrarily large amount of classical information ($\msg$). For example, $\cE$ could be an encoding into an error-correcting code. By assumption of 
the noisy-storage model, Bob's quantum memory is then affected by noise $\cF: \substates(\hin) \rightarrow \substates(\hout)$.
After the waiting time, the joint state held by Alice and Bob in the purified version of the protocol, i.e., before Alice measures, is thus of the form
\begin{align}\label{eq:rho}
	\rho_{ABM} = \cI_A \otimes \left[\left(\cF \otimes \cI_{\msg}\right) \circ \cE\right](\Phi^{\otimes n})\ ,
\end{align}
where $\Phi$ is an EPR-pair.
After the waiting time, Bob can perform any form of quantum operation to try and recover information from the storage device. 
Note that in principle, Bob's goal is to recover $X$ alone for which he could potentially use his basis information $\Theta$. 
Yet, we will see in Section~\ref{sec:securityCap} that we can ignore the basis information in the analysis. That is, we only need
to analyze decoding maps $\cD: \substates(\hin \otimes \msg) \rightarrow \substates(Q')$ trying to recover the initial entanglement between Alice and Bob. 
\begin{center}
	\begin{figure}
		\includegraphics{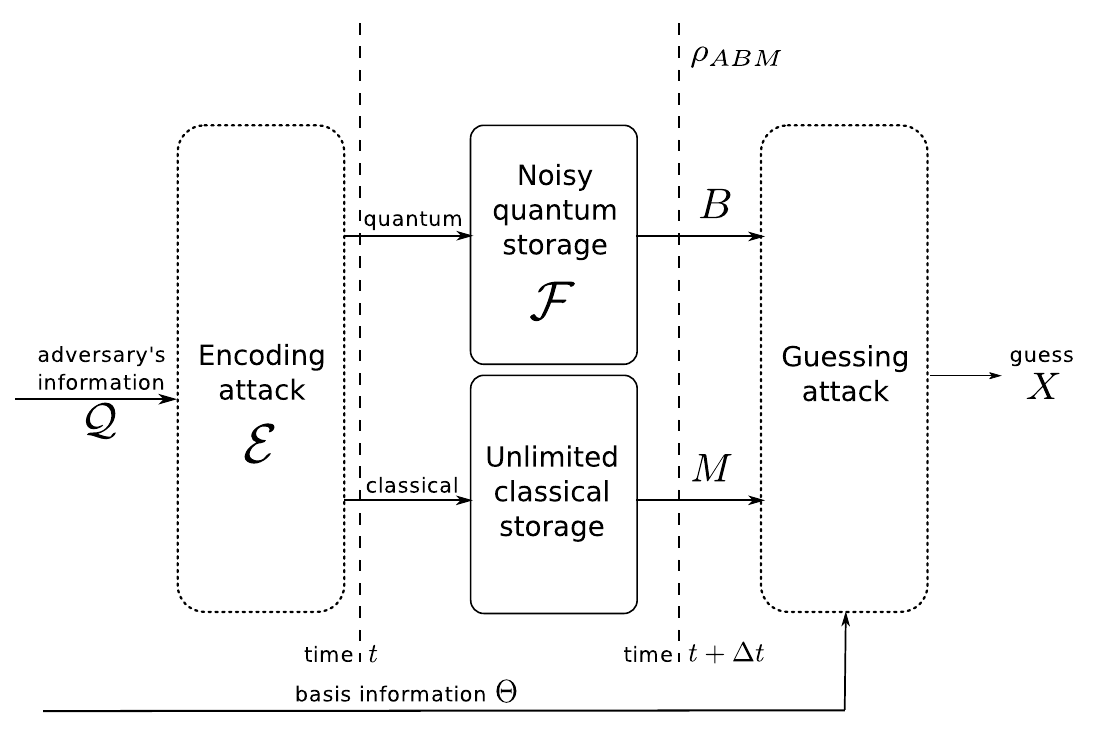}
		\caption{Any attack of dishonest Bob is described by an encoding attack $\cE$ and a \lq guessing\rq~attack, since for classical $X$ the min-entropy $\hmin(X|BM\Theta)$ is
		directly related to the probability that Bob guesses $X$. As we will see below, it is however sufficient to consider how well a decoding attack $\cD$ can preserve entanglement
		between Alice and Bob, where $\cD$ acts on $BM$ on the state $\rho_{ABM}$ from Equation~\eqref{eq:rho} at the marked point in time.}
		\label{fig:BobAttack}
	\end{figure}
\end{center}


\subsection{Security and the quantum capacity}\label{sec:securityCap}

Recall from the definition above that our goal is to show that $\hmin^\eps(X|BM\Theta)_{{\rho}} \geq \lambda \cdot n$ for some parameter $\lambda$. How could we hope to accomplish this?
Although it was always clear that security should be related to the channel's ability to store quantum information, i.e., the \emph{quantum capacity} of $\cF$, proving this fact has
long formed an elusive problem. Partial progress to answering this question was made in~\cite{noisy:new} and~\cite{entCost}, where security was linked to the \emph{classical capacity} and \emph{entanglement cost}
of $\cF$, respectively. Why would this problem be difficult? Note that we wish to make a statement about some \emph{classical} information $X$ obtained by measuring $A$ in bases $\Theta$.
That is, we effectively ask for an uncertainty relation for said measurements. Previously, however, suitable uncertainty relations were only known for \emph{classical} side information. 
The missing ingredient is thus an uncertainty relation with \emph{quantum} side information, linked to the channel's ability to preserve quantum information.

Indeed, one application of our QC-extractors is to provide such a relation, where for the protocol above we will need the relation for $3$ MUBs per qubit given in Table~\ref{tab:URsummary}. For $E=BM$ on ${\rho}_{ABM\Theta}$ it reads
\begin{align}\label{eq:minBound}
	\hmineps(X|BM\Theta)_{{\rho}} \gtrsim (\log(3)-1) n + \min\{0,\hmin(A|BM)_{{\rho}}\}\ .
\end{align}
Note that the operational definition of the smooth conditional min-entropy already incorporates any guessing attack Bob may mount on $BM\Theta$.
Clearly, not all QC-extractors are useful for protocols such as the above, as we must ensure that there exists a strategy for the honest players to succeed.
However, any \emph{bitwise} QC-extractor will do.
How can we now relate this expression to the quantum capacity? Note that the min-entropy has an appealing operational interpretation~\cite{krs:operational}
as
\begin{align}\label{eq:minOp}
	\hmin(A|BM)_{{\rho}} = - \log |A| \max_{\Lambda_{BM\rightarrow A'}} F(\Phi_{AA'},\id_A \otimes \Lambda(\rho_{ABM}))\ ,
\end{align}
where $\Phi_{AA'}$ is the maximally entangled state accross $AA'$. That is, the min-entropy is directly related to the \lq amount\rq~of entanglement between $A$ and $E=BM$. 
To place a bound on~\eqref{eq:minBound}, we would like to obtain a lower bound on
\begin{align}
	\min_{\cE} \hmin(A|BM)_{{\rho}}\ ,
\end{align}
where the minimization is taken over all encoding attacks described above. Note that this expression does not depend on the basis information $\Theta$, and
that the map $\Lambda$ in~\eqref{eq:minOp} can be understood as a decoding attack $\cD$ aiming to restore entanglement with Alice.
Further, note that $|A'| = |Q|$ and we can equivalently upper bound
\begin{align}\label{eq:chanFid}
	\max_{\cD,\cE} F\left(\Phi_{AB},\id_A \otimes \left[\cD \circ (\cF \otimes \cI_{\msg}) \circ \cE\right](\Phi_{AQ})\right) = 
	\max_{\cD,\cE} F_c(\cD \circ (\cF \otimes \cI_{\msg}) \circ \cE)\ .
\end{align}
The quantity $F_c$ on the r.h.s., however, is precisely the \emph{channel fidelity}~\cite{Barnum00} of $\cD \circ (\cF \otimes \cI_{\msg}) \circ \cE$, maximized over all encodings and decodings
where we are allowed free forward classical communication ($\msg$). 

Why is this quantity interesting? When talking about a channel's ability to carry information, we need to agree on what it means to send information reliably. 
The channel fidelity is one of the measures in which the quantum capacity can be expressed~\cite{variazoni}.
For the storage device $\cF$, the quantity 
\begin{align}\label{eq:defcap}
	&n = \max \log |A| \\
	&s.t. \max_{\cD,\cE} F_c(\cD \circ (\cF \otimes \cI_{\msg}) \circ \cE) \geq 1 - \eps\ ,
\end{align}
tells us how much entanglement, or equivalently how many qubits~\cite{Barnum00}, we can send through $\cF$ with an error of at most $\eps$, using free feed forward classical 
communication ($\msg$).  For $\eps \rightarrow 0$, this quantity is also known as the \emph{one-shot} quantum capacity $Q_{\rightarrow}^{(1)}$ of $\cF$ itself, no matter what form $\cF$ takes. 

Let us now consider storage devices of the form $\cF = \cN^{\otimes N}$.
Recall that the capacity of the channel $\cN$ is the maximum rate $R=n/N$ at which we can send $n$ (qu)bits reliably by using the channel $\cN$ $N$ times. 
For channels $\cF = \cN^{\otimes N}$, the quantity $R = n/N$ with $n$ from Equation~\eqref{eq:defcap} thus determines the maximum rate at which we can send information with error $\eps$ for any finite $N$. 
The usual quantum capacity with classical feed forward communication $Q_{\rightarrow}(\cN)$ is then
given by taking the limit $N\rightarrow \infty$ and $\eps \rightarrow 0$.

Whereas one might think that forward classical communication helps, it is in fact known that 
it does not affect the quantum capacity since for any scheme that achieves error $\eps$ using classical forward communication, there exists
a scheme without any classical communication with error at most $2\eps$~\cite{Barnum00}. Note that there are several definitions of the quantum capacity 
using e.g. the entanglement fidelity or the distance from the identity channel in diamond norm as a measure of success, however, all such definitions lead to the same capacity~\cite{variazoni}.
Combining Equation~\eqref{eq:minBound} and Equation~\eqref{eq:chanFid} thus finally relates the security in the noisy-storage model to the \emph{quantum} capacity $Q_{\rightarrow}(\cN)$ of the storage 
device.\footnote{Note that this also relates security to the one-shot capacity $Q^{(1)}_{\rightarrow}(\cF)$ of an arbitrary channel $\cF$.}


\subsection{Security parameters from a strong converse}\label{sec:securityStatement}

How can we now obtain explicit security parameters from this? We first make a statement analogous to~\cite[Theorem III.2.i]{noisy:new} for arbitrary channels $\cF$. 

\begin{theorem}\label{thm:para}
	Let Bob's storage device be given by $\cF$. 
	For any choice of constant parameters $\eps,\delta' > 0$, 
	Protocol 1 implements $(n,\lambda,\eps,1/3)$-WSE with 
	\begin{align}
		\lambda = \log(3) - 1 - \frac{1}{n}\max\left\{0,\max_{\cD,\cE}\log 2^n F_c(\cD \circ (\cF \otimes \cI_{\msg}) \circ \cE) + \kappa\right\}
		- \frac{1}{n}\left(\xi + 1 \right)\ ,
	\end{align}
	where $\kappa = \log\left(2/\delta'^2+1\right)$ and $\xi = \log\left(1/\left(\eps^{2}/2-\delta'\right)^2\right)$.
\end{theorem}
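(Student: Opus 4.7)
The plan is to combine the bitwise min-entropy uncertainty relation of Theorem~\ref{thm:min-ent-ur-bitwise} (specialized to $d=2$) with the operational characterization of $\hmin(A|BM)_\rho$ via the singlet fraction, and then recognize the resulting quantity as the channel fidelity of $\cD\circ(\cF\ox\cI_\msg)\circ\cE$. Correctness and security against dishonest Alice are unaffected by the switch from two to three MUBs and follow verbatim from~\cite{noisy:new,prabha:limits} since Bob sends no message to Alice; it therefore suffices to prove $\hmineps(X|BM\Theta)_\rho\geq\lambda n$ against an arbitrary dishonest Bob.

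First, I would fix the state on which the uncertainty relation will be applied. In the purified form of Protocol~1 no message is sent before the waiting time, so without loss of generality Bob stores all $n$ qubits perfectly until they have arrived and then applies an arbitrary encoding channel $\cE:\substates(Q)\to\substates(\hin\ox\msg)$, with $\msg$ an arbitrarily large classical register. The storage noise $\cF$ acts on $\hin$, so the joint state at the start of Alice's measurement is exactly
\begin{align*}
\rho_{ABM}=\cI_A\otimes\bigl[(\cF\ox\cI_\msg)\circ\cE\bigr](\Phi^{\ox n}),
\end{align*}
and Alice's uniformly random bases $\Theta=\theta^n\in\{0,1,2\}^n$ realize precisely the bitwise-MUB measurement of Section~\ref{sec:singlequdit} applied to $\rho_{ABM}$, producing the classical outcome $X$.

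Second, I would invoke Theorem~\ref{thm:min-ent-ur-bitwise} with $d=2$, $E=BM$, $J=\Theta$, and smoothing parameter $\delta=0$ to obtain
\begin{align*}
\hmineps(X|BM\Theta)_\rho\geq n(\log 3-1)+\min\{0,\hmin(A|BM)_\rho-\kappa\}-\xi-1,
\end{align*}
with $\kappa$ and $\xi$ as in the statement. To eliminate $\hmin(A|BM)_\rho$ in favor of the channel fidelity, I would apply the operational identity recalled in Equation~\eqref{eq:minOp}: $\hmin(A|BM)_\rho=-\log\bigl(|A|\max_\Lambda F(\Phi_{AA'},\id_A\ox\Lambda(\rho_{ABM}))\bigr)$. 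Because $\rho_{ABM}$ arises from the channel $(\cF\ox\cI_\msg)\circ\cE$ applied to one half of $\Phi^{\ox n}$, the maximum over recoveries $\Lambda$ is a maximum over decoders $\cD$, and the achieved singlet fraction is exactly the channel fidelity $F_c(\cD\circ(\cF\ox\cI_\msg)\circ\cE)$ identified in Equation~\eqref{eq:chanFid}. Hence $\hmin(A|BM)_\rho=-\log\bigl(2^n\max_\cD F_c(\cD\circ(\cF\ox\cI_\msg)\circ\cE)\bigr)$.

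Finally, since security must hold for every encoding attack, I would take the infimum over $\cE$ and push it through the logarithm to get $\min_\cE\hmin(A|BM)_\rho=-\log\bigl(2^n\max_{\cD,\cE}F_c(\cD\circ(\cF\ox\cI_\msg)\circ\cE)\bigr)$. Substituting into the uncertainty relation, dividing by $n$, and rewriting $\min\{0,-x\}=-\max\{0,x\}$ yields the claimed expression for $\lambda$. The only point that requires care is the interplay of quantifiers: the decoder $\cD$ enters through the adversary-favorable operational identity (a maximum), whereas the encoder $\cE$ enters through the worst-case security requirement (a minimum); because $x\mapsto-x$ reverses order and $\min\{0,\cdot\}$ is monotone, these merge cleanly into the single joint $\max_{\cD,\cE}$ that appears inside $\lambda$. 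Everything else is direct substitution into previously established results, so I expect this bookkeeping with the quantifiers to be the main, though mild, obstacle.
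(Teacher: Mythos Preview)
Your proposal is correct and follows essentially the same approach as the paper: reduce to security against dishonest Bob via \cite{noisy:new,prabha:limits}, apply the bitwise min-entropy uncertainty relation (Theorem~\ref{thm:min-ent-ur-bitwise} with $d=2$, $\delta=0$) to the state $\rho_{ABM}$, and then convert $\hmin(A|BM)_\rho$ into the channel fidelity via the operational identity~\eqref{eq:minOp} and~\eqref{eq:chanFid}. Your explicit tracking of the $\cE$/$\cD$ quantifiers is slightly more careful than the paper's presentation, but the argument is the same.
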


\begin{proof}
	The proof of correctness of the protocol, and security against dishonest Alice is identical to~\cite{noisy:new,prabha:limits} and does not lead to
	any error terms. As shown in Section~\ref{sec:URbounds}, any QC-extractor yields an entropic uncertainty relation with quantum side information. 
	For the case of 3 MUBs per qubit as in the protocol above, this uncertainty relation (see Table~\ref{tab:URsummary} with $\delta=0$) is given by
	\begin{align}\label{eq:entBound}
		H_{\min}^{\eps}(X|BM\Theta)_{{\rho}}
		\geq n\cdot\left(\log(3)-1\right) + \min\left\{0,\hmin(A|BM)_{{\rho}}-\kappa\right\} - \xi - 1\ .
	\end{align}
	Note that any decoding attack of Bob is absorbed into the operational interpretation of the min-entropy.
	As outlined above, it also follows from the operational interpretation of the min-entropy that for any encoding $\cE$ and decoding $\cD$ attack of Bob
	\begin{align}
		\hmin(A|BM)_{\rho} \geq - \log 2^n F_c(\cD \circ (\cF \otimes \cI_{\msg}) \circ \cE)\ .
	\end{align}
	Together with Equation~\eqref{eq:entBound} this yields our claim.
\end{proof}

Second, we consider a case of practical interest, i.e., channels of the form
$\cF = \cN^{\otimes N}$. 
Let us first establish some basic limits to security in this case.
Note that for rates $R \leq Q_{\rightarrow}(\cN)$, we have from Equation~\eqref{eq:defcap} that information \emph{can} be sent
reliably. That is, cheating Bob \emph{is} able to store the transmitted qubits perfectly whenever Alice sends less than $n = RN \leq Q_{\rightarrow}(\cN) N$ qubits. 
Note that 
\begin{align}
	R = \frac{1}{\nu}\ , 
\end{align}
and thus in terms of the storage rate $\nu$ this condition 
reads $1 \leq Q_{\rightarrow}(\cN) \cdot \nu$. Clearly, security cannot be obtained in this case.


\subsubsection{Strong converse parameter}

But what happens for $R > Q_{\rightarrow}(\cN)$? A \emph{weak converse} for the quantum capacity states that for any encoding $\cE$ and decoding scheme $\cD$, the channel fidelity is bounded away from $1$.
A \emph{strong converse} states that for any encoding and decoding scheme
\begin{align}\label{eq:chanConv}
	F_c(\cD \circ (\cF \otimes \cI_{\msg}) \circ \cE) \leq 2^{- \gamma^Q(\cN,R) \cdot N}\ ,
\end{align}
where $\gamma^Q(\cN,R) > 0$ is the \emph{strong converse parameter} of the channel $\cN$ at rate $R$. We are now ready to make a formal statement of security.
For this special case we obtain the following corollary by combining Theorem~\ref{thm:para}, Equation~\eqref{eq:chanConv} and $N = \nu \cdot n$.

\begin{corollary}\label{cor:para}
	Let Bob's storage device be of the form $\cF = \cN^{\otimes \nu n}$ with $\nu \cdot Q_{\rightarrow}(\cN) < 1$ and either $\nu \cdot \gamma^Q(\cN,1/\nu) > 2 - \log(3)$ or $\nu \cdot \gamma^Q(\cN,1/\nu) 
	< 1 + \kappa/n$.
	For any choice of constant parameters $\eps,\delta' > 0$, 
	Protocol 1 implements $(n,\lambda,\eps,1/3)$-WSE with 
	\begin{align}
		\lambda = \log(3) - 1 - \max\left\{0,\nu \cdot \gamma^Q(\cN,1/\nu) - 1 - \frac{\kappa}{n}\right\}
		- \frac{1}{n}\left(\xi + 1 \right)\ ,
	\end{align}
	where $\kappa = \log\left(2/\delta'^2+1\right)$ and $\xi = \log\left(1/\left(\eps^{2}/2-\delta'\right)^2\right)$.
\end{corollary}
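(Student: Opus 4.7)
The plan is to specialize Theorem~\ref{thm:para} to the product channel $\cF = \cN^{\otimes \nu n}$ and then bound the channel-fidelity term using the strong converse~\eqref{eq:chanConv}. First I would invoke Theorem~\ref{thm:para} directly, which already delivers the desired form for $\lambda$ except that the inner quantity $\max_{\cD,\cE}\log 2^n F_c\bigl(\cD \circ (\cF\otimes\cI_\msg)\circ\cE\bigr)$ is still expressed as an optimization over Bob's encoding and decoding attacks rather than as a capacity-theoretic quantity. The role of Theorem~\ref{thm:para} here is to absorb all protocol-specific work, namely the purified form of the WSE protocol, the uncertainty relation from Table~\ref{tab:URsummary} for three single-qubit MUBs, and the operational meaning of the min-entropy as maximum recoverable entanglement fidelity; nothing protocol-specific remains after this invocation.

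Second, I would rewrite the hypothesis $\nu\cdot Q_\rightarrow(\cN) < 1$ as $R := 1/\nu > Q_\rightarrow(\cN)$, identifying $R$ as the rate at which Bob is forcing his $n$ stored qubits through his $N = \nu n$ uses of $\cN$. Since $R$ strictly exceeds the forward-assisted quantum capacity of $\cN$, the strong-converse bound~\eqref{eq:chanConv} applies uniformly over encoders and decoders (including those exploiting free classical side information in $\msg$), giving $\max_{\cD,\cE} F_c \leq 2^{-\gamma^Q(\cN,1/\nu)\,\nu n}$. Taking $\log$ and multiplying by $2^n$ yields $\log 2^n\max_{\cD,\cE} F_c \leq n\bigl(1-\nu\gamma^Q(\cN,1/\nu)\bigr)$; substituting into the $\max\{0,\cdot\}$ appearing in Theorem~\ref{thm:para} and simplifying (by considering separately the sign of $\nu\gamma^Q - 1 - \kappa/n$, which is exactly where the two alternatives in the hypothesis come in) yields the claimed expression for $\lambda$.

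The one piece that is not a pure mechanical substitution is checking that the strong converse may legitimately be applied in the form~\eqref{eq:chanConv} for attacks assisted by the classical side register $\msg$. This is fine because, as noted after Equation~\eqref{eq:defcap}, forward classical communication does not enhance the quantum capacity of $\cN$, so $\gamma^Q(\cN,R)$ is still the correct strong-converse parameter. The role of the two alternative conditions $\nu\gamma^Q(\cN,1/\nu) > 2 - \log(3)$ and $\nu\gamma^Q(\cN,1/\nu) < 1 + \kappa/n$ in the hypothesis is not to enable the substitution but to single out the two regimes in which $\lambda$ is positive (hence the WSE guarantee is nontrivial): in one the max term saturates and $\lambda \approx \log(3)-1$ up to $O(1/n)$ corrections, in the other the bound degrades linearly in the capacity shortfall and positivity requires beating $2 - \log(3)$. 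This, together with the bookkeeping of $\kappa$, $\xi$, $\delta'$ and $\eps$ inherited from Theorem~\ref{thm:para}, is the main (but modest) obstacle; once handled, the corollary drops out of Theorem~\ref{thm:para} and~\eqref{eq:chanConv}.
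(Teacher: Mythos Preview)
Your proposal is correct and follows exactly the paper's own route: the corollary is obtained simply by combining Theorem~\ref{thm:para} with the strong-converse bound~\eqref{eq:chanConv} and the substitution $N=\nu n$ (equivalently $R=1/\nu$). Your additional remarks about the role of forward classical communication and about the two parameter regimes are accurate elaborations but not needed beyond what the paper's one-line derivation already invokes.
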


Note that at first glance, the condition $\nu \cdot \gamma^Q(\cN,R) > 2 - \log(3)$ seems to favor large $\nu$. However, note that $\gamma^Q$ will be larger if the rate $R=1/\nu$ at which
we send information is higher. An illustrative example is provided below.


Given $Q_{\rightarrow}(\cN)$ and $\gamma^Q(\cN,R)$ we can thus in principle apply the theorem above to evaluate security parameters for any choice of $\cN$. 
Yet, it should be emphasized that determining the quantum capacity of a channel is in general a very hard problem.
Indeed, with the exception of so-called degradable channels, determining the quantum capacity of even rather innocent looking channels forms an elusive problem (see e.g.~\cite{ouyang} and references therin).
For example, even for the depolarizing channel which either outputs the original state with some probability $r$, or otherwise replaces it with the fully
mixed state, mere bounds on the quantum capacity are known.
Since a strong converse implies a sharp bound for information transmission, the existence of a strong converse for rates $R$ above a certain
threshold places a bound on the capacity. Hence, it is not surprising that determining the strong converse parameter for a channel $\cN$ when sending
information at a rate $R$ poses a challenge. For a long time it was only known that such a parameter exists for $R > C_E(\cN)/2$, where $C_E(\cN)$ is the \emph{classical}
entanglement assisted capacity of $\cN$. Indeed, the first further result was obtained only very recently by showing $\gamma^Q(\cN,R) > 0$ for $R > E_C(\cN)$, where $E_C(\cN) \geq Q_{\rightarrow}(\cN)$ is the \emph{entanglement cost} 
of $\cN$, capturing aspects of how well \emph{quantum} rather than \emph{classical} information can be transmitted through $\cN$~\cite{entCost}.


\subsubsection{Example: bounded storage}\label{sec:why}

Yet, to get some intuition about the parameters above, let us now consider the example of bounded, noise-free, storage. The quantum capacity of the one qubit identity channel
$\cN = \cI_2$ is simply $Q_{\rightarrow}(\cI_2) = 1$. A strong converse
is easy to obtain~\cite{entCost}. For completeness, we here provide a simple argument with slightly better parameters in the case of
classical forward communication.

\begin{lemma}
	The strong converse parameter of the one qubit identity channel obeys $\gamma^Q(\cI_2,R) = R-1 > 0$. 
\end{lemma}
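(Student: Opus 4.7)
The plan is to directly compute the channel fidelity $F_c(\Lambda)$ of $\Lambda := \cD \circ (\cI_2^{\otimes N} \otimes \cI_M) \circ \cE$ for arbitrary encoder and decoder, and show it is at most $2^{N-n} = 2^{-(R-1)N}$. Since the storage is noiseless, $\Lambda = \cD \circ \cE$, and the only structural restriction is that the quantum part of $\cE$'s output lives in a space $B_1$ of dimension $2^N$; the classical forward register $M$ is unrestricted in size, and properly handling it is the whole point of the argument.

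First, I will expand the classical register explicitly. Write $\cE(\rho) = \sum_m \cE_m(\rho) \otimes \proj{m}_M$, where each $\cE_m : \cL(A) \to \cL(B_1)$ is completely positive and $\sum_m \cE_m$ is trace-preserving, and decompose the decoder as $\cD(\xi \otimes \proj{m}) = \cD_m(\xi)$ with each $\cD_m : \cL(B_1) \to \cL(A)$ CPTP. Picking Kraus operators $\{E_k^m\}$ for $\cE_m$ and $\{D_l^m\}$ for $\cD_m$, the composition becomes $\Lambda(\rho) = \sum_{m,k,l} D_l^m E_k^m \rho (D_l^m E_k^m)^\dagger$. Using the standard identity $\langle \Phi_+ | (I \otimes X) | \Phi_+\rangle = \tr(X)/|A|$ for the maximally entangled state, the channel fidelity takes the closed form
\[
F_c(\Lambda) = \frac{1}{|A|^2}\sum_{m,k,l} \abs{\tr(D_l^m E_k^m)}^2.
\]

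Second, I apply Cauchy--Schwarz in the Hilbert--Schmidt inner product to each term: $\abs{\tr(D_l^m E_k^m)}^2 \leq \tr(D_l^m (D_l^m)^\dagger) \cdot \tr((E_k^m)^\dagger E_k^m)$. The two trace-preservation conditions — $\sum_l (D_l^m)^\dagger D_l^m = I_{B_1}$ for each fixed $m$, and $\sum_{m,k} (E_k^m)^\dagger E_k^m = I_A$ globally — then combine multiplicatively (crucially, not additively in $m$) to give
\[
F_c(\Lambda) \;\leq\; \frac{1}{|A|^2}\cdot \underbrace{\sum_m \!\Big(\!\sum_l \tr(D_l^m(D_l^m)^\dagger)\!\Big)}_{\text{sums to }|M|\cdot |B_1|\text{ naively, but paired}}\!\Big(\!\sum_k \tr((E_k^m)^\dagger E_k^m)\!\Big) \;=\; \frac{|B_1|}{|A|} \;=\; 2^{N-n}.
\]
Matching against the strong-converse form $F_c \leq 2^{-\gamma^Q(\cI_2, R) N}$ with $R = n/N$ yields $\gamma^Q(\cI_2, R) = R - 1$, which is strictly positive whenever $R > 1$.

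The only delicate point — and the reason a naive argument produces worse parameters — is the free classical communication. If one simply treats $B_1 \otimes M$ as the ``intermediate space'' and applies a dimension-counting bound, the result is the vacuous estimate $2^N |M|/|A|$. The resolution, which is what makes the parameters slightly sharper than the entanglement-cost bound of \cite{entCost}, is to perform Cauchy--Schwarz \emph{term by term in} $m$ and invoke the per-$m$ trace-preservation of $\cD_m$ together with the global trace-preservation of $\sum_m \cE_m$; the classical alphabet then cancels exactly, leaving the clean rate $R - 1$.
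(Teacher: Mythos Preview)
Your proof is correct and follows essentially the same route as the paper: write the channel fidelity in terms of Kraus operators, apply Cauchy--Schwarz termwise, and use trace preservation to sum everything up while making the classical register $M$ cancel. The only cosmetic difference is in the Cauchy--Schwarz step: the paper inserts the projector $\Pi_{k,m}$ onto the range of each decoder Kraus operator and bounds $\tr[\Pi_{k,m}(\id/2^{NR})]\leq 2^{-(R-1)N}$ via the rank constraint $\mathrm{rank}(D_{k,m})\leq 2^N$, whereas you apply the Hilbert--Schmidt Cauchy--Schwarz directly and use $\sum_l \tr\big((D_l^m)^\dagger D_l^m\big)=|B_1|$; both lead to the same bound, and your version is arguably slightly cleaner since it avoids the auxiliary projector.
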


\begin{proof}
Consider a decomposition of the encoding and decoding map in terms of their Kraus operators as $\cE(\rho) = \sum_j E_j \rho E_j^\dagger$ and
$\cD(\rho) = \sum_{k,m} \hat{D}_{k,m} \rho \hat{D}_{k,m}^\dagger$ where $\hat{D}_{k,m} = D_{k,m} \otimes \proj{m}$. Note 
that wlog the latter has this form since it is processing classical forward communication on $\msg$.
Let $\Pi_{k,m}$ denote the projector onto the subspace that $\hat{D}_{k,m}$ maps to.
We can now bound
\begin{align}
	F_c(\cD \circ (\cI_2^{\otimes N} \otimes \cI_{\msg}) \circ \cE) &= \sum_{jkm} \left|\tr\left[\hat{D}_{k,m}E_j\left(\frac{\id}{2^{NR}}\right)\right]\right|^2\\
	&\leq \sum_{jkm} \tr\left[\hat{D}_{k,m}E_j \left(\frac{\id}{2^{NR}}\right) E_j^\dagger \hat{D}_{k,m}^\dagger\right] \tr\left[\Pi_{k,m}\left(\frac{\id}{2^{NR}}\right)\right]\\
	&\leq 2^{-(R-1)N} \tr\left[\cD\circ\cE\left(\frac{\id}{2^{NR}}\right)\right]\\
	&= 2^{-(R-1)N}\ ,
\end{align}
where the first equality is a standard rewriting~\cite{NieChu00Book}, the second is given by the Cauchy-Schwarz inequality, and the last equality is given by the fact
that $\cD$ and $\cE$ are trace preserving.
\end{proof}

Plugging this strong converse parameter into Theorem~\ref{thm:para} and noting that $R = 1/\nu$ we obtain the following.

\begin{corollary}
	Let Bob's storage device be of the form $\cF = \cI_2^{\otimes \nu n}$ with $\nu < \log(3) - 1 \approx 0.585$.
	For any choice of constant parameters $\eps,\delta' > 0$, Protocol 1 implements $(n,\lambda,\eps,1/3)$-WSE with 
	\begin{align}
		\lambda = 
		(\log(3) - 1) - \nu -
		\frac{1}{n}\left(\kappa + \xi + 1 \right)\ ,
	\end{align}
	where $\kappa = \log\left(2/\delta'^2+1\right)$ and $\xi = \log\left(1/\left(\eps^{2}/2-\delta'\right)^2\right)$.
\end{corollary}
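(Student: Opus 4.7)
The plan is to combine the strong converse parameter for the qubit identity channel, namely $\gamma^Q(\cI_2, R) = R-1$ established in the immediately preceding lemma, with the general Theorem~\ref{thm:para}. Since $\cF = \cI_2^{\otimes \nu n}$ consists of $N = \nu n$ memory cells and Bob attempts to transmit Alice's $n$ qubits through it, the effective rate is $R = n/N = 1/\nu$, so the strong converse yields a uniform upper bound on the channel fidelity valid for \emph{every} encoding–decoding pair $(\cE, \cD)$.

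Concretely, I first substitute $\gamma^Q(\cI_2, 1/\nu) = 1/\nu - 1$ into the strong converse bound \eqref{eq:chanConv}, obtaining
\begin{align}
F_c\bigl(\cD \circ (\cI_2^{\otimes \nu n} \otimes \cI_M) \circ \cE\bigr) \;\leq\; 2^{-(1/\nu - 1)\cdot \nu n} \;=\; 2^{-(1-\nu)n}\ .
\end{align}
Multiplying by $2^n$ inside the logarithm gives $\log\bigl(2^n F_c\bigr) \leq \nu n$. Since $\nu > 0$, the inner maximum in Theorem~\ref{thm:para} evaluates to $\max\{0, \nu n + \kappa\} = \nu n + \kappa$, so dividing by $n$ and plugging everything in yields
\begin{align}
\lambda \;\geq\; \log(3) - 1 - \nu - \frac{\kappa}{n} - \frac{\xi + 1}{n} \;=\; \log(3) - 1 - \nu - \frac{1}{n}(\kappa + \xi + 1)\ ,
\end{align}
which is precisely the claimed expression. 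The side condition $\nu < \log(3) - 1$ is what guarantees $\lambda$ stays bounded away from zero for large $n$, i.e. nontrivial security; this is immediate from the formula.

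There is no genuine obstacle in this final step: all the heavy lifting (the bitwise QC-extractor for single-qubit MUBs in Theorem~\ref{thm:singleQuditExtract}, the resulting min-entropy uncertainty relation with quantum side information of Section~\ref{sec:mini}, and its translation to the channel fidelity through the operational interpretation of $\hmin(A|BM)$ in Section~\ref{sec:securityCap}) is packaged into Theorem~\ref{thm:para}, and the strong converse parameter for the noiseless qubit channel was just computed in the preceding lemma. The corollary is therefore essentially a one-line substitution, presented as a fully worked example illustrating how the general machinery yields explicit, closed-form security bounds in the simplest (bounded-storage) instance of the noisy-storage model, recovering the known threshold $\nu < \log(3) - 1 \approx 0.585$ from using three MUBs per qubit in the protocol.
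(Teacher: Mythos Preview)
Your proposal is correct and follows exactly the paper's approach: the paper itself merely states that the corollary is obtained by ``plugging this strong converse parameter into Theorem~\ref{thm:para} and noting that $R = 1/\nu$,'' and you have spelled out that substitution in detail. The arithmetic is right---$F_c \leq 2^{-(1-\nu)n}$ gives $\log(2^n F_c) \leq \nu n$, and since $\nu n + \kappa > 0$ the inner $\max$ in Theorem~\ref{thm:para} evaluates to $\nu n + \kappa$, yielding the stated $\lambda$.
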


We note that for the case of bounded storage in an independent and identically distributed asymptotic setting, that is $\cF = \cI_2^{\otimes \nu n}$ with $n\rightarrow\infty$, the parameters obtained here are slightly worse than what was obtained in~\cite{prabha:limits}, where security was shown to be possible for $\nu < 2/3$ instead of $\nu \lesssim 0.585$. This is due to the fact that the lower bound $0.585$ in our uncertainty relation stems from an expression involving the \emph{collision} entropy (see Appendix~\ref{app:coll} for the definition) rather than the Shannon entropy. We emphasize however, that due to finite size effects our bound is still better in the practically relevant regime of $n \lesssim10^{6}$ (for the same security parameters).


\section{Discussion and Outlook}

Motivated by the problem of using physical resources to extract true classical randomness,
we introduced the concept of quantum-to-classical randomness extractors. We emphasize that these QC-extractors also work against quantum side information.
We showed that for a QC-extractor to distill randomness from a quantum state $\rho_{AE}$, the relevant quantity to bound is the conditional min-entropy $H_{\min}(A|E)_{\rho}$. 
This is in formal analogy with classical-to-classical extractors, in which case the relevant quantity is $H_{\min}(X|E)_{\rho}$. 

We proceeded by showing various properties of QC-extractors and giving several examples for QC-extractors. In this context, it is illustrative to compare our results about QC-extractors with CC-extractors (holding against quantum side information as well). This is done in Table~\ref{tab:ext-summary}.

\begin{table}[ht]
\centering
\begin{tabular}{|l|l|c|c|}
	\hline
	\multicolumn{2}{|c|}{} & CC-extractors & QC-extractors \\
	\hline
\multirow{3}{*}{Seed} & Lower bound &  $\log(n-k) + 2 \log(1/\eps)$ \ \cite{RTS00}  & $\log(1/\eps)$ \\[3mm]
  & \multirow{2}{*}{Upper bounds} & $\log(n-k) + 2 \log(1/\eps)$ (NE)  & $m + \log n + 4 \log(1/\eps)$ \ [Th \ref{thm:small-decoupling-set}] \ (NE)  \\
  &  & $c \cdot \log(n/\eps)$ \ \cite{GUV09}   & $3n$ \ [Th \ref{thm:full-set-mub}]  \\ \hline
\multirow{2}{*}{Output} & Upper bound & $k-2\log(1/\eps)$ \ \cite{RTS00} & $n+\entHmin^{\sqrt{\eps}}(A|E)$ \ [Pr \ref{lem:optimality}] \\
  & Lower bound & $k - 2 \log(1/\eps)$ \ \cite{ILL89,RK05,TSSR10}  & $n + k - 2\log(1/\eps)$ \ [Th \ref{thm:full-set-mub}]    \\ \hline
\end{tabular}
\caption{Known bounds on the seed size and output size in terms of (qu)bits for different kinds of $(k, \eps)$-randomness extractors. $n$ refers to the number of input (qu)bits, $m$ the number of output (qu)bits and $k$ the min-entropy of the input $\entHmin(A|E)$. Note that for QC-extractors, $k$ can be as small as $-n$. Additive absolute constants are omitted. The symbol (NE) denotes non-explicit constructions.}
\label{tab:ext-summary}
\end{table}

It is eye-catching that there is a vast difference between the upper and lower bounds for the seed size of QC-extractors. 
We were only able to show the existence of QC-extractors with seed length roughly the output size $m$, but we believe that it should be possible to find QC-extractors with much smaller seeds, say $O(\mathrm{polylog}(n))$ bits long, where $n$ is the input size. However, completely different techniques might be needed to address this question. 

It is interesting to note that our results do indeed lend further justification to use Bell tests to certify randomness created by measuring a quantum system~\cite{roger:thesis,pironio:nature,massar:recent,Colbeck11}.
Note that for a tripartite pure state $\rho_{ABE}$ where we want to create classical randomness by means of QC-extractors on $A$, we have to find a lower bound on $\hmin(A|E)_{\rho}$. But by the duality relation for min/max-entropies we have $\hmin(A|B)_{\rho}=-{H}_{\max}(A|B)_{\rho}$~\cite{Tomamichel09}, where the latter denotes the max-entropy as introduced~\cite{krs:operational}. 
Since ${H}_{\max}(A|B)_{\rho}$ is again a measure for the entanglement between $A$ and $B$, one basically only has to do entanglement witnessing (e.g.,~Bell tests consuming part of the state)
to ensure that the QC-extractor method can work (i.e.~that $\hmin(A|E)_{\rho}$ is large enough). Note that any method to certify such an estimate would do 
and we could also use different measurements during the estimation process and the final extraction step. 
It would be interesting to know, if by using a particular QC-extractor, one can gain more randomness than in~\cite{roger:thesis,pironio:nature,massar:recent,Colbeck11}.
In~\cite{massar:recent}, it was also remarked that if we want to extract randomness from $A$ \emph{and} $B$, then it is not necessary for the joint state across $A$ and $B$ to be maximally entangled. Note that this is indeed intuitive as the amount of extractable randomness in this case is determined by $\hmin(AB|E)$ together.

As the first application, we showed that every QC-extractor gives rise to entropic uncertainty relations with quantum side information for the von Neumann (Shannon) entropy and the min-entropy.  Here the seed size translates into the number  of measurements in the uncertainty relation. Since it is in general difficult to obtain uncertainty relations for a small set of measurements (except for the special case of two), finding QC-extractors with a small seed size is also worth pursuing from the point of view of uncertainty relations. 

As the second application, we used the bitwise QC-extractor from Section~\ref{sec:singlequdit} to show that the security in the noisy storage model can be related to the strong converse rate of the quantum 
storage; a problem that attracted quite some attention over the last few years. Here one can also see the usefulness of \emph{bitwise} QC-extractors for quantum cryptography. 
Indeed, any bitwise QC-extractor would yield a protocol for weak string erasure.
Bitwise measurements have a very simple structure, and hence are implementable with current technology. 
In that respect, it would be interesting to see if a similar QC-extractor can also be proven for only two (complementary) measurements per qubit. This would give a protocol for
weak string erasure using BB84 bases as in~\cite{noisy:new}.

We expect that QC-extractors will have many more applications in quantum cryptography, e.g.,~quantum key distribution.
One possible interesting application could be to prove the security of oblivious transfer when purifying the protocol of~\cite{noisy:new}. Yet, it would require additional concepts of \lq entanglement sampling\rq~which still elude us.


\section*{Acknowledgments}

We gratefully acknowledge useful discussions with Matthias Christandl, Patrick Hayden, Robert K{\"o}nig, Joseph Renes, Pranab Sen, Oleg Szehr, Marco Tomamichel, J\"{u}rg Wullschleger and Mark Wilde. MB is supported by the Swiss National Science Foundation (grant PP00P2-128455), the German Science Foundation (grants CH 843/1-1 and CH 843/2-1), and the Swiss National Centre of Competence in Research 'Quantum Science and Technology'. OF is supported by CIFAR, NSERC and ONR grant No. N000140811249. MB and OF thank the Center for Quantum Technologies, Singapore, for hosting them while part of this work was done. SW thanks the National Research Foundation and Ministry of Education, Singapore.


\appendix

\section{Properties of Smooth Entropy Measures}

\subsection{Collision entropy and alternative smooth entropies}\label{app:coll}

For technical reasons, we need some more entropic quantities. We start with the quantum conditional collision entropy. For a state $\rho_{AB} \in \states(AB)$ relative to a state $\sigma_B \in \states(B)$, it is defined as
\begin{align}\label{eq:coll}
H_2(A|B)_{\rho|\sigma}=-\log\tr\left[(\id_A \ox \sigma_B^{-1/4}) \rho_{AB} (\id_A \ox \sigma_B^{-1/4})\right]^{2}\ ,
\end{align}
where the inverses are generalized inverses.\footnote{For $M_{A}\in\cL(A)$, $M_{A}^{-1}$ is a generalized inverse of $M_{A}$ if $M_{A}M_{A}^{-1}=M_{A}^{-1}M_{A}=\mathrm{supp}(M_{A})=\mathrm{supp}(M_{A}^{-1})$, where $\mathrm{supp}(.)$ denotes the support.} Next we introduce the following alternative smooth conditional min-entropy. For a state $\rho_{AB} \in \states(AB)$ it is defined as
\begin{align}
\entHmin^{\eps}(A|B)_{\rho|\rho}=\max_{\tilde{\rho}_{AB} \in \epsball(\rho_{AB})} \entHmin(A|B)_{\tilde{\rho}|\tilde{\rho}}\ .
\end{align}
We will also need the conditional max-entropy
\begin{align}\label{eq:maxentropy}
H_{\max}(A|B)_{\rho}=\max_{\sigma_{B}\in\cS(B)}\log F(\rho_{AB},\id_{A}\otimes\sigma_{B})^{2}\ ,
\end{align}
and its smooth version
\begin{align}\label{eq:smoothmaxentropy}
H_{\max}^{\eps}(A|B)_{\rho}=\min_{\tilde{\rho}_{AB}\in\cB^{\eps}(\rho_{AB})}H_{\max}(A|B)_{\tilde{\rho}}\ .
\end{align}

The following lemma relates the collision and the min-entropy.

\begin{lemma}\label{lem:hmin-h2-rhorho}
Let $\rho_{AB}\in\cS_{\leq}(AB)$ and $\sigma_{B}\in\cS(B)$ with $\mathrm{supp}(\rho_{AB})\subseteq\id_{A}\otimes\mathrm{supp}(\sigma_{B})$, where $\mathrm{supp}(.)$ denotes the support. Then
\begin{align}
\hmin(A|B)_{\rho|\sigma}\leq H_{2}(A|B)_{\rho|\sigma}\ .
\end{align}
\label{collision}
\end{lemma}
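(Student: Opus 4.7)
The plan is to unfold both entropies from their definitions and reduce the inequality to a direct operator-level comparison. Write $\lambda \assign \hmin(A|B)_{\rho|\sigma}$, so by definition
\begin{align}
\rho_{AB} \leq 2^{-\lambda}\,\id_A \otimes \sigma_B\ .
\end{align}
The goal is then to show that $\tr\bigl[(\id_A \otimes \sigma_B^{-1/4})\,\rho_{AB}\,(\id_A \otimes \sigma_B^{-1/4})\bigr]^{2} \leq 2^{-\lambda}$, which upon taking $-\log$ on both sides yields $H_2(A|B)_{\rho|\sigma} \geq \lambda$, as desired.

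First, I would conjugate the above operator inequality by the positive operator $\id_A \otimes \sigma_B^{-1/4}$. Using the support hypothesis $\mathrm{supp}(\rho_{AB}) \subseteq \id_A \otimes \mathrm{supp}(\sigma_B)$ so that the generalized inverse behaves well, and using that $\sigma_B^{-1/4}\,\sigma_B\,\sigma_B^{-1/4} = \sigma_B^{1/2}$, this gives
\begin{align}
M \assign (\id_A \otimes \sigma_B^{-1/4})\,\rho_{AB}\,(\id_A \otimes \sigma_B^{-1/4}) \leq 2^{-\lambda}\, \id_A \otimes \sigma_B^{1/2}\ .
\end{align}

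Second, I would apply the standard fact that for positive operators $0 \leq M \leq N$ one has $\tr[M^2] \leq \tr[MN]$ (since $M^{1/2} N M^{1/2} \geq M^{1/2} M M^{1/2} = M^2$, take trace and use cyclicity). Applying this with $N = 2^{-\lambda} \id_A \otimes \sigma_B^{1/2}$ yields
\begin{align}
\tr[M^2] \leq 2^{-\lambda}\,\tr\bigl[(\id_A \otimes \sigma_B^{-1/4})\,\rho_{AB}\,(\id_A \otimes \sigma_B^{-1/4})\,(\id_A \otimes \sigma_B^{1/2})\bigr]\ .
\end{align}
Third, I would simplify the remaining trace by cyclicity: the three factors of $\sigma_B^{\pm 1/4}, \sigma_B^{1/2}$ collapse (on the support of $\sigma_B$) to the identity, leaving just $\tr[\rho_{AB}] \leq 1$ since $\rho_{AB}$ is sub-normalized. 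Combining the two bounds gives $\tr[M^2] \leq 2^{-\lambda}$, which is exactly what we needed.

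No step is really an obstacle here — the only subtlety is the bookkeeping with generalized inverses when $\sigma_B$ is not full-rank, which is precisely the reason the support hypothesis is assumed. Once one restricts everything to $\id_A \otimes \mathrm{supp}(\sigma_B)$, $\sigma_B^{-1/4}$ acts as an honest inverse there and all the cancellations above are justified.
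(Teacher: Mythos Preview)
Your proof is correct and amounts to the same inequality as the paper's: both show $\tr\bigl[(\id_A\otimes\sigma_B^{-1/2})\rho_{AB}(\id_A\otimes\sigma_B^{-1/2})\rho_{AB}\bigr]\leq 2^{-\lambda}\tr[\rho_{AB}]\leq 2^{-\lambda}$. The only difference is packaging: the paper first invokes the variational formula $2^{-\hmin(A|B)_{\rho|\sigma}}=\max_{\omega\in\cS(AB)}\tr[\omega(\id_A\otimes\sigma_B^{-1/2})\rho_{AB}(\id_A\otimes\sigma_B^{-1/2})]$ (cited from an external lemma) and then plugs in $\omega=\rho_{AB}/\tr[\rho_{AB}]$, whereas you work directly from the defining operator inequality and use $\tr[M^2]\leq\tr[MN]$ for $0\leq M\leq N$ --- a slightly more self-contained route to the same bound.
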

\begin{proof}
We have $\mathrm{supp}(\rho_{AB})\subseteq\id_{A}\otimes\mathrm{supp}(\rho_{B})$ and hence by~\cite[Lemma B.2]{Berta09_2}
\begin{align}
\hmin(A|B)_{\rho|\sigma}=-\log\max_{\omega_{AB}\in\cS(AB)}\tr\left[\omega_{AB}\left(\id_{A}\otimes\sigma_{B}^{-1/2}\right)\rho_{AB}\left(\id_{A}\otimes\sigma_{B}^{-1/2}\right)\right]\ ,
\end{align}
where the inverses are generalized inverses. But for $\hat{\rho}_{AB}=\frac{\rho_{AB}}{\tr\left[\rho_{AB}\right]}\in\cS(AB)$ we have,
\begin{align}
H_{2}(A|B)_{\rho|\sigma}&=-\log\tr\left[\rho_{AB}\left(\id_{A}\otimes\sigma_{B}^{-1/2}\right)\rho_{AB}\left(\id_{A}\otimes\sigma_{B}^{-1/2}\right)\right]\\
&=-\log\tr\left[\rho_{AB}\right]-\log\tr\left[\hat{\rho}_{AB}\left(\id_{A}\otimes\sigma_{B}^{-1/2}\right)\rho_{AB}\left(\id_{A}\otimes\sigma_{B}^{-1/2}\right)\right]\\
&\geq-\log\max_{\omega_{AB}\in\cS(AB)}\tr\left[\omega_{AB}\left(\id_{A}\otimes\sigma_{B}^{-1/2}\right)\rho_{AB}\left(\id_{A}\otimes\sigma_{B}^{-1/2}\right)\right]=\hmin(A|B)_{\rho|\sigma}\ .
\end{align}
\end{proof}

Finally, we also need a relation between the standard min-entropy, and the alternative definition from above.

\begin{lemma}\cite[Lemma 18]{TSSR10}
\label{lem:hmin-rhorho}
Let $\eps'\geq0$, $\eps'>0$, and $\rho_{AB} \in \cS(AB)$. Then
\begin{align}
H^{\eps}_{\min}(A|B)_{\rho} - \log\left(\frac{2}{\eps'^2} + \frac{1}{1-\eps}\right) \leq H^{\eps+\eps'}_{\min}(A|B)_{\rho|\rho} \leq H^{\eps+\eps'}_{\min}(A|B)_{\rho}\ .
\end{align}
\end{lemma}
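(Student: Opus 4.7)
\medskip

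\noindent\textbf{Proof Proposal.}

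The plan is to prove each inequality separately. The right inequality is essentially a definitional matter: for any $\tilde\rho_{AB}\in\cB^{\eps+\eps'}(\rho_{AB})$ one has $H_{\min}(A|B)_{\tilde\rho|\tilde\rho}\le H_{\min}(A|B)_{\tilde\rho}=\max_{\sigma_B}H_{\min}(A|B)_{\tilde\rho|\sigma}$, since $\sigma=\tilde\rho_B$ is just one valid choice in the maximization. Taking the supremum of both sides over the same $\eps+\eps'$-ball yields the stated inequality.

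The substantive work lies in the left inequality. I would let $k:=H^{\eps}_{\min}(A|B)_{\rho}$ and pick an optimizer $\tilde\rho_{AB}\in\cB^{\eps}(\rho_{AB})$ together with a state $\sigma_B$ for which $\tilde\rho_{AB}\le 2^{-k}\,\id_A\otimes\sigma_B$. The goal is then to produce a sub-normalized state $\hat\rho_{AB}\in\cB^{\eps+\eps'}(\rho_{AB})$ satisfying $\hat\rho_{AB}\le 2^{-k+\log c}\,\id_A\otimes\hat\rho_B$, with $c=2/\eps'^{2}+1/(1-\eps)$. The natural candidate is obtained by projecting onto the subspace where $\sigma_B$ is dominated by $\tilde\rho_B$: define the spectral projector $\Pi_B:=\{\sigma_B\le c\,\tilde\rho_B\}$ and set $\hat\rho_{AB}:=\Pi_B\tilde\rho_{AB}\Pi_B$. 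By construction $\Pi_B\sigma_B\Pi_B\le c\,\Pi_B\tilde\rho_B\Pi_B=c\,\hat\rho_B$, so
\begin{equation*}
\hat\rho_{AB}\;\le\;2^{-k}\,\id_A\otimes\Pi_B\sigma_B\Pi_B\;\le\;c\cdot 2^{-k}\,\id_A\otimes\hat\rho_B,
\end{equation*}
yielding $H_{\min}(A|B)_{\hat\rho|\hat\rho}\ge k-\log c$.

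It then remains to certify that $\hat\rho_{AB}$ lies in the $(\eps+\eps')$-purified-distance ball around $\rho_{AB}$. By the triangle inequality for the purified distance it suffices to show $P(\hat\rho_{AB},\tilde\rho_{AB})\le\eps'$. I would bound this using the generalized fidelity: $\bar F(\hat\rho_{AB},\tilde\rho_{AB})\ge F(\hat\rho_{AB},\tilde\rho_{AB})\ge\tr[\Pi_B\tilde\rho_{AB}]$, together with $\sqrt{(1-\tr\hat\rho)(1-\tr\tilde\rho)}\ge 0$ and $\tr\tilde\rho\ge 1-\eps^{2}$. The key estimate is then an operator-Markov bound: from $\Pi_B^\perp \sigma_B\Pi_B^\perp\ge c\,\Pi_B^\perp\tilde\rho_B\Pi_B^\perp$ and $\tr[\sigma_B]=1$ one obtains $\tr[\Pi_B^\perp\tilde\rho_B]\le 1/c$, hence $\tr[\Pi_B\tilde\rho_{AB}]\ge\tr[\tilde\rho_{AB}]-1/c$. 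Plugging these bounds into the definition $P=\sqrt{1-\bar F^{2}}$ and simplifying should yield $P(\hat\rho,\tilde\rho)^{2}\le 2/c+1/((1-\eps)c)$, which with the choice $c=2/\eps'^{2}+1/(1-\eps)$ becomes $\le\eps'^{2}$ as required.

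The main obstacle I anticipate is the fidelity bookkeeping in the last step: the sub-normalization of $\tilde\rho$ (only $\tr\tilde\rho\ge 1-\eps^{2}$) interacts with the trace lost by the projection, and getting the exact constant $2/\eps'^{2}+1/(1-\eps)$ rather than something slightly worse requires carefully choosing whether to apply $\bar F\ge F$ directly or to include the $\sqrt{(1-\tr\hat\rho)(1-\tr\tilde\rho)}$ contribution. A secondary technicality is that $\sigma_B$ and $\tilde\rho_B$ generally do not commute, so the projector $\{\sigma_B\le c\,\tilde\rho_B\}$ must be defined via the positive/negative spectral decomposition of $\sigma_B-c\,\tilde\rho_B$, and one must check that the operator inequality $\Pi_B\sigma_B\Pi_B\le c\,\Pi_B\tilde\rho_B\Pi_B$ still follows from this non-commuting spectral projector, which is a standard fact about Jordan-Hahn style decompositions of Hermitian operators.
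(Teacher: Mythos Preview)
The paper does not actually prove this lemma; it is merely quoted from \cite[Lemma~18]{TSSR10} and used as a black box in the proofs of Theorems~\ref{thm:small-decoupling-set} and~\ref{thm:singleQuditExtract}. So there is no proof in the paper to compare against.

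That said, your sketch follows the standard argument (and essentially the one in \cite{TSSR10}). The right inequality is indeed immediate from the definitions, with the minor caveat that $\tilde\rho_B$ may be sub-normalized, so one formally maximizes over the normalized $\tilde\rho_B/\tr\tilde\rho_B$, which only improves the bound. For the left inequality your ``project onto $\{\sigma_B\le c\,\tilde\rho_B\}$ and apply an operator Markov inequality'' strategy is exactly right, and your remark about defining the projector via the spectral decomposition of the Hermitian operator $c\tilde\rho_B-\sigma_B$ handles the non-commuting case correctly: if $\Pi$ projects onto the non-negative part of $c\tilde\rho_B-\sigma_B$, then $\Pi(c\tilde\rho_B-\sigma_B)\Pi\ge 0$ and $\Pi^\perp(c\tilde\rho_B-\sigma_B)\Pi^\perp\le 0$ are automatic.

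The one place your write-up is loose is the final fidelity bookkeeping, as you yourself flag. The bound you wrote, $P(\hat\rho,\tilde\rho)^2\le 2/c+1/((1-\eps)c)$, does not reduce to $\eps'^2$ for $c=2/\eps'^2+1/(1-\eps)$; plugging in gives $(2+1/(1-\eps))/c$, which is not $\eps'^2$. To match the stated constant you must keep the full generalized-fidelity correction $\sqrt{(1-\tr\hat\rho)(1-\tr\tilde\rho)}$ rather than dropping it, and use $\tr\tilde\rho\ge 1-\eps^2$ together with $\tr\hat\rho\ge\tr\tilde\rho-1/c$. With those ingredients the algebra does close (this is carried out carefully in \cite{TSSR10}); your outline just needs the arithmetic tightened in that last step.
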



\subsection{Chain Rules}

The smooth conditional min- and max-entropy fulfill a duality relation.

\begin{lemma}\cite{Tomamichel09}\label{lem:duality}
Let $\rho_{AB}\in\cS(AB)$, $\eps\geq0$, and $\rho_{ABC}$ be an arbitrary purification of $\rho_{AB}$. Then
\begin{align}
H_{\max}^{\eps}(A|B)_{\rho}=-H_{\min}^{\eps}(A|C)_{\rho}\ .
\end{align}
\end{lemma}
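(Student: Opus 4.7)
The plan is to first establish the non-smooth duality $H_{\min}(A|B)_{\rho} = -H_{\max}(A|C)_{\rho}$ and then lift it to the smooth version via a purification argument.

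For the non-smooth case, I would start from the variational characterizations
\begin{align}
H_{\min}(A|B)_{\rho} = -\log \min\{\lambda \geq 0 : \exists\, \sigma_B \in \cS(B), \rho_{AB} \leq \lambda\, \id_A \otimes \sigma_B\}
\end{align}
and
\begin{align}
H_{\max}(A|C)_{\rho} = \log \max_{\sigma_C \in \cS(C)} F(\rho_{AC}, \id_A \otimes \sigma_C)^2.
\end{align}
The goal is to show that these two optimizations are dual via Uhlmann's theorem. Concretely, fix a state $\sigma_B$ and take any purification $\ket{\sigma}_{BC}$, so that $\sigma_C = \tr_B[\proj{\sigma}_{BC}]$. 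An inequality $\rho_{AB} \leq \lambda \id_A \otimes \sigma_B$ is equivalent, after taking matrix elements against pure states of $A$, to a bound on $\langle \psi|_{AC} \rho_{AC} |\psi\rangle_{AC}$ for the partner purifications; Uhlmann's theorem turns this into a fidelity statement, and the optimizations over $\sigma_B$ and over $\sigma_C$ coincide since every $\sigma_C$ arises from some $\sigma_B$ this way. This yields $H_{\min}(A|B)_{\rho|\sigma} = -\log F(\rho_{AC}, \id_A \otimes \sigma_C)^2$ state by state, and optimizing gives the non-smooth duality.

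For the smooth version, the key observation is that the smoothing ball in purified distance is preserved by taking purifications. If $\tilde\rho_{AB} \in \mathcal{B}^\eps(\rho_{AB})$, Uhlmann's theorem produces a purification $\tilde\rho_{ABC}$ of $\tilde\rho_{AB}$ with $P(\tilde\rho_{ABC}, \rho_{ABC}) = P(\tilde\rho_{AB}, \rho_{AB}) \leq \eps$; by monotonicity of the purified distance under the partial trace over $B$, the marginal $\tilde\rho_{AC}$ lies in $\mathcal{B}^\eps(\rho_{AC})$. The same argument works symmetrically swapping the roles of $B$ and $C$, so the $\eps$-balls on the $AB$ and $AC$ sides are in bijection via purifications. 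Combining this with the non-smooth duality applied to each $\tilde\rho_{ABC}$ gives
\begin{align}
H_{\max}^{\eps}(A|B)_{\rho} = \min_{\tilde\rho_{AB} \in \mathcal{B}^\eps(\rho_{AB})} H_{\max}(A|B)_{\tilde\rho} = \min_{\tilde\rho_{AC} \in \mathcal{B}^\eps(\rho_{AC})} (-H_{\min}(A|C)_{\tilde\rho}) = -H_{\min}^{\eps}(A|C)_{\rho}.
\end{align}

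The main obstacle is a careful handling of sub-normalized states in the variational formulas, since the smooth entropies are defined over $\mathcal{S}_{\leq}$ and the Uhlmann/fidelity identities classically require normalization. Using the generalized fidelity $\bar F$ and its connection to the purified distance (both already introduced earlier in the paper) resolves this, but one has to verify that the optimizers in the two formulations match up even when $\tr[\tilde\rho] < 1$; this is the one place where the argument is more than a pure one-line application of Uhlmann's theorem.
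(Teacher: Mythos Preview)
The paper does not supply its own proof of this lemma: it is stated with a citation to \cite{Tomamichel09} and used as a black box. Your sketch is essentially the standard argument from that reference --- non-smooth duality via Uhlmann's theorem, then lifting to the smooth case using the fact that the purified distance makes $\eps$-balls correspond under purification --- and the obstacle you flag (handling sub-normalized states via the generalized fidelity $\bar F$) is indeed the only place requiring care. So there is nothing to compare against in this paper; your approach matches the cited source.
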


The following shows that the min-entropy can not increase too much by a measurement on the first system.

\begin{lemma}\label{lem:qcvscc}
Let $\rho_{AB} \in \states(AB)$, $\eps\geq0$, and $\{P_{x}\}_{x=1}^{|X|}$ be a projective rank-one measurement on $A$. Then
\begin{align}
H_{\min}^{\eps}(X|B)_{\rho}\leq H_{\min}^{\eps}(A|B)_{\rho}+\log|X|\ .
\end{align}
\end{lemma}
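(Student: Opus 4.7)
The plan is to derive the bound by passing to the dual smooth max-entropy picture and exploiting a particularly clean Stinespring dilation of the rank-one projective measurement. Concretely, writing $P_x=|\phi_x\rangle\langle\phi_x|$, I would define the isometry $W_{A\to XD}=\sum_x |x\rangle_X\otimes |x\rangle_D \otimes \langle\phi_x|_A$ with $|D|=|X|$. A direct computation shows that $\Tr_D[W(\cdot)W^\dagger]$ equals the measurement channel $\cM_{A\to X}$ producing the classical register $X$. Fix an arbitrary purification $|\rho\rangle_{ABC}$ of $\rho_{AB}$, and let $|\theta\rangle_{XDBC}:=(W\otimes\id_{BC})|\rho\rangle$. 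Then $|\theta\rangle$ is a purification of the classical post-measurement state $\rho_{XB}$, and the key identity $\theta_{XDC}=W\rho_{AC}W^\dagger$ holds.

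By the duality $H_{\min}^{\eps}(\,\cdot\,|B)=-H_{\max}^{\eps}(\,\cdot\,|C)$ of Lemma~\ref{lem:duality}, the target inequality is equivalent to
\[
 H_{\max}^{\eps}(A|C)_\rho \leq H_{\max}^{\eps}(X|DC)_\theta + \log|X|.
\]
I would prove this via two clean steps. First, since $W$ is an isometry acting on the $A$-system, isometric invariance of the smooth max-entropy gives $H_{\max}^{\eps}(A|C)_\rho=H_{\max}^{\eps}(XD|C)_\theta$. Second, I would apply the dimension chain rule $H_{\max}^{\eps}(XD|C)_\theta\leq H_{\max}^{\eps}(X|DC)_\theta + \log|D|$; with $|D|=|X|$ the inequality is complete. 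The non-smooth form of this chain rule follows in one line from the variational definition $H_{\max}(A|B)_\eta=\log\max_\sigma F(\eta_{AB},\id_A\otimes\sigma_B)^2$: substituting $\sigma_{DC}=|D|^{-1}\id_D\otimes\sigma_C^{\ast}$ into the optimization for $H_{\max}(X|DC)$ reproduces, up to a factor $|D|^{-1}$, the value $F(\theta,\id_{XD}\otimes\sigma_C^\ast)^2$ defining $H_{\max}(XD|C)$. The smooth version transfers by picking the $\cB^{\eps}$-optimizer $\tilde\theta$ for $H_{\max}^{\eps}(X|DC)$ on one side and using $H_{\max}^{\eps}(XD|C)_\theta\leq H_{\max}(XD|C)_{\tilde\theta}$ on the other, preserving the same $\eps$. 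Reversing the duality then yields the lemma.

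The main obstacle I anticipate is verifying that the chain-rule step transfers to smooth entropies with the same $\eps$ on both sides, without incurring any additive $\eps$-penalty. If such a loss turned out to be unavoidable, I would fall back on an elementary pinching route: the inequality $\rho_{AB}\leq|X|\,\Delta_A(\rho_{AB})$ immediately gives the non-smooth version of the lemma, because the dephased state $\Delta_A(\rho_{AB})$ is isometric to $\rho_{XB}$. Upgrading to the smooth version would then require lifting the $\eps$-optimizer $\tilde\rho_{XB}\in\cB^{\eps}(\rho_{XB})$ to a preimage $\tilde\rho_{AB}\in\cB^{\eps}(\rho_{AB})$ under $\cM$, achievable via Uhlmann's theorem applied to the Stinespring dilation above. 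This alternative avoids chain rules altogether, but is more delicate since the naive diagonal embedding of $\tilde\rho_{XB}$ back into $A$ is generally not $\eps$-close to $\rho_{AB}$ when the latter carries significant off-diagonal weight in the $\{|\phi_x\rangle\}$ basis.
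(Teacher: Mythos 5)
Your proposal is correct and follows essentially the same route as the paper's own proof: an isometric (Stinespring) dilation $V_{A\rightarrow XX'}$ of the rank-one measurement, duality to the smooth max-entropy on the purifying system, and the one-line dimension bound obtained by substituting $\frac{\id_{X'}}{|X|}\otimes\sigma$ into the fidelity optimization, with the smoothing parameter preserved by evaluating both sides on the single $\cB^{\eps}$-optimizer. The only cosmetic difference is that you invoke isometric invariance at the max-entropy level rather than at the min-entropy level before dualizing.
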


\begin{proof}
Let $V_{A\rightarrow XX'}$ be an isometric purification of $\{P_{x}\}$ and $\rho_{XX'BB'}$ a purification of $\rho_{XX'B}=V\rho_{AB}V^{\dagger}$. By the invariance of the min-entropy under local isometries~\cite[Lemma 13]{Tomamichel09} and the duality between the min- and max-entropy (Lemma~\ref{lem:duality}), the proposition becomes equivalent to
\begin{align}
H_{\max}^{\eps}(XX'|B')_{\rho}\leq H_{\max}^{\eps}(X|X'B')_{\rho}+\log|X|\ .
\end{align}
For $\hat{\rho}_{XX'B'}\in\cB^{\eps}(\rho_{XX'B'})$ and $\hat{\sigma}_{X'B'}\in\cS(X'B')$ such that
\begin{align}
H_{\max}^{\eps}(X|X'B')_{\rho}=\log F(\hat{\rho}_{XX'B},\id_{X}\otimes\hat{\sigma}_{X'B'})^{2}\ ,
\end{align}
as well as $\bar{\rho}_{XX'B'}\in\cB^{\eps}(\rho_{XX'B'})$ and $\bar{\sigma}_{B}\in\cS(B)$ such that
\begin{align}
H_{\max}^{\eps}(XX'|B')_{\rho}=\log F(\bar{\rho}_{XX'B},\id_{XX'}\otimes\bar{\sigma}_{B'})^{2}\ ,
\end{align}
the claim follows by the definition of the max-entropy (Equation~\eqref{eq:maxentropy}-\eqref{eq:smoothmaxentropy}) together with the observation
\begin{align}
H_{\max}^{\eps}(XX'|B')_{\rho}\leq\log \left( |X|\cdot F(\hat{\rho}_{XX'B'},\id_{X}\otimes\frac{\id_{X'}}{|X|}\otimes\bar{\sigma}_{B'})^{2} \right) \leq\log F(\hat{\rho}_{XX'B'},\id_{X}\otimes\hat{\sigma}_{X'B'})^{2}+\log|X|\ .
\end{align}
\end{proof}

\subsection{Asymptotic behavior}

The von Neumann entropy can be seen as a special case of the smooth min-entropy. The underlying technical statement that makes this precise, is the asymptotic equipartition property (AEP) for the smooth conditional min-entropy.

\begin{lemma}\cite[Remark 10]{Tomamichel08}\label{lem:aep}
Let $\rho_{AB}\in\cS(AB)$, $\eps>0$, and $n\geq2\left(1-\eps^{2}\right)$. Then,
\begin{align}
\frac{1}{n}\hmin^{\eps}(A|B)_{\rho^{\otimes n}|\rho^{\otimes n}}\geq H(A|B)_{\rho}-\frac{4\sqrt{1-2\log\eps}\left(2+\frac{\log|A|}{2}\right)}{\sqrt{n}}\ .
\end{align}
\end{lemma}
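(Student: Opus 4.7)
\medskip\noindent\textit{Proof proposal for Lemma~\ref{lem:aep}.} The statement is an asymptotic equipartition property: for i.i.d.\ states $\rho^{\otimes n}$, the alternative smooth conditional min-entropy $H_{\min}^\eps(A|B)_{\rho^{\otimes n}|\rho^{\otimes n}}$ equals $n\,H(A|B)_\rho$ up to an additive $O(\sqrt n)$ correction with an explicit constant depending on $\log|A|$ and $\log(1/\eps)$. The plan is to reduce it to a classical central-limit/concentration statement by spectral methods, using pinching to handle the non-commutativity between $\rho_{AB}$ and $\id_A\otimes\rho_B$.

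First I would introduce the conditional information operator $K = \log\rho_{AB} - \log(\id_A\otimes\rho_B)$ on $\mathrm{supp}(\rho_{AB})$. A short computation gives $\tr[\rho_{AB}K] = -H(A|B)_\rho$, and one can bound the variance $\tr[\rho_{AB}(K-\tr[\rho K])^2]$ by $(2+\tfrac12\log|A|)^2$ using the standard operator bound $\|-\log\sigma\|_\infty \leq \log\dim$ on the support of a full-rank $\sigma$ combined with the contribution from $\log\rho_B$. This is the variance scale that appears in the final constant.

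Since $\rho_{AB}$ and $\id\otimes\rho_B$ need not commute, $K$ has no common spectral basis with $\rho_{AB}$, and a direct typical-subspace argument fails. I would therefore pinch $\rho_{AB}^{\otimes n}$ in the joint eigenbasis of $(\id_A\otimes\rho_B)^{\otimes n}$, producing a state $\bar\rho$ that commutes with $\id\otimes\rho_B^{\otimes n}$ and satisfies $\bar\rho \geq \rho_{AB}^{\otimes n}/g_n$, where $g_n\leq (n+1)^{|B|-1}$ is the number of distinct eigenvalues of $\rho_B^{\otimes n}$; the pinching also preserves the $B^n$-marginal. In this commuting picture, $K^{(n)} = \sum_i K^{(i)}$ has a joint spectral decomposition with $\bar\rho$, and its eigenvalues are distributed as a sum of $n$ i.i.d.\ bounded real random variables with mean $-H(A|B)_\rho$ and variance controlled by the previous step. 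A classical Chebyshev inequality, with threshold tuned to $1-\eps^2$, yields a spectral projector $\Pi$ of $K^{(n)}$ with $\tr[\Pi\bar\rho]\geq 1-\eps^2$ whose range lies in the eigenvalue window $-nH(A|B)_\rho \pm c\sqrt n$, with $c = 4\sqrt{1-2\log\eps}\,(2+\tfrac12\log|A|)$.

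The last step converts this typical-projector statement into an operator inequality for the min-entropy. The sub-normalized state $\tilde\rho = \Pi\rho_{AB}^{\otimes n}\Pi$ lies in $\cB^\eps(\rho^{\otimes n})$, since the generalized-fidelity formulation of the purified distance used in the paper gives $P(\tilde\rho,\rho^{\otimes n})^2 \leq 1-\tr[\Pi\rho^{\otimes n}] \leq \eps^2$. On the range of $\Pi$ the eigenvalue window is (in the commuting picture) equivalent to $\tilde\rho \leq 2^{-nH(A|B)_\rho + c\sqrt n}\cdot \id_{A^n}\otimes\tilde\rho_{B^n}$, which is precisely the defining inequality $H_{\min}(A^n|B^n)_{\tilde\rho|\tilde\rho} \geq nH(A|B)_\rho - c\sqrt n$; dividing by $n$ yields the lemma. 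The hard part will be keeping the pinching correction at $O(\log n)$ rather than letting it pollute the leading $\sqrt n$ term, and pairing the Chebyshev tail estimate tightly enough to the variance bound from step two to produce the specific constant $4\sqrt{1-2\log\eps}(2+\tfrac12\log|A|)$ rather than a looser form; an alternative route via a direct quantum information-spectrum argument avoids pinching but makes extraction of explicit non-asymptotic constants at least as delicate.
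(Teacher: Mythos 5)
First, note that the paper does not prove this statement at all: Lemma~\ref{lem:aep} is imported verbatim from [Tomamichel08, Remark 10], whose proof goes through conditional R\'enyi entropies --- one relates $\hmin^{\eps}(A|B)_{\rho|\rho}$ to a conditional R\'enyi entropy of order $\alpha>1$ at the cost of a term $\log(2/\eps^2)/(\alpha-1)$, uses exact additivity of that R\'enyi quantity under tensor powers, expands it around $\alpha=1$ with the second-order term controlled by $\log\nu\leq 2+\tfrac12\log|A|$ (where $\nu$ is built from the \emph{conditional} min- and max-entropies, which lie in $[-\log|A|,\log|A|]$), and finally optimizes $\alpha=1+O(1/\sqrt{n})$. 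That is exactly how the constant $4\sqrt{1-2\log\eps}\,(2+\tfrac12\log|A|)$ arises. Your proposal instead attempts a pinching/information-spectrum reproof, which is a legitimate alternative strategy in principle (it underlies later second-order analyses), but as written it has concrete gaps.

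Two steps fail. (i) Your variance bound is unjustified: the inequality $\|-\log\sigma\|_\infty\leq\log\dim$ is false (the operator norm is $-\log\lambda_{\min}(\sigma)$, which is unbounded even on the support), so bounding $\tr[\rho_{AB}(K-\tr[\rho K])^2]$ by $(2+\tfrac12\log|A|)^2$ uniformly in $|B|$ and in the spectrum of $\rho_{AB}$ needs a genuine argument; in the cited proof the $|A|$-only dependence comes from conditional $H_{\min}/H_{\max}$ bounds, not from operator norms of logarithms. Moreover, even granting the variance bound, Chebyshev with failure probability $\eps^2$ gives a deviation scaling like $1/\eps$, not like $\sqrt{1-2\log\eps}=\sqrt{\log(2/\eps^2)}$; since $K^{(i)}$ is unbounded you cannot upgrade to Hoeffding, so the claimed constant is out of reach for this tail estimate --- you would need a Bernstein/Chernoff-type bound, which is essentially the R\'enyi route again. (ii) After pinching in the eigenbasis of $\rho_B^{\otimes n}$, the state $\bar\rho$ is \emph{not} a product state and does not commute with the unpinched $K^{(n)}$; if you instead use $\log\bar\rho-\log(\id\otimes\rho_B^{\otimes n})$ it is no longer a sum of single-copy terms, so the ``sum of $n$ i.i.d.\ variables'' picture does not hold as stated. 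The pinching factor $g_n\leq(n+1)^{|B|-1}$ also injects a $|B|$-dependent additive term that is absent from the stated bound, which must hold for all $n\geq 2(1-\eps^2)$, not just asymptotically. Finally, the smoothing step needs the operator ordering $\Pi\rho^{\otimes n}\Pi\leq 2^{-\lambda}\,\id\otimes\rho_B^{\otimes n}$ for a projector defined relative to the original (non-commuting) operators, which your ``commuting picture'' argument does not deliver without the standard non-positive-part construction. In short, the skeleton could be repaired into a valid information-spectrum proof of a weaker, asymptotic statement, but it does not establish the lemma with the stated explicit constants; for that, follow the R\'enyi-entropy proof of the cited reference.
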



\section{Technical Lemmata}\label{app:technical}

Throughout, we will need a number of technical results and definitions, summarized here for convenience. In the following we state all results in our own notation and only as general as we need them (which may result in a simplification compared to the given references). We start with a general decoupling result about exact unitary 2-designs.

\begin{lemma}\cite[Theorem 3.7]{frederic:decoupling}\label{lem:2design}
Let $A=A_{1}A_{2}$, and consider the map $\cT_{A\rightarrow A_{1}}$ as defined in Equation~\eqref{eq:meas-map}. Then, if $\left\{U_1, \dots, U_{L}\right\}$ defines an exact unitary 2-design (Definition~\ref{def:two-design}), we have for $\delta\geq0$,
\begin{align}\label{eq:2design}
\frac{1}{L}\sum_{i=1}^{L}\left\|\cT_{A\rightarrow A_{1}}(U_{i}\rho_{AE}U_{i}^{\dagger})-\frac{\id_{A_{1}}}{|A_{1}|}\otimes\rho_{E}\right\|_{1}\leq\sqrt{\frac{|A_{1}|}{|A|}2^{-\hmin^{\delta}(A|E)_{\rho}}}+2\delta\ .
\end{align}
\end{lemma}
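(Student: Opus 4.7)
The plan follows the standard one-shot decoupling recipe of \cite{Berta08,frederic:decoupling,Wullschleger08}: smooth the state, bound the trace norm by a weighted Hilbert-Schmidt norm, and evaluate the resulting average via the 2-design property. To begin, I would pick $\tilde\rho_{AE}\in\cB^{\delta}(\rho_{AE})$ achieving $\hmin(A|E)_{\tilde\rho}=\hmin^{\delta}(A|E)_{\rho}$. Using monotonicity of the trace distance under the CPTP map $\cT_{A\rightarrow A_{1}}\circ U_i(\cdot)U_i^{\dagger}$ together with the estimate $\|\rho-\tilde\rho\|_1\le 2P(\rho,\tilde\rho)\le 2\delta$ from~\eqref{eq:purifiedVStrace}, the triangle inequality shows that replacing $\rho$ by $\tilde\rho$ on the left-hand side of~\eqref{eq:2design} contributes an additive $O(\delta)$ error, absorbed into the $2\delta$ term of the claim. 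It therefore suffices to prove the non-smoothed bound for $\tilde\rho$.

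For the non-smoothed statement, I would use the standard $L_1$-to-$L_2$ reduction: for any Hermitian $Y_{A_1 E}$ and any positive $\sigma_E$ dominating the $E$-marginal, Cauchy-Schwarz gives
\begin{align*}
\|Y_{A_1 E}\|_1 \le \sqrt{|A_1|\,\tr[\sigma_E]}\cdot\|(\id_{A_1}\otimes \sigma_E^{-1/4})\,Y\,(\id_{A_1}\otimes \sigma_E^{-1/4})\|_2.
\end{align*}
Choosing $\sigma_E=\tilde\rho_E$ (so $\tr[\sigma_E]\le 1$) and pulling the average over $i$ inside the square-root via Jensen's inequality reduces the task to bounding
\begin{align*}
\frac{|A_1|}{L}\sum_{i=1}^{L}\|(\id_{A_1}\otimes\tilde\rho_E^{-1/4})\,Y_i\,(\id_{A_1}\otimes\tilde\rho_E^{-1/4})\|_2^2,
\end{align*}
where $Y_i=\cT_{A\rightarrow A_{1}}(U_i\tilde\rho_{AE}U_i^{\dagger})-\frac{\id_{A_{1}}}{|A_{1}|}\otimes\tilde\rho_E$. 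I would then expand each Hilbert-Schmidt norm via the swap trick $\|M\|_2^2=\tr[(M\otimes M)F]$, which rewrites the sum as a two-copy trace against $\frac{1}{L}\sum_i U_i^{\otimes 2}(\cdot)(U_i^{\dagger})^{\otimes 2}$ on $A\otimes A$. By Definition~\ref{def:two-design} this average equals its Haar counterpart, and by Schur-Weyl duality it collapses to a linear combination of the identity and the swap on $A\otimes A$. A careful bookkeeping of the two $\cT_{A\rightarrow A_{1}}$ factors (which dephase the $A_1$-leg and trace out the $A_2$-leg) shows that the identity contribution is annihilated by the subtracted $\id_{A_{1}}/|A_{1}|\otimes\tilde\rho_E$, leaving a swap contribution equal to $\frac{1}{|A|}\tr[(\id_A\otimes\tilde\rho_E^{-1/2})\tilde\rho_{AE}^{2}]=\frac{1}{|A|}\,2^{-H_2(A|E)_{\tilde\rho|\tilde\rho}}$, which by Lemma~\ref{lem:hmin-h2-rhorho} is at most $\frac{1}{|A|}\,2^{-\hmin(A|E)_{\tilde\rho}}$. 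Collecting the factors and taking the square root then gives $\sqrt{\tfrac{|A_{1}|}{|A|}\,2^{-\hmin^{\delta}(A|E)_{\rho}}}+2\delta$.

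The main technical obstacle is the Schur-Weyl/2-design calculation: one must track how $\cT_{A\rightarrow A_{1}}$ interacts with both the identity and the swap components of the Haar-averaged operator and verify the exact cancellation against the product-state subtraction. A cleaner route, which is the one actually taken in~\cite{frederic:decoupling}, is to invoke their general one-shot decoupling theorem and specialise it to our measurement channel. The normalised Choi state of $\cT_{A\rightarrow A_{1}}$ is $\omega_{A'A_{1}}=\tfrac{1}{|A|}\sum_{a_{1},a_{2}}\proj{a_{1}a_{2}}_{A'}\otimes\proj{a_{1}}_{A_{1}}$; with reference $\sigma_{A_1}=\id_{A_1}/|A_1|$ one immediately checks $\hmin(A'|A_{1})_{\omega}=\log|A_{2}|$, and $\cT_{A\rightarrow A_{1}}(\id_A/|A|)=\id_{A_{1}}/|A_{1}|$. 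Plugging these numbers into the general decoupling exponent $2^{-\frac{1}{2}(\hmin(A|E)_{\rho}+\hmin(A'|A_{1})_{\omega})}$ recovers exactly $\sqrt{|A_{1}|/|A|\cdot 2^{-\hmin(A|E)_{\rho}}}$, and the smoothing step then yields the stated $+2\delta$.
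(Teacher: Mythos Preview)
The paper does not supply its own proof of this lemma; it is quoted verbatim from \cite[Theorem~3.7]{frederic:decoupling}. Your second route---computing the Choi data of $\cT_{A\rightarrow A_{1}}$ and plugging it into the general one-shot decoupling theorem of that reference---is exactly how the statement arises there, so your proposal matches the intended argument.

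There is, however, a slip in your first (direct) route. With the choice $\sigma_E=\tilde\rho_E$, Lemma~\ref{lem:hmin-h2-rhorho} only yields $2^{-H_2(A|E)_{\tilde\rho|\tilde\rho}}\le 2^{-\hmin(A|E)_{\tilde\rho|\tilde\rho}}$, and $\hmin(A|E)_{\tilde\rho|\tilde\rho}$ can be strictly smaller than the optimised quantity $\hmin(A|E)_{\tilde\rho}=\hmin^{\delta}(A|E)_{\rho}$; the inequality you need therefore goes the wrong way. The fix is immediate: take $\sigma_E$ to be the optimiser in the definition of $\hmin(A|E)_{\tilde\rho}$ rather than $\tilde\rho_E$ (this is precisely what the proof of Theorem~\ref{thm:full-set-mub} does when it writes ``by choosing $\sigma_{E}$ appropriately''). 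With that change your direct calculation goes through and reproduces the claimed bound.
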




The full set of MUBs generates a complex projective 2-design.

\begin{lemma}\cite{KR05}\label{lem:mub-design}
Let $\left\{U_{1},\dots,U_{|A|+1}\right\}$ define a full set of mutually unbiased bases of $A$. Then
\begin{align}
\frac{1}{|A|(|A|+1)} \sum_{i=1}^{|A|+1} \sum_{a \in [d]} (U_i \proj{a} U_i^{\dagger})^{\ox 2} = \frac{2 \Pi^{\sym}}{|A|(|A|+1)},
\end{align}
where $\Pi^{\sym}$ is the projector onto the symmetric subspace spanned by the vectors $\ket{aa'} + \ket{a'a}$ for $a,a' \in [A]$ .
\end{lemma}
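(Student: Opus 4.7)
The plan is to prove $M := \sum_{i=1}^{|A|+1}\sum_{a\in[d]} (U_i\proj{a}U_i^\dagger)^{\otimes 2} = 2\,\Pi^{\sym}$, from which the lemma follows by dividing both sides by $K := |A|(|A|+1)$. My approach is a frame-potential / Cauchy--Schwarz argument in the Hilbert--Schmidt inner product, using only two ingredients: that $M$ is supported on the symmetric subspace, and that the MUB condition gives a clean formula for all pairwise overlaps.

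First I would note that every summand $(\proj{v})^{\otimes 2}$ lies in the symmetric subspace, so $M = \Pi^{\sym} M \Pi^{\sym}$ and in particular $\tr(M\,\Pi^{\sym}) = \tr M = K$ since each of the $K$ summands has unit trace. Next I would compute $\tr(M^2)$ directly. Expanding and using the defining MUB relation $|\bra{a}U_iU_j^\dagger\ket{b}|^2 = \delta_{ij}\delta_{ab} + (1-\delta_{ij})/|A|$ gives
\begin{align*}
\tr(M^2) \;=\; \sum_{\substack{i,j\in[|A|+1]\\ a,b\in[|A|]}} |\bra{a}U_iU_j^\dagger\ket{b}|^4,
\end{align*}
and this splits into three regimes. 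The diagonal $(i,a)=(j,b)$ contributes $K$ terms of value $1$; the same-basis-different-vector case ($i=j$, $a\neq b$) vanishes by orthonormality of each basis; and the different-bases case contributes $(|A|+1)|A|$ ordered distinct basis pairs times $|A|^2$ vector pairs times $|A|^{-2}$ per term, giving a further $|A|(|A|+1) = K$. Therefore $\tr(M^2) = 2K$.

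Since $\tr((\Pi^{\sym})^2) = \tr \Pi^{\sym} = |A|(|A|+1)/2 = K/2$, Cauchy--Schwarz for the Hilbert--Schmidt inner product yields
\begin{align*}
K \;=\; \tr(M\,\Pi^{\sym}) \;\leq\; \sqrt{\tr(M^2)\,\tr((\Pi^{\sym})^2)} \;=\; \sqrt{2K\cdot K/2} \;=\; K,
\end{align*}
so equality is attained. Equality in Cauchy--Schwarz forces $M \propto \Pi^{\sym}$, and the normalization constant is pinned down by $c = \tr(M)/\tr(\Pi^{\sym}) = K/(K/2) = 2$, giving $M = 2\,\Pi^{\sym}$.

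The main obstacle is bookkeeping: correctly carving the double sum into the three regimes and applying the MUB identity only to the cross-basis one, with plain orthonormality handling the same-basis-different-vector terms. The conceptual step---that a positive operator supported on the symmetric subspace with prescribed trace and Hilbert--Schmidt norm must be proportional to $\Pi^{\sym}$---reduces to the single Cauchy--Schwarz line once the frame-potential identity $\tr(M^2)=2K$ is in hand, and no further structure of the unitaries beyond the pairwise overlap formula is needed.
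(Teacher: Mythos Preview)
Your proof is correct. The paper does not give its own proof of this lemma; it is quoted as a known result from \cite{KR05} in the appendix of technical lemmata, so there is nothing to compare against in the paper itself.

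For context, your argument is essentially the standard Welch-bound / frame-potential proof that Klappenecker and R\"otteler use: one shows that the collection of MUB vectors saturates the lower bound on $\sum_{v,w}|\langle v|w\rangle|^{4}$, and saturation of that bound (your Cauchy--Schwarz equality) is equivalent to the 2-design property. Two minor remarks on bookkeeping: (i) with the vectors $U_i\ket{a}$ as written in the lemma, the overlap appearing in $\tr(M^2)$ is $|\bra{a}U_i^{\dagger}U_j\ket{b}|$ rather than $|\bra{a}U_iU_j^{\dagger}\ket{b}|$, though the MUB value $1/|A|$ is the same either way; (ii) the paper's definition of MUBs uses an inequality $|\bra{a'}U_jU_i^{\dagger}\ket{a}|^2\le|A|^{-1}$, but summing over $a'$ forces equality, so your use of the exact value $1/|A|$ is justified. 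Neither point affects the validity of your argument.
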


The following well known \lq swap trick\rq~is used to prove decoupling statements.

\begin{lemma}\label{lem:swap-trick}
Let $M,N\in\cL(A)$. Then,
\begin{align}
\tr [ MN ] = \tr [(M \ox N)F],
\end{align}
where $F = \sum_{aa'} \ket{aa'}\bra{a'a}$ is the swap operator.
\end{lemma}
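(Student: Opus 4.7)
The plan is to verify the identity by direct computation in the computational basis. The right-hand side involves the swap operator $F = \sum_{aa'}\ket{aa'}\bra{a'a}$ acting on a bipartite system $A\otimes A$, and the key observation is that $F$ implements the permutation $\ket{bb'}\mapsto\ket{b'b}$ on product basis states. So I would first record the action $(M\otimes N)F\ket{bb'} = (M\otimes N)\ket{b'b} = (M\ket{b'})\otimes(N\ket{b})$.

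Next, I would take the trace on the right-hand side by summing $\bra{bb'}(M\otimes N)F\ket{bb'}$ over all basis vectors $\ket{bb'}$ of $A\otimes A$. Using the factorization above, this gives
\begin{align}
\tr[(M\otimes N)F] = \sum_{b,b'}\bra{b}M\ket{b'}\bra{b'}N\ket{b} = \sum_{b,b'} M_{bb'}N_{b'b}.
\end{align}
On the left-hand side, writing $M=\sum_{ij}M_{ij}\ket{i}\bra{j}$ and $N=\sum_{kl}N_{kl}\ket{k}\bra{l}$ and using $\bra{j}\ket{k}=\delta_{jk}$, one immediately obtains $\tr[MN] = \sum_{b,b'} M_{bb'}N_{b'b}$, matching the previous expression.

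There is essentially no obstacle here; the only thing worth being careful about is bookkeeping of indices and making sure the ordering of the tensor factors in $F$ is consistent with the ordering used for $M\otimes N$. If one prefers a basis-free argument, an alternative route is to note that $F$ is the canonical operator implementing the flip on $A\otimes A$, so by bilinearity it suffices to check the identity on rank-one operators $M=\ket{i}\bra{j}$, $N=\ket{k}\bra{l}$, where both sides reduce to $\delta_{il}\delta_{jk}$. Either approach closes the proof in a few lines.
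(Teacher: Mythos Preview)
Your proof is correct. The paper does not actually supply a proof of this lemma; it is stated as a well-known technical fact without demonstration, so there is no paper argument to compare against, and your direct basis computation (or the rank-one reduction you mention) is exactly the standard way to verify it.
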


The following is called operator Chernoff bound.

\begin{lemma}\cite[Theorem 19]{AW02}
\label{lem:operator-chernoff}
Let $X_1, \dots, X_L$ be iid random variables and $0 \leq X_i \leq \id$, $\ex{X_i} = \Gamma \geq \alpha \id$. Then
\begin{align}
\prob{\frac{1}{L} \sum_{i=1}^L X_i \leq (1+\eta) \Gamma} \geq 1 - d \exp \left(- \frac{L \eta^2 \alpha}{2 \ln 2} \right).
\end{align}
\end{lemma}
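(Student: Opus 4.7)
The plan is to follow the standard Ahlswede--Winter matrix Chernoff argument, which mirrors the scalar Chernoff bound but pays the price of non-commutativity through Golden-Thompson.

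First I would \emph{whiten} the problem. Since $\Gamma \geq \alpha \id$ with $\alpha > 0$, $\Gamma$ is strictly positive and invertible, so setting $Y_i := \Gamma^{-1/2} X_i \Gamma^{-1/2}$ gives $\ex{Y_i} = \id$ and $0 \leq Y_i \leq \alpha^{-1}\id$ (using $X_i \leq \id$ and $\Gamma^{-1} \leq \alpha^{-1}\id$). Conjugating by $\Gamma^{-1/2}$, the target event $\{\frac{1}{L}\sum_i X_i \not\leq (1+\eta)\Gamma\}$ becomes $\{\lambda_{\max}(\sum_i Y_i) > L(1+\eta)\}$, and it suffices to bound the probability of the latter.

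Next I would apply the operator exponential Markov bound: for any $\tau>0$,
\begin{align*}
\prob{\lambda_{\max}\bigl(\textstyle\sum_i Y_i\bigr) > L(1+\eta)} \;\leq\; e^{-\tau L(1+\eta)}\, \ex{\tr \exp(\textstyle\tau \sum_i Y_i)},
\end{align*}
using $\lambda_{\max}(M) \leq \tr(M)$ for $M \geq 0$. The heart of the proof is to control this matrix moment generating function. I would iterate Golden-Thompson $\tr(e^{A+B})\leq \tr(e^A e^B)$ together with $\tr(AB) \leq \|B\|_\infty \tr(A)$ for $A,B\geq 0$, peeling off one factor $e^{\tau Y_j}$ at a time and using the i.i.d.\ assumption (conditioning on $Y_1,\dots,Y_{j-1}$), to obtain
\begin{align*}
\ex{\tr \exp(\textstyle\tau \sum_i Y_i)} \;\leq\; d\,\bigl\|\ex{e^{\tau Y_1}}\bigr\|_\infty^{L},
\end{align*}
where the factor $d$ comes from $\tr(e^{\tau Y_1}) \leq d \|e^{\tau Y_1}\|_\infty$ at the last step.

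Then I would bound the single-term MGF. The convex chord bound $e^{\tau y} \leq 1 + \alpha(e^{\tau/\alpha}-1)\,y$ on $[0,\alpha^{-1}]$ lifts via the functional calculus to $e^{\tau Y_1} \leq \id + \alpha(e^{\tau/\alpha}-1)\,Y_1$, so taking the expectation gives $\ex{e^{\tau Y_1}} \leq \bigl(1+\alpha(e^{\tau/\alpha}-1)\bigr)\id \leq \exp\!\bigl(\alpha(e^{\tau/\alpha}-1)\bigr)\id$. Assembling everything yields
\begin{align*}
\prob{\lambda_{\max}\bigl(\textstyle\sum_i Y_i\bigr) > L(1+\eta)} \;\leq\; d\,\exp\Bigl(L\alpha(e^{\tau/\alpha}-1) - \tau L(1+\eta)\Bigr).
\end{align*}
Optimising with $\tau = \alpha \ln(1+\eta)$ produces the exponent $-L\alpha\bigl((1+\eta)\ln(1+\eta)-\eta\bigr)$, and a standard scalar convex inequality $(1+\eta)\ln(1+\eta)-\eta \geq \eta^2/(2\ln 2)$ (valid in the relevant regime, and responsible for the $\ln 2$ in the stated constant) gives precisely the advertised tail $d\exp\!\bigl(-L\eta^2\alpha/(2\ln 2)\bigr)$.

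The main obstacle is the non-commutative peeling step: Golden-Thompson only handles two factors at once, so the iterated bound fundamentally rests on alternating Golden-Thompson with the weaker (but order-preserving) trace inequality $\tr(AB)\leq \|B\|_\infty \tr A$. Once this is in place, everything else reduces to the familiar scalar Chernoff calculation.
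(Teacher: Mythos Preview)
The paper does not prove this lemma; it is quoted from Ahlswede--Winter \cite[Theorem 19]{AW02} without proof. Your outline is exactly the Ahlswede--Winter argument (whitening by $\Gamma^{-1/2}$, operator Markov on $\lambda_{\max}$, iterated Golden--Thompson combined with $\tr(AB)\leq \|B\|_\infty \tr A$, convex chord bound on $[0,\alpha^{-1}]$, then optimise in $\tau$), and up through the optimisation it is correct and produces the exponent $L\alpha\bigl((1+\eta)\ln(1+\eta)-\eta\bigr)$.

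The gap is in your last step. The scalar inequality you invoke, $(1+\eta)\ln(1+\eta)-\eta \geq \eta^2/(2\ln 2)$, is false for every $\eta>0$: setting $f(\eta)=(1+\eta)\ln(1+\eta)-\eta$ one has $f(0)=f'(0)=0$ and $f''(\eta)=1/(1+\eta)\leq 1$, so $f(\eta)\leq \eta^2/2 < \eta^2/(2\ln 2)$ since $2\ln 2<2$. Thus the constant $1/(2\ln 2)$ in the exponent cannot be obtained from this estimate; the Chernoff route only delivers the weaker $\eta^2/3$ (for $\eta\in[0,1]$) or $\eta^2/(2+2\eta/3)$. Either the quoted constant carries a typo relative to \cite{AW02}, or the original proof reaches it via a different final simplification under an additional restriction on $\eta$; in any case the inequality you assert does not hold and should not be claimed as ``standard''. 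Everything else in your sketch is sound.
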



\section{Proofs of QC-Extractors}\label{app:proofs}

In this section, we provide the full proofs of our claims regarding QC-extractors. In the proofs we need the Hilbert-Schmidt norm, given by $\|\rho\|_{2}=\sqrt{\tr\left[\rho^{\dagger}\rho\right]}$.

\newtheorem*{thm-small-decoupling-set}{Theorem \ref{thm:small-decoupling-set}}
\begin{thm-small-decoupling-set}
Let $A=A_{1}A_{2}$ with $n = \log |A|$ and $\cT_{A\rightarrow A_{1}}$ be the measurement map defined in Equation~\eqref{eq:meas-map}. Let $\eps > 0$, $c$ be a sufficiently large constant, and
\begin{align}
\log |A_1| \leq n + k - 4 \log (1/\eps) - c \qquad \text{ as well as } \qquad \log L \geq \log |A_1| + \log n + 4 \log(1/\eps) + c \ .
\end{align}
Then, choosing $L$ unitaries $\left\{U_1, \dots, U_L\right\}$ independently according to the Haar measure defines a $(k, \eps)$-QC-extractor with high probability (see Equation~\eqref{eq:prob-decoupling-set} for a precise bound).
\end{thm-small-decoupling-set}

\begin{proof}
We use one-shot decoupling techniques as developed in~\cite{Berta08,frederic:decoupling,Wullschleger08,Szehr11,szehr:designs}. Let $U$ be a unitary on $A$. Using the Hoelder-type inequality (see e.g.~\cite{Bhatia97})
\begin{align}
\|\alpha\beta\gamma\|_{1}\leq\||\alpha|^{r}\|_{1}^{1/r}\||\beta|^{s}\|_{1}^{1/s}\||\gamma|^{t}\|_{1}^{1/t}
\end{align}
with $r=t=4$, $s=2$, and $\alpha=\gamma=(\id_{A_{1}}\otimes\rho_{E})^{1/4}$, $\beta=(\id_{A_{1}}\otimes\rho_{E})^{-1/4}\left(\cT(U \rho_{AE} U^{\dagger})-\frac{\id_{A_{1}}}{|A_{1}|} \ox \rho_E\right)(\id_{A_{1}}\otimes\rho_{E})^{-1/4}$, we get that\footnote{The inverses are generalized inverses.}
\begin{align}
\left\| \cT\left(U\rho_{AE}U^{\dagger}\right) - \frac{\id_{A_{1}}}{|A_{1}|} \ox \rho_E \right\|_1&
\leq|A_{1}|^{1/4}\sqrt{\tr\left[\left(\id_{A_{1}}\otimes\rho_{E}\right)^{-1/4}\left(\cT\left(U\rho_{AE}U^{\dagger}\right) - \frac{\id_{A_{1}}}{|A_{1}|} \ox \rho_E\right)\left(\id_{A_{1}}\otimes\rho_{E}\right)^{-1/4}\right]^{2}}|A_{1}|^{1/4}\\
&=|A_{1}|^{1/2}\left\|\left(\id_{A_{1}}\otimes\rho_{E}\right)^{-1/4}\left(\cT\left(U\rho_{AE}U^{\dagger}\right) - \frac{\id_{A_{1}}}{|A_{1}|} \ox \rho_E\right)\left(\id_{A_{1}}\otimes\rho_{E}\right)^{-1/4}\right\|_{2}\\
&=|A_{1}|^{1/2}\left\|\cT\left(U\tilde{\rho}_{AE}U^{\dagger}\right)-\frac{\id_{A_{1}}}{|A_{1}|}\otimes\tilde{\rho}_{E}\right\|_{2}\ ,
\end{align}
where $\tilde{\rho}_{AE}=(\id_{A}\otimes\rho_{E})^{-1/4}\rho_{AE}(\id_{A}\otimes\rho_{E})^{-1/4}$. Together with the concavity of the square root function, this implies
\begin{align}
\frac{1}{L} \sum_{i=1}^{L}\left\| \cT\left(U_i \rho_{AE}U_i^{\dagger}\right) - \frac{\id_{A_{1}}}{|A_{1}|} \ox \rho_E \right\|_1&
\leq\sqrt{\frac{1}{L} \sum_{i=1}^{L}\left\| \cT\left(U_i \rho_{AE}U_i^{\dagger}\right) - \frac{\id_{A_{1}}}{|A_{1}|} \ox \rho_E \right\|_1^{2}}\\
&\leq\sqrt{|A_{1}|\frac{1}{L} \sum_{i=1}^{L}\left\| \cT\left(U_i \tilde{\rho}_{AE}U_i^{\dagger}\right) - \frac{\id_{A_{1}}}{|A_{1}|} \ox \tilde{\rho}_E \right\|_2^{2}}\\
&=\sqrt{|A_{1}|\frac{1}{L} \sum_{i=1}^{L}\tr\left[\cT\left(U_i \tilde{\rho}_{AE}U_i^{\dagger}\right) - \frac{\id_{A_{1}}}{|A_{1}|} \ox \tilde{\rho}_E\right]^{2}}\ .\label{eq:concave}
\end{align}
We continue with
\begin{align}
&\frac{1}{L} \sum_{i=1}^{L}\tr\left[\cT\left(U_i \tilde{\rho}_{AE}U_i^{\dagger}\right) - \frac{\id_{A_{1}}}{|A_{1}|} \ox \tilde{\rho}_E\right]^{2}\\
&=\frac{1}{L} \sum_{i=1}^{L}\tr\left[\cT\left(U_i \tilde{\rho}_{AE} U_i^{\dagger}\right)\right]^{2}-2\tr\left[\cT\left(U_i \tilde{\rho}_{AE}U_i^{\dagger}\right)\left(\frac{\id_{A_{1}}}{|A_{1}|} \ox \tilde{\rho}_E\right)\right]+\tr\left[\frac{\id_{A_{1}}}{|A_{1}|} \ox \tilde{\rho}_E\right]^{2} \label{eq:expandsquare}
\end{align}
and first compute the cross term
\begin{align}
\tr\left[\cT\left(U_i \tilde{\rho}_{AE}U_i^{\dagger}\right)\left(\frac{\id_{A_{1}}}{|A_{1}|} \ox \tilde{\rho}_E\right)\right]&
=\frac{1}{|A_{1}|}\tr\left[\tr_{A_{1}}\left[\cT\left(U_i \tilde{\rho}_{AE}U_i^{\dagger}\right)\left(\id_{A_{1}} \ox \tilde{\rho}_E\right)\right]\right]\\
&=\frac{1}{|A_{1}|}\tr\left[\tilde{\rho}_{E}^{2}\right]\ .
\end{align}
Going back to Equation~\eqref{eq:expandsquare}, we obtain
\begin{align}
\frac{1}{L} \sum_{i=1}^{L}\tr\left[\cT\left(U_i \tilde{\rho}_{AE}U_i^{\dagger}\right) - \frac{\id_{A_{1}}}{|A_{1}|} \ox \tilde{\rho}_E\right]^{2}=\frac{1}{L} \sum_{i=1}^{L}\tr\left[\cT\left(U_i \tilde{\rho}_{AE}U_i^{\dagger}\right)\right]^{2}-\frac{1}{|A_{1}|}\tr\left[\tilde{\rho}_{E}^{2}\right]\ .\label{eq:middle}
\end{align}
We now compute the first term using the \lq swap trick\rq~(Lemma~\ref{lem:swap-trick})
\begin{align}
\tr\left[\cT(U\tilde{\rho}_{AE}U^{\dagger}) \right]^2
&= \tr\left[ \sum_{a_{1}a_{2}} \bra{a_{1}a_{2}} U\tilde{\rho}_{AE}U^{\dagger} \ket{a_{1}a_{2}} \proj{a_{1}} \right]^2 \\
&= \tr\left[ \sum_{a_{1}a_{2}a_{1}'a_{2}'}\bra{a_{1}a_{2}a_{1}'a_{2}'}U^{\ox 2} \tilde{\rho}^{\ox 2}_{AE} (U^{\ox 2})^{\dagger} \ket{a_{1}a_{2}a_{1}'a_{2}'} \proj{a_{1}a_{1}'}\left(F_{A_{1}A_{1}'} \otimes F_{EE'}\right)\right] \\
&= \sum_{a_{1}a_{2}a_{1}'a_{2}'} \tr\left[ \tilde{\rho}^{\ox 2}_{AE} (U^{\ox 2})^{\dagger} \ket{a_{1}a_{2}a_{1}'a_{2}'} \bra{a_{1}a_{1}'}\left(F_{A_{1}A_{1}'} \otimes F_{EE'}\right) \ket{a_{1}a_{1}'} \bra{a_{1}a_{2}a_{1}'a_{2}'} U^{\ox 2} \right]\ .
\end{align}
Taking the average over the set $\left\{U_1, \dots, U_L\right\}$, we get
\begin{align}
\frac{1}{L} \sum_{i=1}^{L}\tr\left[\cT\left(U_i \tilde{\rho}_{AE}U_i^{\dagger}\right)\right]^{2}
&= \sum_{a_{1}a_{2}a_{1}'a_{2}'} \tr\left[ \tilde{\rho}^{\ox 2}_{AE} \frac{1}{L} \sum_{i=1}^{L}\left\{\left(U_i^{\ox 2}\right)^{\dagger} \ket{a_{1}a_{2}a_{1}'a_{2}'} \bra{a_{1}a_{1}'}F_{A_{1}A_{1}'}\ket{a_{1}a_{1}'} \bra{a_{1}a_{2}a_{1}'a_{2}'}U_i^{\ox 2}\right\}  \otimes F_{EE'}\right]\\
&=  \tr\left[ \tilde{\rho}^{\ox 2}_{AE} \frac{1}{L} \sum_{i=1}^{L}\left\{\left(U_i^{\dagger}\right)^{\ox 2} \sum_{a_{1}a_{2}a_{2}', a_1' = a_1} \ket{a_{1}a_{2}a_{1}'a_{2}'} \bra{a_{1}a_{2}a_{1}'a_{2}'}U_i^{\ox 2}\right\}  \otimes F_{EE'}\right]\ .\label{eq:average}
\end{align}
Using for example~\cite[Lemma 3.4]{Wullschleger08}, if $U$ is distributed according to the Haar measure on the group of unitaries acting on $A$, then
\begin{align}
\exc{U}{\left(U^{\dagger}\right)^{\ox 2} \sum_{a_{1}a_{2}a_{2}'} \ket{a_{1}a_{2}a_{1}a_{2}'} \bra{a_{1}a_{2}a_{1}a_{2}'}U^{\ox 2}}
&= \left( \frac{|A||A_{2}| - 1}{|A|^{2} -1} \right) \id_{AA'} + \frac{|A|-|A_{2}|}{|A|^{2}-1}F_{AA'}\equiv\Gamma_{AA'}\ .
\end{align}
Now we note that $\frac{|A||A_{2}| - 1}{|A|^{2} -1}\geq\frac{1}{2|A_{1}|}$, and apply an operator Chernoff bound (Lemma~\ref{lem:operator-chernoff}) to get
\begin{equation}
\label{eq:prob-decoupling-set}
\prob{\frac{1}{L} \sum_{i=1}^L  (U_i^{\dagger})^{\ox 2} \sum_{a_{1}a_{2}a_{2}'} \ket{a_{1}a_{2}a_{1}a_{2}'} \bra{a_{1}a_{2}a_{1}a_{2}'} U_i^{\ox 2} \leq (1+\eta) \Gamma } \geq1-|A|\exp\left(-\frac{L\eta^2}{|A_{1}|4 \ln 2} \right)\ .
\end{equation}
This shows that if $L\geq 2 \cdot 4\ln2\cdot|A_{1}|\log|A|/\eta^2$, the unitaries $U_1, \dots, U_L$ satisfy the above operator inequality with high probability. In the rest of the proof, we show that such unitaries define QC-extractors. Putting these unitaries in Equation~\eqref{eq:average}, we get
\begin{align}
\frac{1}{L} \sum_{i=1}^{L}\tr\left[\cT\left(U_i \tilde{\rho}_{AE} \left(U_i\right)^{\dagger}\right) \right]^2
\leq (1+\eta)\left( \frac{|A||A_{2}|-1}{|A|^{2}-1}\tr\left[ \tilde{\rho}_E^2 \right]+\frac{|A|-|A_{2}|}{|A|^{2}-1}\tr\left[ \tilde{\rho}_{AE}^2 \right] \right)\ .
\end{align}
Plugging this expression in Equation~\eqref{eq:middle} and then in Equation~\eqref{eq:concave}, we get
\begin{align}
\frac{1}{L} \sum_{i=1}^{L}\left\| \cT\left(U_i \rho_{AE} \left(U_i\right)^{\dagger}\right) - \frac{\id_{A_{1}}}{|A_{1}|} \ox \rho_E \right\|_1
&\leq \sqrt{(1+\eta)\left(\frac{|A|^2 - |A_{1}|}{|A|^2 - 1}\right) \tr\left[\tilde{\rho}_E^2\right]+(1+\eta)\left(\frac{|A_{1}||A|-|A|}{|A|^2 - 1} \right)\tr\left[ \tilde{\rho}_{AE}^2 \right]-\tr\left[\tilde{\rho}_{E}^{2}\right]}\\
&\leq \sqrt{\eta+(1+\eta)\frac{|A_{1}|}{|A|+1} \tr\left[\tilde{\rho}_{AE}^2\right]}\ ,
\end{align}
since $\tr \left[ \tilde{\rho}_{E}^2 \right] = \tr \left[\tr_A \left[\left(\id_A \ox {\rho}_E^{-1/4}\right) \rho_{AE} \left(\id_A \ox {\rho}_E^{-1/4}\right)\right]^{2}\right] = \tr \left[ \rho_E \right]=1$. By the definition of the conditional collision entropy (Equation~\eqref{eq:coll}) and Lemma~\ref{lem:hmin-h2-rhorho}, it follows that,
\begin{align}
\frac{1}{L} \sum_{i=1}^{L}\left\| \cT\left(U_i \rho_{AE} \left(U_i\right)^{\dagger}\right) - \frac{\id_{A_{1}}}{|A_{1}|} \ox \rho_E \right\|_1
&\leq\sqrt{\eta+(1+\eta)\frac{|A_{1}|}{|A|+1}2^{-H_{2}(A|E)_{\rho|\rho}}}\\
&\leq\sqrt{\eta+(1+\eta)\frac{|A_{1}|}{|A|+1}2^{-\entHmin(A|E)_{\rho|\rho}}}\ .
\label{eq:c25}
\end{align}
Now let $\rho'_{AE}\in\cB^{\delta+\delta'}(\rho_{AE})$ be such that $\entHmin^{\delta+\delta'}(A|E)_{\rho|\rho}=\entHmin(A|E)_{\rho'|\rho'}$. Since we have $\|\rho_{AE}'-\rho_{AE}\|_{1}\leq2(\delta+\delta')$ (by Equation~\eqref{eq:purifiedVStrace}), we know that by the (reverse) triangle inequality and the monotonicity of the trace distance,
\begin{align}
\left| \| \cT(U \rho_{AE} U^{\dagger}) - \frac{\id_{A_{1}}}{|A_{1}|} \ox \rho_E \|_1 - \| \cT(U \rho'_{AE} U^{\dagger}) - \frac{\id_{A_{1}}}{|A_{1}|} \ox \rho_E \|_1 \right| &\leq \| \cT(U \rho_{AE} U^{\dagger}) - \cT(U \rho'_{AE} U^{\dagger})  \|_1 \\
&\leq \| \rho'_{AE} - \rho_{AE} \|_1 \leq2(\delta+\delta')\ ,
\end{align}
and hence applying \eqref{eq:c25} to $\rho'_{AB}$, we get
\begin{align}
\frac{1}{L} \sum_{i=1}^{L}\left\| \cT\left(U_i \rho_{AE} \left(U_i\right)^{\dagger}\right) - \frac{\id_{A_{1}}}{|A_{1}|} \ox \rho_E \right\|_1
&\leq\sqrt{\eta+(1+\eta)\frac{|A_{1}|}{|A|+1}2^{-\entHmin^{\delta+\delta'}(A|E)_{\rho|\rho}}}+2(\delta+\delta')\ .
\end{align}
We then use Lemma~\ref{lem:hmin-rhorho} about the equivalence of the different conditional min-entropies to get
\begin{align}
\frac{1}{L} \sum_{i=1}^{L}\left\| \cT\left(U_i \rho_{AE} \left(U_i\right)^{\dagger}\right) - \frac{\id_{A_{1}}}{|A_{1}|} \ox \rho_E \right\|_1
&\leq\sqrt{\eta+(1+\eta)\frac{|A_{1}|}{|A|+1}2^{-\entHmin^{\delta}(A|E)_{\rho} + z}}+2(\delta+\delta')\ ,
\end{align}
with $z = \log(2/\delta'^2 + 1/(1-\delta))$. Setting $\eta = \eps^2/4$, $\delta = 0$, $\delta' = \eps/4$, and assuming $\log |A_1| \leq n+ k - 4 \log(1/\eps) - c$ with $k = \entHmin(A|E)_{\rho}$, we get for large enough $c$
\begin{align}
\sqrt{\eta+(1+\eta)\frac{|A_{1}|}{|A|+1}2^{-\entHmin(A|E)_{\rho} + z}}+\delta'&\leq\eps/2+\sqrt{\eps^2/4 + 2 \cdot 2^{k - 4 \log(1/\eps) - c -k + \log(8/\eps^2+1)}}\\
&\leq\eps/2+\sqrt{\eps^2/4 +\eps^2\cdot2^{1-c+4}}\leq\eps\ .
\end{align}
\end{proof}

\newtheorem*{thm-full-set-mub}{Theorem \ref{thm:full-set-mub}}
\begin{thm-full-set-mub}
Let $A=A_{1}A_{2}$ with $n = \log |A|$, $|A|$ a prime power, and consider the map $\cT_{A\rightarrow A_{1}}$ as defined in Equation~\eqref{eq:meas-map}. Then, if $\left\{U_1, \dots, U_{|A|+1}\right\}$ defines a full set of mutually unbiased bases, we have for $\delta\geq0$,
\begin{align}
 \frac{1}{|\cP|}  \frac{1}{|A|+1} \sum_{P \in \cP} \sum_{i=1}^{|A|+1} \left\| \cT_{A\rightarrow A_{1}}\left(PU_i \rho_{AE} \left(PU_i\right)^{\dagger}\right) - \frac{\id_{A_{1}}}{|A_{1}|} \ox \rho_E \right\|_1
\leq \sqrt{\frac{|A_{1}|}{|A|+1} 2^{-H^{\delta}_{\min}(A|E)_{\rho}}}+2\delta\ ,
\end{align}
where $\cP$ is a set of pair-wise independent permutation matrices. In particular, the set $\{PU_i : P \in \cP, i \in [|A|+1]\}$ defines a $(k, \eps)$-QC-extractor provided
\begin{align}
\log |A_1| \leq n + k - 2 \log(1/\eps)\ ,
\end{align}
and the number of unitaries is
\begin{align}
L = (|A|+1) |\cP| = (|A|+1)|A| (|A|-1)\ .
\end{align}
\end{thm-full-set-mub}

\begin{proof}
Let $\sigma_{E}\in\cS(E)$. Similarly as in the proof of Theorem~\ref{thm:small-decoupling-set}, but with the difference that now $\tilde{\rho}_{AE}=\left(\id_{A}\otimes\sigma_{E}\right)^{-1/4}\rho_{AB}\left(\id_{A}\otimes\sigma_{E}\right)^{-1/4}$, we get
\begin{align}
&\frac{1}{|\cP|}  \frac{1}{|A|+1} \sum_{P \in \cP} \sum_{i=1}^{|A|+1} \left\| \cT\left(PU_i \rho_{AE} \left(PU_i\right)^{\dagger}\right) - \frac{\id_{A_{1}}}{A_{1}} \ox \rho_E \right\|_1
\equiv\exc{P,i}{\left\| \cT\left(PU_i \rho_{AE} \left(PU_i\right)^{\dagger}\right) - \frac{\id_{A_{1}}}{A_{1}} \ox \rho_E \right\|_1}\\
&\leq\sqrt{|A_{1}|\sum_{a_{1}a_{2}a_{2}'}\tr\left[\tilde{\rho}_{AE}^{\otimes2}\exc{P,i}{\left(U_i^{\dagger}P^{\dagger}\right)^{\otimes2}\proj{a_{1}a_{2}a_{1}a_{2}'}\left(PU_i\right)^{\otimes2}}\otimes F_{EE'}\right]-\tr\left[\tilde{\rho}_{E}^{2}\right]}\ .\label{eq:start}
\end{align}
We handle the case $a_{2}=a_{2}'$ and the case $a_{2}\neq a_{2}'$ differently. When $a_{2}=a_{2}'$, we have $(U_i^{\dagger})^{\ox 2} \ket{aa} \bra{aa} U_i^{\ox 2} = (U_i^{\dagger}\ket{a} \bra{a} U_i)^{\ox 2}$, where $a = P^{-1}(a_{1}a_{2})$. As $\{U_1,\ldots,U_{|A|+1}\}$ form a full set of mutually unbiased bases, the vectors $\{U_i \ket{a}\}_{i,a}$ define a complex projective 2-design (Lemma~\ref{lem:mub-design}), and we get
\begin{align}
\sum_{a_{1}a_{2}, a_{2}'=a_{2}} \exc{P,i}{ \left(U_i^{\dagger}P^{\dagger}\right)^{\ox 2} \ket{a_{1}a_{2}a_{1}a_{2}'} \bra{a_{1}a_{2}a_{1}a_{2}'}\left(PU\right)^{\ox 2}}&
=\sum_{a} \exc{i}{\left(U_i^{\dagger}\right)^{\ox 2} \ket{aa} \bra{aa} U_i^{\ox 2}}\\
&=|A|\frac{2\Pi^{\sym}_{AA'}}{(|A|+1)|A|} = \frac{\id_{AA'} + F_{AA'}}{|A|+1}\ .\label{eq:next}
\end{align}
We now consider $a_{2}\neq a_{2}'$ and use the fact that the permutations are chosen to be pairwise independent. 
Similar techniques were used in the context of decoupling in~\cite{Szehr11}. We have
\begin{align}
\exc{P}{\left(P^{\dagger}\right)^{\ox 2} \ket{a_{1}a_{2}a_{1}a_{2}'} \bra{a_{1}a_{2}a_{1}a_{2}'} P^{\ox 2} } &= \exc{P}{\proj{P^{-1}(a_{1}a_{2})} \ox \proj{P^{-1}(a_{1}a_{2}')}}\\
&= \sum_{a \neq a'} \probc{P}{P^{-1}(a_{1}a_{2}) = a, P^{-1}(a_{1}a_{2}')=a'} \proj{a} \ox \proj{a'}\\
&= \frac{1}{|A|(|A|-1)}\sum_{a \neq a'}\proj{a} \ox \proj{a'}\\
&= \frac{\id_{AA'}}{|A|(|A|-1)} - \frac{1}{|A|(|A|-1)} \sum_a \proj{aa}\ .\label{eq:after}
\end{align}
Going back to Equation~\eqref{eq:next}, we get together with Equation~\eqref{eq:after} that for any $a_2 \neq a_2'$,
\begin{align}
\exc{P,i}{\left(U_i^{\dagger}\right)^{\otimes 2} \left(P^{\dagger}\right)^{\ox 2} \ket{a_{1}a_{2}a_{1}a_{2}'} \bra{a_{1}a_{2}a_{1}a_{2}'} P^{\ox 2} U_i^{\ox 2} } &=
\frac{\id_{AA'}}{|A|(|A|-1)}-\frac{1}{|A|(|A|-1)}\sum_a \exc{i}{\left(U_i^{\dagger}\right)^{\otimes 2} \proj{aa} U_i^{\otimes 2}} \\
&= \frac{\id_{AA'}}{|A|\left(|A|-1\right)}-\frac{\id_{AA'} + F_{AA'}}{|A|(|A|-1)(|A|+1)} \\
&= \frac{|A|\id_{AA'} - F_{AA'}}{|A|(|A|^{2}-1)}\ .
\end{align}
This being true for all $a_{1},a_{2},a_{2}'$, it follows with Equation~\eqref{eq:start} that,
\begin{align}
&\exc{P,i}{\left\| \cT\left(PU_i \rho_{AE} \left(PU_i\right)^{\dagger}\right) - \frac{\id_{A_{1}}}{A_{1}} \ox \rho_E \right\|_1}\\
&=\sqrt{|A_{1}|\tr\left[ \tilde{\rho}_{AE}^{\ox 2} \left( \frac{\id_{AA'} + F_{AA'}}{|A|+1} + |A|(|A_{2}|-1)\frac{|A|\id_{AA'}-F_{AA'}}{|A|(|A|^{2}-1)}\right)\ox F_{EE'}\right]-\tr\left[\tilde{\rho}_{E}^{2}\right]}\\
&=\sqrt{|A_{1}|\left( \frac{1}{|A| + 1} + \frac{|A|(|A_{2}|-1)}{|A|^{2}-1} \right)\tr\left[\tilde{\rho}_{AE}^{\ox 2} \left(\id_{AA'} \ox F_{EE'}\right) \right]+|A_{1}|\left(\frac{1}{|A| + 1} - \frac{|A_{2}|-1}{|A|^2 - 1} \right) \tr \left[ \tilde{\rho}_{AE}^{\ox 2}\left(F_{AA'} \ox F_{EE'}\right)\right]-\tr\left[\tilde{\rho}_{E}^{2}\right]}\\
&=\sqrt{|A_{1}|\frac{|A||A_{2}| - 1}{|A|^2 - 1}\tr\left[ \tr_{AA'} \left[\tilde{\rho}^{\ox 2}_{AE}\left(\id_{AA'} \ox F_{EE'}\right)\right]\right]+|A_{1}|\left(\frac{|A|-|A_{2}|}{|A|^2 - 1} \right)\tr[ \tilde{\rho}_{AE}^2 ]-\tr\left[\tilde{\rho}_{E}^{2}\right]}\\
&=\sqrt{\left(\frac{|A|^{2}-|A_{1}|}{|A|^2 - 1}-1\right)\tr[ \tilde{\rho}_E^2 ] + \left(\frac{|A_{1}||A|-|A|}{|A|^2 - 1} \right)\tr[ \tilde{\rho}_{AE}^2 ]}\leq\sqrt{\frac{|A_{1}|}{|A|+1}\tr\left[\tilde{\rho}_{AE}^2\right]}=\sqrt{\frac{|A_{1}|}{|A|+1} 2^{-H_{2}(A|E)_{\rho|\sigma}}}\ ,
\end{align}
where we used the definition of the conditional collision entropy (Equation~\eqref{eq:coll}) in the last step. Now, by choosing $\sigma_{E}$ appropriately, and an analogue argumentation as at the very end of the proof of Theorem~\ref{thm:small-decoupling-set}, we conclude that,
\begin{align}
\exc{P,i}{\left\| \cT\left(PU_i \rho_{AE} \left(PU_i\right)^{\dagger}\right) - \frac{\id_{A_{1}}}{A_{1}} \ox \rho_E \right\|_1}
\leq \sqrt{\frac{|A_{1}|}{|A|+1} 2^{-H^{\delta}_{\min}(A|E)_{\rho}}}+2\delta\ .
\end{align}
\end{proof}

\newtheorem*{thm-singleQuditExtract}{Theorem \ref{thm:singleQuditExtract}}
\begin{thm-singleQuditExtract}
Let $A=A_{1}A_{2}$ with $|A|=d^{n}$, $|A_{1}|=d^{\xi n}$, $|A_{2}|=d^{(1-\xi)n}$, and $d$ a prime power. Consider the map $\cT_{A\rightarrow A_{1}}$ as defined in Equation~\eqref{eq:meas-map}. Then for $\delta\geq0$ and $\delta'>0$,
\begin{align}
\frac{1}{|\cP|} \frac{1}{(d+1)^n} \sum_{P \in \cP}\sum_{V \in \cV_{d,n}} &\left\| \cT_{A\rightarrow A_{1}}\left(PV \rho_{AE} \left(PV\right)^{\dagger}\right)- \frac{\id_{A_{1}}}{|A_{1}|} \ox \rho_E \right\|_1\\
&\leq \sqrt{2^{\left(1-\log (d+1)+\xi\log d\right)n}(1+2^{-H^{\delta}_{\min}(A|E)_{\rho}+z})}+2(\delta+\delta')\ ,
\end{align}
where $\cV_{d,n}$ is defined as above, $\cP$ is a set of pair-wise independent permutation matrices, and $z=\log\left(\frac{2}{\delta'^2}+\frac{1}{1-\delta}\right)$. In particular, the set $\{ PV : P \in \cP, V \in \cV_{d,n}\}$ is a $(k, \eps)$-extractor provided
\begin{align}
\log |A_1| \leq (\log(d+1) - 1) n + \min \left\{0, k\right\}  - 4 \log(1/\eps) - 7\, 
\end{align}
and the number of unitaries is
\begin{align}
L = (d+1)^n d^n (d^n - 1)\ .
\end{align}
\end{thm-singleQuditExtract}

\begin{proof}
We use the same strategy as in the proofs of Theorem~\ref{thm:small-decoupling-set} and Theorem~\ref{thm:full-set-mub}; here again with $\tilde{\rho}_{AE}=\left(\id_{A}\otimes\rho_{E}\right)^{-1/4}\rho_{AE}\left(\id_{A}\otimes\rho_{E}\right)^{-1/4}$. We get
\begin{align}
&\frac{1}{|\cP|} \frac{1}{(d+1)^n} \sum_{P \in \cP, V \in \cV_{d,n}} \left\| \cT\left(PV \rho_{AE} \left(PV\right)^{\dagger}\right) - \frac{\id_{A_{1}}}{|A_{1}|} \ox \rho_E \right\|_1
\equiv\exc{P,V}{ \left\| \cT\left(PV \rho_{AE} \left(PV\right)^{\dagger}\right) - \frac{\id_{A_{1}}}{|A_{1}|} \ox \rho_E \right\|_1} \notag \\
&\leq\sqrt{|A_{1}|\tr\left[\tilde{\rho}_{AE}^{\otimes 2}\left(\sum_{a}\exc{V}{\left(V^{\dagger}\proj{a}V\right)^{\otimes2}}+|A|\left(|A_{2}|-1\right)\frac{\id_{AA'}}{|A|\left(|A|-1\right)}\right)\otimes F_{EE'}\right]-\tr\left[\tilde{\rho}_{E}^{2}\right]}\ . \label{eq:singleq1}
\end{align}
We calculate
\begin{align}
\exc{V}{\left(V^{\dagger}\proj{a}V\right)^{\otimes2}}= \frac{1}{(d+1)^n} \sum_{a_1, a_2, \dots, a_n} \sum_{V_1, \dots, V_n} \bigotimes_i \left( ( V^{\dagger}_i \proj{a_i} V_i)^{\otimes 2} \right)= \frac{1}{(d+1)^n} \bigotimes_i \left(\sum_{a_i, V_i} V_i^{\dagger} \proj{a_i} V_i \right)^{\otimes 2}\ . \label{eq:singleq2}
\end{align}
As $\left\{V_0,\dots,V_d\right\}$ form a maximal set of mutually unbiased bases in dimension $d$, and with this form a complex projective 2-design (Lemma~\ref{lem:mub-design}), we have
\begin{align}
\sum_{a \in \{0,\ldots,d\}, V \in \cV_{d,1} } \left(V^{\dagger} \proj{a} V \right)^{\otimes 2} = 2 \Pi^{\sym}\ .
\end{align}
Furthermore $(\Pi^{\sym}_{B})^{\otimes n} \leq \Pi^{\sym}_{B^{\ox n}}$ for any quantum system $B$, and hence we obtain
\begin{align}
\frac{1}{(d+1)^n} \bigotimes_i \left(\sum_{a_i, V_i} V_i^{\dagger} \proj{a_i} V_i \right)^{\otimes 2}
\leq \left(\frac{2}{d+1}\right)^n \Pi^{\sym}_{AA'} = \left(\frac{2}{d+1}\right)^n \frac{\id_{AA'} + F_{AA'} }{2}\ .\label{eq:Mdef}
\end{align}
Together with Equation~\eqref{eq:singleq1} and Equation~\eqref{eq:singleq2}, we get
\begin{align}
&\exc{P,V}{ \left\| \cT\left(PV \rho_{AE} \left(PV\right)^{\dagger}\right) - \frac{\id_{A_{1}}}{|A_{1}|} \ox \rho_E \right\|_1}\\
&\leq\sqrt{|A_{1}|\tr\left[\tilde{\rho}_{AE}^{\ox 2}\left(\left(\frac{2}{d+1}\right)^n\frac{\id_{AA'}+F_{AA'}}{2}+|A|(|A_{2}|-1)\frac{\id_{AA'} }{|A|(|A|-1)}\right)\ox F_{EE'} \right]-\tr\left[\tilde{\rho}_{E}^{2}\right]}\label{eq:urStep}\\
&=\sqrt{\left(\frac{|A|-|A_{1}|}{|A|-1}+\frac{|A_{1}|}{2}\left(\frac{2}{d+1}\right)^n\right)\tr\left[\tilde{\rho}_{AE}^{\ox 2}\left(\id_{AA'} \ox F_{EE'}\right)\right]+\frac{|A_{1}|}{2}\left(\frac{2}{d+1}\right)^n\tr\left[\tilde{\rho}_{AE}^{\ox 2}\left(F_{AA'}\ox F_{EE'}\right)\right]-\tr\left[\tilde{\rho}_{E}^{2}\right]}\\
&\leq\sqrt{\left(1+2^{(1-\log(d+1)+\xi\log d)n}\right)\tr\left[\tr_{AA'}\left[\tilde{\rho}^{\ox 2}_{AE}\left(\id_{AA'}\ox F_E\right)\right]\right]+2^{(1-\log(d+1)+\xi\log d)n}\tr[\tilde{\rho}_{AE}^2]-\tr\left[\tilde{\rho}_{E}^{2}\right]}\\
&=\sqrt{2^{(1-\log(d+1)+\xi\log d)n}\tr\left[\tilde{\rho}_E^2\right]+2^{(1-\log(d+1)+\xi\log d)n}\tr\left[\tilde{\rho}_{AE}^2\right]}\\
&=\sqrt{2^{(1-\log(d+1)+\xi\log d)n}\left(1+2^{-H_{2}(A|E)_{\rho|\rho}}\right)}\ ,
\end{align}
where we used the definition of the conditional collision entropy (Equation~\eqref{eq:coll}) in the last step. Now, by an analogue argumentation as at the very end of the proof of Theorem~\ref{thm:small-decoupling-set}, we conclude that,
\begin{align}
\exc{P,V}{ \left\| \cT\left(PV \rho_{AE} \left(PV\right)^{\dagger}\right) - \frac{\id_{A_{1}}}{|A_{1}|} \ox \rho_E \right\|_1}
\leq\sqrt{2^{\left(1-\log (d+1)+\xi\log d\right)n}(1+2^{-H^{\delta}_{\min}(A|E)_{\rho} + z})}+2(\delta+\delta')\ .
\end{align}
Setting $\delta = 0$ and $\delta' = \eps/4$, we conclude that the set $\{PV : P \in \cP, V \in \cV_{d,n}\}$ is a $(k, \eps)$-QC-extractor provided 
\begin{align}
\log |A_1| = n \cdot \xi \log d &\leq (\log(d+1) - 1) n - \log(1 + 2^{-k + \log(8/\eps^2 + 1)}) + \log((\eps/2)^2)  \\
&\leq (\log(d+1) - 1) n + \min \left\{0, k - \log(8/\eps^2 + 1)\right\} - 1 - 2\log(1/\eps) - 2 \\
&\leq (\log(d+1) - 1) n + \min \left\{0, k\right\}  - 4 \log(1/\eps) - 7.
\end{align}
\end{proof}

Note that step~\eqref{eq:urStep} is indeed striking when we consider the case of trivial side information. Effectively, one of the terms we wish to bound
then is $\tr\left[\rho_A^{\otimes 2} M\right]$ where $M$ is given by the l.h.s.~of~\eqref{eq:Mdef}. This, however, is exactly what one bounds when proving entropic uncertainty relations for MUBs~\cite{ballester07}, or more generally anti-commuting measurements~\cite{ww:cliffordUR}. And indeed, in the case with quantum side information, our techniques also allow to directly derive entropic uncertainty relations with quantum side information in terms of the quantum conditional collision entropy (as defined in Equation~\eqref{eq:coll}) using the fact that MUBs form a complex projective 2-design. However, we are more interested in relations in terms of the min-entropy (see Section~\ref{sec:URbounds}). On the other hand, it is an interesting question whether the techniques from~\cite{ww:cliffordUR} can be extended to give a better bound than the (probably too general) eigenvalue bound of~\eqref{eq:Mdef}.


\section{Proofs of Uncertainty Relations}\label{sec:urProofs}

In this section, we provide the full proofs regarding our claims of entropic uncertainty relations.

\newtheorem*{thm:neumann}{Proposition \ref{thm:neumann}}
\begin{thm:neumann}
Let $d \geq 2$ be a prime power, and $\left\{V_0,V_1,\dots,V_d\right\}$ define a complete set of MUBs of $\CC^d$. Consider the set of measurements $\{\cM^j_{A \to K} : j \in [(d+1)^n]\}$ on the $n$ qudit space $A$ defined by the unitary transformations $\left\{V=V_{u_1}\ox\cdots\ox V_{u_n}|u_i\in\left\{0,\dots,d\right\}\right\}$. Then for all $\rho_{AE} \in \cS(AE)$, we have
\begin{align}
\frac{1}{(d+1)^n}\sum_{j=1}^{(d+1)^n}H(K|E)_{\rho^{j}}\geq n\cdot\left(\log(d+1)-1\right)+\min\left\{0,H(A|E)_{\rho}\right\}\ ,
\end{align}
where $\rho^j = \cM^j_{A \to K}(\rho)$.
\end{thm:neumann}

\begin{proof}
Using the QC-extractor for the single-qudit MUB case as discussed in Section~\ref{sec:singlequdit}, we get with the same reasoning as before that for $\eps>0$, $\delta\geq0$,
\begin{align}
	\frac{1}{L}\sum_{j=1}^{L} H(K|E)_{\rho^{j}} &\geq(1-\eps)\left(n\left(\log(d+1)-1\right)-\log\left(1+2^{-\hmin^{\delta}(A|E)_{\rho|\rho}}\right)-\log\left(\frac{1}{(\eps-2\delta)^{2}}\right)\right)-2h(\eps) \notag \\
	&\geq(1-\eps)\left(n\left(\log(d+1)-1\right)+\min\left\{0,\hmin^{\delta}(A|E)_{\rho|\rho}\right\}-1-\log\left(\frac{1}{(\eps-2\delta)^{2}}\right)\right)-2h(\eps) \label{eq:aepqudit}\ .
\end{align}
Here we use a version with $\hmin^{\delta}(A|E)_{\rho|\rho}$ instead of $\hmin^{\delta}(A|E)_{\rho}$, but this is immediate from the proof of Theorem~\ref{thm:singleQuditExtract}. Evaluating Equation~\eqref{eq:aepqudit} on the $m$-fold tensor product of the original input system $d^{n}$, and multiplying both sides with $1/m$, we obtain
\begin{align}
\frac{1}{L}\sum_{j=1}^{L} H(K|E)_{\rho^{j}}&\geq(1-\eps)\left(n\left(\log(d+1)-1\right)+\min\left\{0,\frac{1}{m}\hmin^{\delta}(A|E)_{\rho^{\otimes m}|\rho^{\otimes m}}\right\}\right)\\
&-\frac{1-\eps}{m}\left(1-\log\left(\frac{1}{(\eps-2\delta)^{2}}\right)\right)-\frac{2h(\eps)}{m}\\
&\geq(1-\eps)\left(n\left(\log(d+1)-1\right)+\min\left\{0,H(A|E)_{\rho}-\frac{4\sqrt{1-2\log\delta}\left(2+\frac{n}{2}\right)}{\sqrt{m}}\right\}\right)\\
&-\frac{1-\eps}{m}\left(1-\log\left(\frac{1}{(\eps-2\delta)^{2}}\right)\right)-\frac{2h(\eps)}{m}\ .
\end{align}
Here we used the fully quantum asymptotic equipartition property for the smooth conditional min-entropy (Lemma~\ref{lem:aep}). By first letting $m\rightarrow\infty$ and then $\eps\rightarrow0$, we arrive at the claim.
\end{proof}


\section{Definition Weak String Erasure}

For convenience sake, we here provide a formal definition of weak string erasure~\cite{noisy:new} for $p \neq 1/2$ as given in~\cite{prabha:limits}, where we restrict to qubits ($d=2$).
The definition is stated in terms of ideal states, akin to an ideal functionality in classical cryptography. In the proof of security against dishonest Bob, we
simply show that Bob's $\eps$-\emph{smooth} min-entropy is high. However, by~\eqref{eq:purifiedVStrace} this implies that Bob's real state is $\eps$-close to an ideal state of high min-entropy in trace distance.
Note that for cryptographic purposes, we will specify distances in term of the \emph{trace} distance, since this is the relevant distance that determines how well the real protocol can be distingiushed from
the ideal state~\cite{helstrom}.

In the definition below, we will need to talk about distributions over subsets $\cI \subseteq [n]$, where each element of $[n]$ has probability $p$ of being in $\cI$. 
Clearly, the probability that Bob learns a particular subset $\cI$ satisfies
\begin{align}\label{eq:probDist}
	\Pr(\cI) = p^{|\cI|} (1-p)^{n-|\cI|}
\end{align}
Note that we can write the subset $\cI$ as a string $(y_1,\ldots,y_n) \in \{0,1\}^n$ where $y_i = 1$ if and only if
$i \in \cI$, allowing us to identify $\ket{\cI}= \ket{y_1} \otimes \ldots \otimes \ket{y_n}$. The probability
distribution over subsets $\cI \subseteq [n]$ can then be expressed as (see also~\cite{noisy:new})
\begin{align}
	\Psi(p)= \sum_{\cI \subseteq 2^{[n]}} p^{|\cI|} (1-p)^{n-|\cI|} \proj{\cI}\ .
\end{align}
Furthermore, we will follow the notation of~\cite{noisy:new} and use
\begin{align}
	\tau_{\cS}= \frac{1}{|\cS|} \sum_{s \in \cS} \proj{s}\ ,
\end{align}
to denote the uniform distribution over a set $\cS$. 

\begin{definition}[\textbf{Non-uniform WSE}]\label{def:wse}
An $(n,\lambda, \varepsilon,p)$-weak string erasure scheme is a protocol between A and B satisfying the following properties:

\textbf{Correctness:} If both parties are honest, then there exists an ideal state $\sigma_{X^{n}\mathcal{I}X_{\mathcal{I}}}$ such that
\begin{enumerate}
\item The joint distribution of the $n$-bit string $X^{n}$ and subset $\mathcal{I}$ is given by
\begin{equation}\label{eq:wsecorrect}
\sigma_{X^{n}\cI} = \tau_{\{0,1\}^{n}}\otimes \Psi(p)\ ,
\end{equation}
\item The joint state $\rho_{AB}$ created by the real protocol is equal to the ideal state: $\rho_{AB} = \sigma_{X^{n}\cI X_{\cI}}$ where we identify $(A,B)$ with $(X^{n},\cI X_{\cI})$.
\end{enumerate}

\textbf{Security for Alice:} If A is honest, then there exists an ideal state $\sigma_{X^{n}B'}$ such that 
\begin{enumerate}
\item The amount of information $B'$ gives Bob about $X^{n}$ is limited:
\begin{equation}\label{eq:honestA}
\frac{1}{n}\hmin(X^{n}|B')_{\sigma} \geq \lambda
\end{equation}
\item The joint state $\rho_{AB'}$ created by the real protocol is $\eps$-close to the ideal state in trace distance, where we identify $(X^{n},B')$ with $(A,B')$.
\end{enumerate}

\textbf{Security for Bob:} If B is honest, then there exists an ideal state $\sigma_{A'\hat{X}^{n}\cI}$ where $\hat{X}^{n} \in \{0,1\}^{n}$ and $\cI \subseteq [n]$ such that
\begin{enumerate}
	\item The random variable $\cI$ is independent of $A'\hat{X}^{n}$ and distributed over $2^{[n]}$ according to the probability distribution given by~\eqref{eq:probDist}:
\begin{equation}\label{eq:honestB}
\sigma_{A'\hat{X}^{n}\cI} = \sigma_{A'\hat{X}^{n}} \otimes \Psi(p)\ .
\end{equation}
\item The joint state $\rho_{A'B}$ created by the real protocol is equal to the ideal state: $\rho_{A'B} = \sigma_{A'(\cI\hat{X}_{\cI})}$, where we identify $(A',B)$ with $(A',\cI\hat{X}_{\cI})$.
\end{enumerate}
\end{definition}



\begin{thebibliography}{10}

\bibitem{Abeyesinghe06}
A.~Abeyesinghe, I.~Devetak, P.~Hayden, and A.~Winter.
\newblock The mother of all protocols: Restructering quantum information's
  family tree.
\newblock {\em Proceedings of Royal Society A}, 465:2537, 2009.
\newblock arXiv:quant-ph/0606225v1.

\bibitem{massar:recent}
A.~Acin, S.~Massar, and S.~Pironio.
\newblock Randomness vs non locality and entanglement.
\newblock 2011.
\newblock arXiv:1107.2754v1.

\bibitem{AW02}
R.~Ahlswede and A.~Winter.
\newblock Strong converse for identificcation via quantum channels.
\newblock {\em IEEE Transactions on Information Theory}, 48:569--579, 2010.
\newblock Addendum ibid 49:346, 2003, arXiv:quant-ph/0012127v2.

\bibitem{Alick04}
R.~Alicki and M.~Fannes.
\newblock Continuity of quantum conditional information.
\newblock {\em Journal of Physics A}, 37:L55, 2004.
\newblock arXiv:quant-ph/0312081v2.

\bibitem{andris:extract}
A.~Ambainis, A.~Smith, and K.~Yang.
\newblock Extracting quantum entanglement.
\newblock In {\em Proceedings of 17th IEEE CCC}, page 103, 2002.

\bibitem{ballester07}
M.~A. Ballester and S.~Wehner.
\newblock Entropic uncertainty relations and locking: tight bounds for mutually
  unbiased bases.
\newblock {\em Physical Review A}, 75:022319, 2007.
\newblock arXiv:quant-ph/0606244v4.

\bibitem{boykin:mub}
S.~Bandyopadhyay, P.~O. Boykin, V.~P. Roychowdhury, and F.~Vatan.
\newblock A new proof for the existence of mutually unbiased bases.
\newblock {\em Algorithmica}, 34:512--528, 2002.

\bibitem{Barnum00}
H.~Barnum, E.~Knill, and M.~A. Nielsen.
\newblock On quantum fidelities and channel capacities.
\newblock {\em IEEE Transactions on Information Theory}, 46:1317--1329, 2000.
\newblock arXiv:quant-ph/9809010v1.

\bibitem{BTS10}
Avraham Ben-Aroya, Oded Schwartz, and Amnon Ta-Shma.
\newblock Quantum expanders: Motivation and construction.
\newblock {\em Theory of Computing}, 6:47--79, 2010.

\bibitem{Berta08}
M.~Berta.
\newblock Single-shot quantum state merging.
\newblock Master's thesis, ETH Zurich, 2008.
\newblock arXiv:0912.4495v1.

\bibitem{entCost}
M.~Berta, F.~Brandao, M.~Christandl, and S.~Wehner.
\newblock Entanglement cost of quantum channels.
\newblock arXiv:1108.5357v2, 2011.

\bibitem{Berta09}
M.~Berta, M.~Christandl, R.~Colbeck, J.~M. Renes, and R.~Renner.
\newblock The uncertainty principle in the presence of quantum memory.
\newblock {\em Nature Physics}, 6:659, 2010.
\newblock arXiv:0909.0950v4.

\bibitem{Berta09_2}
M.~Berta, M.~Christandl, and R.~Renner.
\newblock The quantum reverse {S}hannon theorem based on one-shot information
  theory.
\newblock {\em Communications in Mathematical Physics}, 306:579--615, 2011.
\newblock arXiv:0912.3805v3.

\bibitem{Berta11}
M.~Berta, F.~Furrer, and V.~B. Scholz.
\newblock The smooth entropy formalism on von {N}eumann algebras.
\newblock 2011.
\newblock arXiv:1107.5460v1.

\bibitem{Bhatia97}
R.~Bhatia.
\newblock {\em Matrix Analysis}.
\newblock Springer, 1997.

\bibitem{wehner06d}
H.~Buhrman, M.~Christandl, P.~Hayden, H.-K. Lo, and S.~Wehner.
\newblock Security of quantum bit string commitment depends on the information
  measure.
\newblock {\em Physical Review Letters}, 97:250501, 2006.
\newblock arXiv:quant-ph/0609237v2.

\bibitem{cachin:bounded}
C.~Cachin and U.~M. Maurer.
\newblock Unconditional security against memory-bounded adversaries.
\newblock In {\em Proceedings of CRYPTO 1997}, Lecture Notes in Computer
  Science, pages 292--306, 1997.

\bibitem{lo:promise}
H.F. Chau and H-K. Lo.
\newblock Making an empty promise with a quantum computer.
\newblock {\em Fortschritte der Physik}, 46:507--520, 1998.
\newblock Republished in 'Quantum Computing, where do we want to go tomorrow?'
  edited by S. Braunstein, arXiv:quant-ph/9709053v2.

\bibitem{christandl05}
M.~Christandl and A.~Winter.
\newblock Uncertainty, monogamy, and locking of quantum correlations.
\newblock {\em IEEE Transactions on Information Theory}, 51:3159--3165, 2005.
\newblock arXiv:quant-ph/0501090v2.

\bibitem{roger:thesis}
R.~Colbeck.
\newblock {\em Quantum and relativistic protocols for secure multi-party
  computation}.
\newblock PhD thesis, University of Cambridge, 2006.
\newblock arXiv:0911.3814v2.

\bibitem{Colbeck11}
R.~Colbeck and A.~Kent.
\newblock Private randomness expansion with untrusted devices.
\newblock {\em Journal of Physics A}, 44:095305, 2011.
\newblock arXiv:1011.4474v3.

\bibitem{Coles12}
P.~J. Coles, R.~Colbeck, L.~Yu, and M.~Zwolak.
\newblock Uncertainty relations from simple entropic properties.
\newblock 2012.
\newblock arXiv:1112.0543v1.

\bibitem{Coles10}
P.~J. Coles, L.~Yu, V.~Gheorghiu, and R.~B. Griffiths.
\newblock Information theoretic treatment of tripartite systems and
  quantumchannels.
\newblock {\em Physical Review A}, 83:062338, 2011.
\newblock arXiv:1006.4859v5.

\bibitem{Coles11}
P.~J. Coles, L.~Yu, and M.~Zwolak.
\newblock Relative entropy derivation of the uncertainty principle with quantum
  side information.
\newblock 2011.
\newblock arXiv:1105.4865v2.

\bibitem{serge:new}
I.~B. Damg{\aa}rd, S.~Fehr, R.~Renner, L.~Salvail, and C.~Schaffner.
\newblock A tight high-order entropic quantum uncertainty relation with
  applications.
\newblock In {\em Proceedings of CRYPTO 2007}, Springer Lecture Notes in
  Computer Science, pages 360--378, 2007.
\newblock arXiv:quant-ph/0612014v2.

\bibitem{serge:bounded}
I.~B. Damg{\aa}rd, S.~Fehr, L.~Salvail, and C.~Schaffner.
\newblock Cryptography in the {B}ounded-{Q}uantum-{S}torage {M}odel.
\newblock In {\em Proceedings of 46th IEEE Symposium on Foundations of Computer
  Science}, pages 449--458, 2005.
\newblock arXiv:quant-ph/0508222v2.

\bibitem{DCEL09}
C.~Dankert, R.~Cleve, J.~Emerson, and E.~Livine.
\newblock Exact and approximate unitary 2-designs and their application to
  fidelity estimation.
\newblock {\em Physical Review A}, 80:012304, 2009.
\newblock arXiv:quant-ph/0606161v1.

\bibitem{kretch:bc}
G.~D'Ariano, D.~Kretschmann, D.~Schlingemann, and R.F. Werner.
\newblock Quantum bit commitment revisited: the possible and the impossible.
\newblock arXiv:quant-ph/0605224v2, 2007.

\bibitem{datta-2008-2}
N.~Datta.
\newblock Min- and max- relative entropies and a new entanglement monotone.
\newblock {\em IEEE Transactions on Information Theory}, 55:2816, 2009.
\newblock arXiv:0803.2770v3.

\bibitem{DPVR09}
A.~De, C.~Portmann, T.~Vidick, and R.~Renner.
\newblock Trevisan's extractor in the presence of quantum side information.
\newblock 2009.
\newblock arXiv:0912.5514.

\bibitem{Devetak04}
I.~Devetak and A.~Winter.
\newblock Distilling common randomness from bipartite quantum states.
\newblock {\em IEEE Transaction on Information Theory}, 50:3183 -- 3196, 2004.
\newblock arXiv:quant-ph/0304196v2.

\bibitem{frederic:decoupling}
F.~Dupuis.
\newblock {\em The Decoupling Approach to Quantum Information Theory}.
\newblock PhD thesis, Universit\'e de Montr\'eal, 2009.
\newblock arXiv:1004.1641v1.

\bibitem{Wullschleger08}
F.~Dupuis, M.~Berta, J.~Wullschleger, and R.~Renner.
\newblock The decoupling theorem.
\newblock 2010.
\newblock arXiv:1012.6044v1.

\bibitem{maurer:imposs}
S.~Dziembowski and U.~Maurer.
\newblock On generating the initial key in the bounded-storage model.
\newblock In {\em Proceedings of EUROCRYPT}, Springer Lecture Notes in Computer
  Science, pages 126--137, 2004.

\bibitem{fawzi11}
O.~Fawzi, P.~Hayden, and P.~Sen.
\newblock From low-distortion norm embeddings to explicit uncertainty relations
  and efficient information locking.
\newblock In {\em Proceedings of 43rd ACM STOC}, 2011.
\newblock arXiv:1010.3007v3.

\bibitem{FGS11}
S.~Fehr, R.~Gelles, and C.~Schaffner.
\newblock {Security and Composability of Randomness Expansion from Bell
  Inequalities}.
\newblock 2011.
\newblock arXiv:1111.6052v2.

\bibitem{gavinsky:ext}
D.~Gavinsky, J.~Kempe, I.~Kerenidis, R.~Raz, and R.~de~Wolf.
\newblock Exponential separations for one-way quantum communication complexity,
  with applications to cryptography.
\newblock In {\em Proceedings of 39th ACM STOC}, pages 516--525. ACM, 2007.

\bibitem{gilchrist:distance}
A.~Gilchrist, N.~K. Langford, and M.~A. Nielsen.
\newblock Distance measures to compare real and ideal quantum processes.
\newblock {\em Physical Review A}, 71:062310, 2005.
\newblock arXiv:quant-ph/0408063v2.

\bibitem{gross07}
D.~Gross, K.~Audenaert, and J.~Eisert.
\newblock Evenly distributed unitaries: On the structure of unitary designs.
\newblock {\em Journal of Mathematical Physics}, 48:052104, 2007.
\newblock arXiv:quant-ph/0611002v2.

\bibitem{GUV09}
V.~Guruswami, C.~Umans, and S.~Vadhan.
\newblock Unbalanced expanders and randomness extractors from
  {P}arvaresh-{V}ardy codes.
\newblock {\em Journal of the ACM}, 56:20, 2009.

\bibitem{Tomamichel11_2}
E.~H\"{a}nggi and M.~Tomamichel.
\newblock The link between uncertainty relations and non-locality.
\newblock 2011.
\newblock arXiv:1108.5349v1.

\bibitem{Harrow09_2}
A.~Harrow and R.~Low.
\newblock Random quantum circuits are approximate 2-designs.
\newblock {\em Communications in Mathematical Physics}, 291:257--302, 2009.
\newblock arXiv:0802.1919v3.

\bibitem{patrick:decouple}
P.~Hayden, M.~Horodecki, J.~Yard, and A.~Winter.
\newblock A decoupling approach to the quantum capacity.
\newblock {\em Open Systems and Information Dynamics}, 15:7--19, 2008.
\newblock arXiv:quant-ph/0702005v1.

\bibitem{patrick:blackHole}
P.~Hayden and J.~Preskill.
\newblock Black holes as mirrors: quantum information in random subsystems.
\newblock {\em Journal of High Energy Physics}, page 0709:102, 2007.
\newblock arXiv:0708.4025v2.

\bibitem{helstrom}
C.~W. Helstrom.
\newblock Detection theory and quantum mechanics.
\newblock {\em Information and Control}, 10:254--291, 1967.

\bibitem{Horodecki05}
M.~Horodecki, J.~Oppenheim, and A.~Winter.
\newblock Partial quantum information.
\newblock {\em Nature}, 436:673--676, 2005.
\newblock arXiv:quant-ph/0505062v1.

\bibitem{Horodecki06}
M.~Horodecki, J.~Oppenheim, and A.~Winter.
\newblock Quantum state merging and negative information.
\newblock {\em Communications in Mathematical Physics}, 269:107, 2006.
\newblock arXiv:quant-ph/0512247v1.

\bibitem{jonathan:pbits}
M.~Horodecki, J.~Oppenheim, and A.~Winter.
\newblock Quantum mutual independence.
\newblock arXiv:0902.0912, 2009.

\bibitem{ILL89}
R.~Impagliazzo, L.A. Levin, and M.~Luby.
\newblock Pseudo-random generation from one-way functions.
\newblock In {\em Proceedings of the twenty-first annual ACM symposium on
  Theory of computing}, pages 12--24. ACM, 1989.

\bibitem{Ind07}
P.~Indyk.
\newblock Uncertainty principles, extractors, and explicit embeddings of l2
  into l1.
\newblock In {\em Proceedings of the 39th annual ACM symposium on Theory of
  computing}, pages 615--620. ACM, 2007.

\bibitem{Ivanovic92}
I.~D. Ivanovic.
\newblock An inequality for the sum of entropies of unbiased quantum
  measurements.
\newblock {\em Journal of Physics A: Mathematical and General}, 25:363, 1992.

\bibitem{Ivo81}
ID~Ivonovic.
\newblock Geometrical description of quantal state determination.
\newblock {\em Journal of Physics A: Mathematical and General}, 14:3241, 1981.

\bibitem{kilian}
J.~Kilian.
\newblock Founding cryptography on oblivious transfer.
\newblock In {\em Proceedings of 20th ACM STOC}, pages 20--31, 1988.

\bibitem{KR05}
A.~Klappenecker and M.~Rotteler.
\newblock Mutually unbiased bases are complex projective 2-designs.
\newblock In {\em IEEE International Symposium on Information Theory}, pages
  1740--1744, 2005.
\newblock arXiv:quant-ph/0502031.

\bibitem{KMR05}
R.~K\"{o}nig, U.~Maurer, and R.~Renner.
\newblock On the power of quantum memory.
\newblock {\em IEEE Transactions on Information Theory}, 51:2391--2401, 2005.
\newblock arXiv:quant-ph/0305154v3.

\bibitem{krs:operational}
R.~K\"onig, R.~Renner, and C.~Schaffner.
\newblock The operational meaning of min- and max-entropy.
\newblock {\em IEEE Transactions on Information Theory}, 55:4674--4681, 2009.
\newblock arXiv:0807.1338v1.

\bibitem{robert:ext}
R.~K\"{o}nig and B.~M. Terhal.
\newblock The bounded-storage model in the presence of a quantum adversary.
\newblock {\em IEEE Transactions on Information Theory}, 54:749--762, 2008.

\bibitem{noisy:new}
R.~K\"onig, S.~Wehner, and J.~Wullschleger.
\newblock Unconditional security from noisy quantum storage.
\newblock {\em IEEE Transactions on Information Theory - To appear}, 2009.
\newblock arXiv:0906.1030v3.

\bibitem{variazoni}
D.~Kretschmann and R.~Werner.
\newblock Tema con variazioni: Quantum channel capacity.
\newblock {\em New Journal of Physics}, 6, 2004.
\newblock arXiv:quant-ph/0311037v1.

\bibitem{lo:insecurity}
H-K. Lo.
\newblock Insecurity of quantum secure computations.
\newblock {\em Physical Review A}, 56:1154, 1997.

\bibitem{lo&chau:bitcom}
H-K. Lo and H.~F. Chau.
\newblock Is quantum bit commitment really possible?
\newblock {\em Physical Review Letters}, 78:3410, 1997.

\bibitem{Maassen88}
H.~Maassen and J.~Uffink.
\newblock Generalised entropic uncertainty relations.
\newblock {\em Physical Review Letters}, 60:1103--1106, 1988.

\bibitem{prabha:limits}
P.~Mandayam and S.~Wehner.
\newblock Achieving the physical limits of the bounded-storage model.
\newblock {\em Physical Review A}, 83:022329, 2011.
\newblock arXiv:1009.1596v2.

\bibitem{Maurer92b}
U.~Maurer.
\newblock Conditionally-perfect secrecy and a provably-secure randomized
  cipher.
\newblock {\em Journal of Cryptology}, 5:53--66, 1992.

\bibitem{mayers:bitcom}
D.~Mayers.
\newblock Unconditionally secure quantum bit commitment is impossible.
\newblock {\em Physical Review Letters}, 78:3414--3417, 1997.

\bibitem{NieChu00Book}
M.~A. Nielsen and I.~L. Chuang.
\newblock {\em Quantum computation and quantum information}.
\newblock Cambridge University Press, 2000.

\bibitem{js:urvsnl}
J.~Oppenheim and S.~Wehner.
\newblock The uncertainty principle determines the non-locality of quantum
  mechanics.
\newblock {\em Science}, 330:1072--1074, 2010.
\newblock arXiv:1004.2507v2.

\bibitem{ouyang}
Y.~Ouyang.
\newblock Improved upper bounds on the quantum capacity of the depolarizing
  channel with higher dimension amplitude damping channels.
\newblock arXiv:1106.2337, 2011.

\bibitem{pironio:nature}
S.~Pironio, A.~Acin, S.~Massar, A.B. de~la Giroday, D.N. Matsukevich, P.~Maunz,
  S.~Olmschenk, D.~Hayes, and L.~Luo.
\newblock Random numbers certified by {B}ell's theorem.
\newblock {\em Nature}, 464:1021--1024, 2010.
\newblock arXiv:0911.3427v3.

\bibitem{PM11}
S.~Pironio and S.~Massar.
\newblock Device-independent randomness expansion secure against quantum
  adversaries.
\newblock 2011.
\newblock arXiv:1111.6056v2.

\bibitem{RTS00}
J.~Radhakrishnan and A.~Ta-Shma.
\newblock Bounds for dispersers, extractors, and depth-two superconcentrators.
\newblock {\em SIAM Journal on Discrete Mathematics}, 13:2, 2000.

\bibitem{Reingold00}
O.~Reingold, S.~Vadhan, and A.~Wigderson.
\newblock Entropy waves, the zig-zag graph product, and new constant-degree
  expanders and extractors.
\newblock In {\em Proceedings of 41st IEEE Symposium on Foundations of Computer
  Science}, pages 3--13, 2000.

\bibitem{Joe09}
J.~M. Renes and J.-C. Boileau.
\newblock Conjectured strong complementary information tradeoff.
\newblock {\em Physical Review Letters}, 103:020402, 2009.
\newblock arXiv:0806.3984v2.

\bibitem{renato:diss}
R.~Renner.
\newblock Security of quantum key distribution.
\newblock {\em International Journal of Quantum Information}, 6:1, 2008.
\newblock arXiv:quant-ph/0512258v2.

\bibitem{RK05}
R.~Renner and R.~K{\"o}nig.
\newblock Universally composable privacy amplification against quantum
  adversaries.
\newblock {\em Theory of Cryptography}, pages 407--425, 2005.
\newblock arXiv:quant-ph/0403133v2.

\bibitem{Sanchez93}
J.~Sanchez.
\newblock Entropic uncertainty and certainty relations for complementary
  observables.
\newblock {\em Physics Letters A}, 173:233, 1993.

\bibitem{Sanchez95}
J.~Sanchez-Ruiz.
\newblock Improved bounds in the entropic uncertainty and certainty relations
  for complementary observables.
\newblock {\em Physics Letters A}, 201:125, 1995.

\bibitem{noisy:robust}
C.~Schaffner, B.~Terhal, and S.~Wehner.
\newblock Robust cryptography in the noisy-quantum-storage model.
\newblock {\em Quantum Information \& Computation}, 9:11, 2008.
\newblock arXiv:0807.1333v3.

\bibitem{Sha02}
R.~Shaltiel.
\newblock Recent developments in explicit constructions of extractors.
\newblock {\em Bulletin of the EATCS}, 77:67--95, 2002.

\bibitem{Szehr11}
O.~Szehr.
\newblock Decoupling theorems.
\newblock Master's thesis, ETH Zurich, 2011.

\bibitem{szehr:designs}
O.~Szehr, F.~Dupuis, M.~Tomamichel, and R.~Renner.
\newblock Decoupling with unitary almost two-designs.
\newblock arXiv:1109.4348v1, 2011.

\bibitem{TS09}
A.~Ta-Shma.
\newblock Short seed extractors against quantum storage.
\newblock In {\em Proceedings of 41st ACM STOC}, pages 401--408. ACM, 2009.

\bibitem{Tomamichel08}
M.~Tomamichel, R.~Colbeck, and R.~Renner.
\newblock A fully quantum asymptotic equipartition property.
\newblock {\em IEEE Transactions on Information Theory}, 55:5840--5847, 2009.
\newblock arXiv:0811.1221v3.

\bibitem{Tomamichel09}
M.~Tomamichel, R.~Colbeck, and R.~Renner.
\newblock Duality between smooth min- and max-entropies.
\newblock {\em IEEE Transactions on Information Theory}, 56:4674, 2010.
\newblock arXiv:0907.5238v2.

\bibitem{Tomamichel11}
M.~Tomamichel and R.~Renner.
\newblock The uncertainty relation for smooth entropies.
\newblock {\em Physical Review Letters}, 106:110506, 2011.
\newblock arXiv:1009.2015v2.

\bibitem{TSSR10}
M.~Tomamichel, C.~Schaffner, A.~Smith, and R.~Renner.
\newblock Leftover hashing against quantum side information.
\newblock {\em Proceedings of IEEE Symposium on Information Theory}, pages
  2703--2707, 2010.
\newblock arXiv:1002.2436v1.

\bibitem{uhlmann}
A.~Uhlmann.
\newblock The transition probability in the state space of a *-algebra.
\newblock {\em Report on Mathematical Physics}, 9:273, 1976.

\bibitem{vadhan:survey}
S.~Vadhan.
\newblock Pseudorandomness.
\newblock http://people.seas.harvard.edu/$\sim$salil/pseudorandomness/.

\bibitem{VV11}
U.V. Vazirani and T.~Vidick.
\newblock {Certifiable Quantum Dice -Or, testable exponential randomness
  expansion}.
\newblock 2011.
\newblock arXiv:1111.6054v1.

\bibitem{Curty10}
S.~Wehner, M.~Curty, C.~Schaffner, and H.-K. Lo.
\newblock Implementation of two-party protocols in the noisy-storage model.
\newblock {\em Physical Review A}, 81:052336, 2010.
\newblock arXiv:0911.2302v2.

\bibitem{Noisy1}
S.~Wehner, C.~Schaffner, and B.~Terhal.
\newblock Cryptography from noisy storage.
\newblock {\em Physical Review Letters}, 100:220502, 2008.
\newblock arXiv:0711.2895v3.

\bibitem{ww:cliffordUR}
S.~Wehner and A.~Winter.
\newblock Higher entropic uncertainty relations for anti-commuting observables.
\newblock {\em Journal of Mathematical Physics}, 49:062105, 2008.
\newblock arXiv:0710.1185v2.

\bibitem{ww:URsurvey}
S.~Wehner and A.~Winter.
\newblock Entropic uncertainty relations - a survey.
\newblock {\em New Journal of Physics}, 12:025009, 2010.
\newblock arXiv:0907.3704v1.

\bibitem{Winter10}
A.~Winter.
\newblock Quantum information: Coping with uncertainty.
\newblock {\em Nature Physics}, 6:640, 2010.

\bibitem{WF89}
W.K. Wootters and B.D. Fields.
\newblock Optimal state-determination by mutually unbiased measurements.
\newblock {\em Annals of Physics}, 191:363--381, 1989.

\end{thebibliography}

\end{document}